\newcommand{\reais}{\mathbb{R}}
\newcommand{\reaisnneg}{\mathbb{R}_+}
\newcommand{\recebe}{\leftarrow}
\newcommand{\Oh}{\mathrm{O}}
\newcommand{\Esp}{\mathbb{E}}
\newcommand{\Ical}{\mathcal{I}}
\newcommand{\Vcal}{\mathcal{V}}
\newcommand{\opt}{\ensuremath{\mathrm{opt}}}
\newtheorem{theorem}{Theorem}
\newtheorem{lemma}[theorem]{Lemma}
\newtheorem{fact}[theorem]{Fact}
\newdefinition{definition}[theorem]{Definition}
\newdefinition{hypothesis}[theorem]{Hypothesis}
\newdefinition{problem}[theorem]{Problem}
\newproof{proof}{Proof}
\journal{Theoretical Computer Science}
\begin{document}

\begin{frontmatter}

\title{Query-Competitive Sorting with Uncertainty}
\tnotetext[t1]{Partially supported by Icelandic Research Fund grant 174484-051.
A preliminary version of this paper appeared in volume~138 of LIPICs, article 7, 2019. DOI: 10.4230/LIPIcs.MFCS.2019.7}

\author[ru]{Magnús M. Halldórsson}
\ead{mmh@ru.is}

\author[le]{Murilo Santos de Lima\corref{cor}}
\ead{mslima@ic.unicamp.br}

\cortext[cor]{Corresponding author}

\address[ru]{Department of Computer Science, Reykjavik University, Menntavegi 1, 102 Reykjavik, Iceland}

\address[le]{School of Informatics, University of Leicester, University Road, Leicester, LE1 7RH, United Kingdom}

\begin{abstract}
We study the problem of sorting under incomplete information, when queries are used to resolve uncertainties.
Each of $n$ data items has an unknown value, which is known to lie in a given interval.
We can pay a query cost to learn the actual value, and we may allow an error threshold in the sorting.
The goal is to find a nearly-sorted permutation by performing a minimum-cost set of queries.

We show that an offline optimum query set can be found in polynomial time, and that both oblivious and adaptive problems have simple query-competitive algorithms.
The query-competitiveness for the oblivious problem is $n$ for uniform query costs, and unbounded for arbitrary costs; for the adaptive problem, the ratio is $2$.

We then present a unified adaptive strategy for uniform query costs that yields the following improved results:
(i) a $3/2$-query-competitive randomized algorithm;
(ii) a $5/3$-query-competitive deterministic algorithm if the dependency graph has no $2$-components after some preprocessing, which has query-competitive ratio $3/2 + \mathrm{O}(1/k)$ if the components obtained have size at least $k$;
and (iii) an exact algorithm if the intervals constitute a laminar family.
The first two results have matching lower bounds, and we have a lower bound of $7/5$ for large components.

We also give a randomized adaptive algorithm with query-competitive factor $1 + \frac{4}{3\sqrt{3}} \approx 1.7698$ for arbitrary query costs, and we show that the $2$-query competitive deterministic adaptive algorithm can be generalized for queries returning intervals and for a more general graph problem (which is also a generalization of the vertex cover problem), by using the local ratio technique.
Furthermore, we prove that the advice complexity of the adaptive problem is $\lfloor n/2 \rfloor$ if no error threshold is allowed, and $\lceil n/3 \cdot \lg 3 \rceil$ for the general case.

Finally, we present some graph-theoretical results regarding co-threshold tolerance graphs, and we discuss uncertainty variants of some classical interval problems.
\end{abstract}

\begin{keyword}
online algorithms \sep sorting \sep randomized algorithms \sep advice complexity \sep threshold tolerance graphs
\end{keyword}

\end{frontmatter}

\section{Introduction}

Sorting is one of the most fundamental problems in computer science and an essential part of any system dealing with large amounts of data.
High-performance algorithms such as QuickSort \cite{hoare62quicksort} have been known for decades, but the demand for fast sorting of huge amounts of data is such that improvements in sorting algorithms are still an active area of research; see, e.g., \cite{salah15sorting}.

In a distributed application with dynamic data, it may not be feasible to maintain a precise copy of the information in each replica.
In particular, accessing a local cached information may be much cheaper, even though not as precise, than querying a master database or to run a distributed consensus algorithm.
One approach is to maintain in the replicas, for each data item, an interval that bounds the actual value.
These intervals can be updated much faster than to guarantee a strict consistency of the data.
When higher precision is required, the system can query the master database for a more fine-grained interval or for the actual data value.
Therefore a trade-off between data precision and system performance can be established.
The TRAPP system, proposed by Olston and Widom \cite{olston2000queries}, relies on this concept.

This idea has led to theoretical investigation on \textbf{uncertainty problems with  queries}~\cite{bruce05uncertainty,erlebach16cheapestset,erlebach08steiner_uncertainty,feder07pathsqueires,feder03medianqueries,goerigk15knapsackqueries,megow17mst}.
Such problems also appear in optimization scenarios in which an extra effort can be incurred in order to obtain more precise values of the input data, such as by investing in market research, which is expensive so its cost should be minimized.
These works build upon more established frameworks of optimization with uncertainty, such as online \cite{borodin98online_alg}, robust~\cite{beyer07robsurvey} and stochastic~\cite{birge11stochastic} optimization.
In particular, the analysis of algorithms in terms of competitiveness against an adversary is inherited from the online optimization literature.

In this paper, we investigate the problem of sorting data items whose actual values are unknown, but for which we are given intervals on which the actual values lie.
We can {\bf query} an interval and then learn the actual value of the corresponding data item, but this incurs some cost.
The goal, then, is to sort the items by performing a set of queries of minimum cost.
Furthermore, the precision in the sorting may be relaxed, so that inversions may occur if the actual values are not too far apart.

\vspace{\baselineskip}

We distinguish between two types of algorithms for uncertainty problems with queries.
An {\bf adaptive} algorithm may decide which queries to perform based on results from previous queries.
An {\bf oblivious} algorithm, however, must choose the whole set of queries to perform in advance; i.e., it must choose a set of queries that certainly allow the problem to be solved without any knowledge of the actual values.
In this paper, both algorithms are compared with an offline optimum query set, i.e., a minimum-cost set of queries that proves the obtained solution to be correct.\footnote{This nomenclature differs to that used by Feder {\em et al.} \cite{feder03medianqueries}. They call an adaptive algorithm an {\bf online} algorithm, and an oblivious algorithm an {\bf offline} algorithm. We disagree with this nomenclature, since both types of algorithms are online in the standard sense of not knowing the data. Also, they compare an oblivious algorithm to an optimal oblivious strategy, and not to an offline optimum query set.}
An algorithm (either adaptive or oblivious) is {\bf $\alpha$-query-competitive} if it performs a total query cost of at most $\alpha$ times the cost of an offline optimum query set.

Another related problem is that of finding an {\bf optimum query set}.
Here we are given the actual data values, and want to identify a minimum-cost set of queries that would be sufficient to prove that the solution is correct.
Solving this problem is useful, for example, to perform experimental evaluation of online algorithms, since we are actually finding the offline optimum solution for the uncertainty problem.
This is also called the {\bf verification version} of the corresponding uncertainty problem with queries \cite{charalambous13uncertainty,erlebach16cheapestset}.

We are also interested in the {\bf advice complexity} of the problems we study.
In this setting, an online algorithm has access to an oracle that can give helpful information when making decisions.
The advice complexity is the number of bits of advice that are sufficient and necessary for an online algorithm to solve the problem exactly.
This is a research topic that has gained substantial attention; see~\cite{boyar17advice} for a survey.

\paragraph{\bf Our contribution}
We begin by showing how to compute an optimum query set in polynomial time, and that both oblivious and adaptive problems have simple algorithms with matching deterministic lower bounds.
The query-competitive ratio of the oblivious problem is $n$ if we have uniform query costs, and unbounded for arbitrary costs; for the adaptive problem, the query-competitive ratio is $2$.
The optimal oblivious algorithm is trivial; for the adaptive case, we have a simpler algorithm for uniform query costs, and a more sophisticated one for arbitrary query costs.
If query costs are uniform and the error threshold is zero, then the simpler algorithm can be implemented as an oracle for any comparison-based sorting algorithm, preserving time complexity and stability.

At this point it seems like the query-competitiveness of the problem is settled.
However, we present a unified adaptive strategy that attains different improvements for uniform query costs.
First, we obtain a 3/2-query-competitive algorithm by using randomization.
Second, if the error threshold is zero, and after some preprocessing the dependency graph has no $2$-components, the strategy yields a deterministic $5/3$-query-competitive algorithm; if the obtained graph has components of size at least $k$, then the same algorithm has query-competitive ratio $3/2 + \Oh(1/k)$.
The first two results have a matching lower bound, and for large components we have a lower bound of $7/5$.
The problem can also be solved exactly if the intervals constitute a laminar family.

We then present a randomized adaptive algorithm for arbitrary query costs.
We start with an algorithm with query-competitive factor $57/32 = 1.78125$, and then we improve it to $1 + \frac{4}{3\sqrt{3}} \approx 1.7698$ simply by changing the probabilities in the randomized step.
We also show that the $2$-query competitive deterministic algorithm for arbitrary query costs can be generalized for queries returning intervals, a model which was proposed in~\cite{gupta16queryselection}, and that this is the best possible factor for this case, even for a randomized algorithm.
The algorithm is adapted from the local ratio approximation algorithm for the vertex cover problem~\cite{baryehuda81vc}, and it can also be used to solve a more general graph problem which is also a generalization of the vertex cover problem.
Furthermore, we show that the advice complexity for adaptive algorithms is exactly $\lfloor n/2 \rfloor$ bits if there is no error threshold, and exactly $\lceil n/3 \cdot \lg 3 \rceil$ bits for the general case.

Finally, we present some results regarding the graph class defined by our sorting problem with uncertainty, which turns out to be the class of co-threshold tolerance (co-TT) graphs \cite{monma88ttolerance}.
We also discuss uncertainty variants of two classical interval problems, which inspired us to approach the sorting problem with uncertainty: the maximum independent set problem and the stabbing number problem.
Both problems have query-competitive factor at least $n-1$, where $n$ is the number of intervals, even if query costs are uniform and lower bounds are trivial.

\paragraph{\bf Related work}
The first work to investigate the minimum number of queries to solve a problem is by Kahan~\cite{kahan91queries}, who showed optimal adaptive strategies to find the minimum/maximum and median of $n$ values in uncertainty intervals, and for the sorting and closest pair problems.

Olston and Widom \cite{olston2000queries} proposed the TRAPP system, a distributed database based on uncertainty intervals.
The authors: (1) gave an optimal oblivious strategy for finding the minimum (and equivalently, the maximum) of a sequence of values within an error bound;
(2) showed that it is NP-hard to find an optimum oblivious query set to compute the sum of a sequence of values within an error bound, with a reduction from the knapsack problem.
The paper also discusses strategies for counting and finding the average of a sequence of values.
Khanna and Tan \cite{khanna01queries} generalized these results for arbitrary query costs and different levels of precision.

Feder {\em et al.} \cite{feder03medianqueries} considered the uncertainty version of the problem of finding the $k$-th largest value on a sequence (i.e., the generalized median problem).
The authors presented optimal oblivious and adaptive strategies for the problem, both running in polynomial time.
Both strategies are optimal, and the ratio between the oblivious and the adaptive strategy (also called the {\bf price of obliviousness}) is $\frac{2k-1}{k} < 2$ for uniform query costs, and $k$ for arbitrary query costs.\footnote{The works cited up to this point do not evaluate the algorithms using the competitiveness framework.}

Bruce {\em et al.} \cite{bruce05uncertainty} studied geometric problems where the points are given in uncertainty areas.
The authors gave 3-query-competitive algorithms for finding the maximal points and the convex hull in a two-dimensional space.
They also proposed the concept of {\bf witness sets}, which has been used subsequently in various works on uncertainty problems with queries.
Charalambous and Hoffman~\cite{charalambous13uncertainty} showed that it is NP-hard to find an optimum query set for the maximal points problem.

Feder {\em et al.} \cite{feder07pathsqueires} studied the uncertainty variant of the shortest path problem.
They showed that optimally solving the oblivious version of the problem is neither NP nor co-NP, unless NP = co-NP.
Their paper also discusses the complexity of the problem for various particular cases.

Erlebach {\em et al.} \cite{erlebach08steiner_uncertainty} proved that the minimum spanning tree problem with uncertainty admits an adaptive 2-query-competitive algorithm, which is the best possible for a deterministic algorithm.
Erlebach and Hoffman~\cite{erlebach14mstverification} showed that an optimum query set for the minimum spanning tree problem can be computed in polynomial time.
Erlebach, Hoffmann and Kammer~\cite{erlebach16cheapestset} studied a generalization called the {\bf cheapest set problem}, for which there is an adaptive algorithm with at most $d \cdot \opt + d$ queries, where $d$ is the maximum cardinality of a set.
They also generalized the result in~\cite{erlebach08steiner_uncertainty} to obtain an adaptive 2-query-competitive algorithm for the problem of finding a minimum-weight base on a matroid.

Gupta, Sabharwal and Sen \cite{gupta16queryselection} studied various of the previous problems in the setting where a query may return a refined interval, instead of the exact value of the data item.

Megow, Mei{\ss}ner and Skutella \cite{megow17mst} improved the result for the minimum spanning tree problem with a randomized adaptive algorithm, obtaining query-competitive ratio about~1.7.
(The problem has lower bound 1.5 for randomized algorithms.)
They also considered non-uniform query costs and proved that their results can be extended to find a minimum-weight base on a matroid.
Furthermore, they showed that the actual value of the minimum spanning tree can be computed in polynomial time.
Some experimental evaluation of those algorithms were presented in \cite{focke20mstexp}.

Ryzhov and Powell \cite{ryzhov12lpqueries} investigated how to solve a linear program while minimizing the query cost when the coefficients of the objective function are uncertain.
They presented a policy which is asymptotically optimal.
Maehara and Yamaguchi \cite{yamaguchi20ipqueries} studied the variant with packing constraints and coefficients following a probability distribution, and showed how to apply this to stochastic problems such as matching, matroid and stable set problems.

Note that all the work cited so far deals with problems whose classical (offline) versions can be solved in polynomial time.
Uncertainty versions with queries have been proposed for the knapsack problem \cite{goerigk15knapsackqueries} and the scheduling problem~\cite{arantes18schedulingqueries, durr2020scheduling}.
Since those problems are NP-hard, we might include the query cost into the solution cost and look for a competitive algorithm if we are looking for a polynomial-time algorithm.
Another option is to limit the maximum number of queries performed, and then to try to optimize the solution cost.

For a survey on the topic, see \cite{erlebach15querysurvey}.
Other references for related problems are also cited in~\cite{erlebach16cheapestset}.

Another sorting problem with uncertainty was studied by Ajtain {\em et al.}~\cite{ajtai16sortingnoise}.
In that problem, the values to be sorted are unknown, but their relative order can be tested by a comparison procedure.
However, comparing values that are too close returns imprecise answers, so in principle we should compare all ${n}\choose{2}$ pairs to obtain a sorting with some error guarantee.
The authors show how to solve the problem using only $\Oh(n^{3/2})$ comparisons.

\paragraph{\bf Organization of the paper}
In Section~\ref{sec:sorting}, we present the sorting problem with uncertainty and some basic facts, and in Section~\ref{sec:optoff} we give algorithms to find an offline optimum query set and for the oblivious setting.
We treat deterministic adaptive algorithms in Section~\ref{sec:adaptive}.
In Section~\ref{sec:rand}, we show how to improve the adaptive result for uniform query costs by using a randomized algorithm, or by assuming some structure in the dependency graph.
We present a randomized adaptive algorithm for arbitrary query costs in Section~\ref{sec:randcosts}, we discuss the variant of the problem in which queries may return intervals in Section~\ref{sec:cpcp}, and in Section~\ref{sec:advice} we investigate the advice complexity for adaptive algorithms.
Finally, in Section~\ref{sec:charac} we present some graph-theoretical results, and we discuss uncertainty variants of two classical interval problems in Section~\ref{sec:interval}.

\section{Sorting with Uncertainty}
\label{sec:sorting}

In the sorting problem with uncertainty, there are $n$ numbers $v_1, \ldots, v_n \in \reais$ whose exact value is unknown.
We are given $n$ uncertainty intervals $I_1, \ldots, I_n$ with $v_i \in I_i = [\ell_i, r_i]$, a cost $w_i \in \reaisnneg$ for querying interval $I_i$, and an error threshold $\delta \geq 0$.
After querying $I_i$, we obtain the exact value of $v_i$; we can also say that we replace $I_i$ with interval $I'_i = [v_i, v_i]$.
The goal is to obtain a permutation $\pi : [n] \rightarrow [n]$ such that $v_i \leq v_j + \delta$ if $\pi(i) < \pi(j)$ by performing a minimum-cost set of queries.

We begin by defining the following dependency relation between intervals, which is essential to solve the problem.

\begin{definition}
\label{def:dep}
Two intervals~$I_i$ and $I_j$ such that  $r_i - \ell_j > \delta$ and $r_j - \ell_i > \delta$ are {\bf dependent}.
Two intervals that are not dependent are {\bf independent}.
\end{definition}

\pagebreak

\begin{lemma}
\label{lemma:decideind}
The relative order between two intervals can be decided without querying either of them if and only if they are independent.
\end{lemma}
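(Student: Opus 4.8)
The plan is to first pin down what it means to \emph{decide the relative order without querying}: it means committing to one of the two orderings --- $I_i$ before $I_j$, or $I_j$ before $I_i$ --- in a way that is guaranteed to be consistent with the sought permutation no matter what the true values $v_i \in I_i$ and $v_j \in I_j$ turn out to be. Since a valid permutation requires $v_a \le v_b + \delta$ whenever $a$ precedes $b$, placing $I_i$ before $I_j$ is safe precisely when $v_i \le v_j + \delta$ holds for \emph{all} $v_i \in I_i$, $v_j \in I_j$, i.e.\ precisely when $r_i - \ell_j \le \delta$; symmetrically, placing $I_j$ before $I_i$ is safe precisely when $r_j - \ell_i \le \delta$. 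With this reformulation the lemma becomes a direct translation of Definition~\ref{def:dep}, and I would prove the two implications separately, using the symmetry between $i$ and $j$ to halve the work.

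For the ``if'' direction, assume $I_i$ and $I_j$ are independent. By Definition~\ref{def:dep} this means $r_i - \ell_j \le \delta$ or $r_j - \ell_i \le \delta$; without loss of generality (swapping the roles of $i$ and $j$ swaps the two inequalities) assume the former. Then for every $v_i \in I_i$ and $v_j \in I_j$ we have $v_i \le r_i \le \ell_j + \delta \le v_j + \delta$, so we may always put $I_i$ before $I_j$, and the relative order is decided without any query.

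For the ``only if'' direction I would argue the contrapositive: if $I_i$ and $I_j$ are dependent, then $r_i - \ell_j > \delta$ and $r_j - \ell_i > \delta$, and I exhibit two realizations that rule out each ordering. Taking $v_i = r_i$ and $v_j = \ell_j$ gives $v_i - v_j = r_i - \ell_j > \delta$, so ``$I_i$ before $I_j$'' is not valid for this realization; taking $v_i = \ell_i$ and $v_j = r_j$ gives $v_j - v_i = r_j - \ell_i > \delta$, so ``$I_j$ before $I_i$'' is not valid for that realization. Hence no fixed ordering is safe for all realizations, and at least one of the two intervals must be queried to determine their order.

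I do not expect a genuine obstacle: once the semantics of ``decide'' is fixed, the statement is essentially a restatement of Definition~\ref{def:dep}. The only points requiring care are (a) being explicit that ``decide without querying'' means ``some ordering is valid for every realization'' rather than ``the ordering is uniquely determined'' --- the latter can fail even \emph{after} querying, when $|v_i - v_j| \le \delta$ --- and (b) handling the disjunction in the definition of independence by a clean appeal to the $i \leftrightarrow j$ symmetry rather than writing out both cases.
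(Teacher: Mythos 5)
Your proof is correct and follows essentially the same route as the paper's: the ``if'' direction uses $v_i \le r_i \le \ell_j + \delta \le v_j + \delta$ to fix the order (appealing to symmetry for the other case), and the ``only if'' direction exhibits the extreme realizations $v_i = r_i$, $v_j = \ell_j$ (and the symmetric one) to rule out both orderings. Your added remark clarifying the semantics of ``decide without querying'' is a sensible precaution but does not change the argument.
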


\begin{proof}
Let $I_i$ and $I_j$ be such that $r_i - \ell_j \leq \delta$.
Since $v_i \leq r_i$ and $v_j \geq \ell_j$, we have that $v_i \leq v_j + \delta$ and we can set $\pi(i) < \pi(j)$ without querying either of $I_i$ and $I_j$.

Conversely, let $I_j$ and $I_j$ be two dependent intervals.
We cannot set $\pi(i) < \pi(j)$, because it may be the case that $v_i = r_i$ and $v_j = \ell_j$, thus $r_i - \ell_j > \delta$ implies that $v_i > v_j + \delta$.
By a symmetric argument, we cannot set $\pi(j) < \pi(i)$, so we cannot decide the relative order between the intervals without making a query.
\qed
\end{proof}

So, essentially, to solve the sorting problem with uncertainty consists in querying intervals so that the graph defined by this dependency relation has no edges.
Querying both endpoints is sufficient to remove an edge, but sometimes it is enough to query one of them.
A solution is, therefore, a vertex cover in the dependency graph, but an optimum offline solution may not be a minimum vertex cover.

The graphs defined by this dependency relation are exactly the {\bf co-threshold tolerance} (co-TT) graphs~\cite{monma88ttolerance}.
$G = (V, E)$ is a {\bf threshold tolerance} graph if there are functions $w : V \rightarrow \reais$ and $t : V \rightarrow \reais$ such that $uv \in E$ if and only if $w(u) + w(v) \geq \min(t(u), t(v))$.
A co-TT graph is the complement of any threshold tolerance graph, or equivalently, $G = (V, E)$ is a co-TT graph if and only if there are functions $a : V \rightarrow \reais$ and $b : V \rightarrow \reais$ such that $uv \in E$ if and only if $a(u) < b(v)$ and $a(v) < b(u)$ \cite{monma88ttolerance}.

\begin{theorem}
\label{teo:cott}
The graphs defined by the dependency relation in Definition~\ref{def:dep} are exactly the co-TT graphs.
\end{theorem}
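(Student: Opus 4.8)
The plan is to prove the two inclusions separately by exhibiting explicit translations between the two representations, exploiting the fact that both the dependency relation of Definition~\ref{def:dep} and Monma et al.'s co-TT characterization are ``two-sided'' threshold conditions of the same shape.

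For the easy direction, suppose $G$ is the dependency graph of intervals $I_i = [\ell_i, r_i]$ with threshold $\delta$. I would define $a(i) := \ell_i + \delta$ and $b(i) := r_i$ for each vertex $i$. Then $r_i - \ell_j > \delta$ is equivalent to $a(j) < b(i)$, and symmetrically $r_j - \ell_i > \delta$ is equivalent to $a(i) < b(j)$; hence $I_i$ and $I_j$ are dependent exactly when $a(i) < b(j)$ and $a(j) < b(i)$, which is precisely the stated characterization of co-TT graphs (the diagonal case $i=j$ is irrelevant, since neither a graph edge set nor the co-TT relation involves loops). So every dependency graph is a co-TT graph.

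For the converse, let $G = (V, E)$ be a co-TT graph with functions $a, b : V \to \reais$ as in the characterization. I need to realize $G$ as a dependency graph, which requires producing genuine uncertainty intervals (so $\ell_i \le r_i$) together with a nonnegative threshold $\delta$. I would set $\delta := \max\{0,\ \max_{i \in V}(a(i) - b(i))\}$, and for each vertex $i$ let $\ell_i := a(i) - \delta$ and $r_i := b(i)$. By the choice of $\delta$ we immediately get $\delta \ge 0$ and $\ell_i = a(i) - \delta \le b(i) = r_i$, so $I_i = [\ell_i, r_i]$ is a valid interval. Then $r_i - \ell_j > \delta \iff b(i) - a(j) + \delta > \delta \iff a(j) < b(i)$, and symmetrically $r_j - \ell_i > \delta \iff a(i) < b(j)$, so $I_i$ and $I_j$ are dependent with respect to $\delta$ precisely when $ij \in E$; thus $G$ is the dependency graph of this instance, completing the proof.

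The two constructions are inverses of each other up to the additive shift by $\delta$, and the only point that requires any care is the converse: the co-TT definition imposes no relation between $a(i)$ and $b(i)$ at the same vertex and allows no negativity constraint, whereas an uncertainty interval must satisfy $\ell_i \le r_i$ and the error threshold must be nonnegative. Choosing $\delta$ to be the maximum of $0$ and of all the differences $a(i) - b(i)$ simultaneously repairs both issues while leaving every pairwise comparison unchanged, since $\delta$ cancels on both sides of each inequality. I expect this bookkeeping to be the main — indeed essentially the only — obstacle.
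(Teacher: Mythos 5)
Your proof is correct and follows essentially the same route as the paper: an explicit translation in each direction, with your maps ($a = \ell + \delta$, $b = r$ one way; $\ell = a - \delta$, $r = b$ with $\delta = \max\{0, \max_i(a(i)-b(i))\}$ the other) differing from the paper's ($a = \ell$, $b = r - \delta$; $\ell = a$, $r = b + \delta$) only by an additive shift of $\delta$, which changes nothing in the pairwise comparisons. The care you take with $\delta \geq 0$ and $\ell_i \leq r_i$ matches the paper's verification.
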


\begin{proof}
Given an instance $I_1, \ldots, I_n$ and an error threshold $\delta$, we simply define $a(I_i) := \ell_i$ and $b(I_i) := r_i - \delta$.

Conversely, given a co-TT graph $G = (V, E)$ with functions $a$ and $b$, we set
$\delta := \max_{v \in V} \{ a(v) - b(v), 0 \}$,
and for every $v \in V$ we set $\ell_v := a(v)$ and $r_v := b(v) + \delta$.
Both graphs have the same adjacency relation, $\delta \geq 0$, and $r_v \geq \ell_v$, so $[\ell_v, r_v]$ is an interval.
\qed
\end{proof}

The following result will be useful.

\begin{lemma}[\cite{monma88ttolerance}]
Every co-TT graph is chordal.
\end{lemma}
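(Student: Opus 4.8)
The plan is to prove the stronger statement that every co-TT graph has a \emph{perfect elimination ordering}, which is equivalent to chordality (Fulkerson--Gross). Concretely, I will work with the representation of a co-TT graph $G=(V,E)$ by functions $a,b:V\to\reais$ with $uv\in E$ iff $a(u)<b(v)$ and $a(v)<b(u)$ (this is the characterization recalled above; by Theorem~\ref{teo:cott} it is interchangeable with the interval/dependency representation). The first observation is that co-TT graphs are closed under taking induced subgraphs: restricting $a$ and $b$ to any $S\subseteq V$ represents $G[S]$. Hence it suffices to exhibit one simplicial vertex in an arbitrary co-TT graph and then peel it off repeatedly.

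The key claim is that a vertex $v^\star$ with $b(v^\star)$ minimum is simplicial. Let $u,w$ be two neighbors of $v^\star$. From $uv^\star\in E$ and $wv^\star\in E$ we get $a(u)<b(v^\star)$ and $a(w)<b(v^\star)$. Since $b(v^\star)\le b(u)$ and $b(v^\star)\le b(w)$ by minimality, it follows that $a(u)<b(w)$ and $a(w)<b(u)$, i.e.\ $uw\in E$. So $N(v^\star)$ is a clique. Removing a minimum-$b$ vertex (ties broken arbitrarily), recursing on the remaining co-TT graph, and recording the order of removal produces a perfect elimination ordering, so $G$ is chordal.

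If one prefers to avoid invoking the elimination-ordering characterization, the same claim yields a direct argument: in a putative chordless cycle of length at least four, pick the cycle vertex $v^\star$ minimizing $b$; its two cycle-neighbors are non-adjacent (otherwise there would be a chord), yet the claim forces them to be adjacent --- a contradiction. I do not expect a genuine obstacle here; the only point to watch is ties in the $b$-values, and these are harmless because the decisive inequalities $a(u)<b(v^\star)\le b(w)$ stay strict on the left-hand side regardless of how ties are broken.
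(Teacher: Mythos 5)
Your proof is correct. The paper itself does not prove this lemma --- it is quoted from the literature \cite{monma88ttolerance} --- so there is no in-paper argument to compare line by line; but your key claim (a vertex minimizing $b$ is simplicial) is exactly the idea the paper uses later, in interval language, in Lemma~\ref{lemma:simplicial}: under the translation $a(I_i)=\ell_i$, $b(I_i)=r_i-\delta$ of Theorem~\ref{teo:cott}, ``minimum $b$'' is ``minimum right endpoint,'' and the two-line inequality chase is the same. What you add beyond that lemma is the (trivial but necessary) closure of co-TT graphs under induced subgraphs, which lets you iterate the claim into a perfect elimination ordering, or alternatively apply it inside a putative chordless cycle to produce a chord; both routes are sound, and your remark about ties is right --- only $b(v^\star)\le b(u)$ and $b(v^\star)\le b(w)$ are needed, with strictness coming from $a(u)<b(v^\star)$ and $a(w)<b(v^\star)$. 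So the proposal is a complete, self-contained proof of the cited result, obtained by the same mechanism the paper exploits algorithmically.
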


When $\delta > 0$, it is useful to distinguish intervals of width smaller than $\delta$, which we call {\bf trivial} intervals.
It is easy to check that two trivial intervals cannot be dependent, so when a trivial and a non-trivial interval are dependent, it is enough to query the non-trivial interval in order to decide their relative order.
This does not mean, however, that trivial intervals should never be queried, and in particular adaptive algorithms may decide to do that.

It is also clear that the dependency graph is an interval graph when $\delta = 0$.
This is also true when $\delta > 0$ and there are no trivial intervals, in which case we can simply replace each interval $I_i = [\ell_i, r_i]$ with $I^{(\delta)}_i = [\ell^{(\delta)}_i, r^{(\delta)}_i] := [\ell_i + \delta/2, r_i - \delta/2] \neq \emptyset$, and it is easy to check that~$I_i$ and $I_j$ are dependent with error threshold $\delta$ if and only if $I^{(\delta)}_i$ and $I^{(\delta)}_j$ are dependent with error threshold $0$.
Note however that we cannot use this reduction to solve the sorting problem, since the precise values could fall outside of the given interval.

\section{Warm-Up: Offline and Oblivious Algorithms}
\label{sec:optoff}

The first result we present concerns finding the optimum query set for a given set of intervals, assuming we know the actual values in each interval.
I.e., given the intervals $I_1, \ldots, I_n$ and the actual values $v_1, \ldots, v_n$, find a minimum-cost set $Q$ of intervals to query, such that $Q$ is sufficient to prove an ordering for $I_1, \ldots, I_n$ without the knowledge of $v_1, \ldots, v_n$.
Solving this problem is useful, for example, to perform experimental evaluation of algorithms, since we are actually finding the offline optimum solution for the online (either oblivious or adaptive) problem.
The ideas we present here will also be useful when solving the online problem.

We show that the problem can be solved optimally in polynomial time.
The key observations behind the algorithm are the following.
In order to simplify notation, we write $I_i \supset I_j$ for intervals $I_i$ and $I_j$ if $\ell_i < \ell_j$ and $r_j < r_i$.

\begin{fact}
\label{fact:offnecessary}
Let $I_i$ and $I_j$ be intervals with actual values $v_i$ and $v_j$. 
If $I_j \supset [v_i - \delta, v_i + \delta]$, then~$I_j$ is queried by every optimum solution.
\end{fact}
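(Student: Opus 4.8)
The plan is to prove the slightly stronger statement that $I_j$ belongs to \emph{every} feasible query set --- i.e., to every set $Q$ of queries after which the relative order of all pairs can be decided without knowing the actual values --- which in particular covers every optimum solution. The proof is by contradiction: assume $Q$ is feasible and $I_j \notin Q$, and exhibit a pair whose relative order is still undecided after performing the queries in $Q$. The natural candidate is the pair $\{I_i, I_j\}$, and the goal is to show it remains dependent in the sense of Definition~\ref{def:dep} in the post-query instance, so that Lemma~\ref{lemma:decideind} forbids deciding its order and contradicts feasibility of $Q$.

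To carry this out I would split into two cases according to whether $I_i \in Q$. Unwinding the hypothesis $I_j \supset [v_i - \delta, v_i + \delta]$ gives $\ell_j < v_i - \delta$ and $v_i + \delta < r_j$, equivalently $v_i - \ell_j > \delta$ and $r_j - v_i > \delta$. If $I_i \notin Q$, both intervals are unchanged, and combining these with $\ell_i \le v_i \le r_i$ yields $r_i - \ell_j > \delta$ and $r_j - \ell_i > \delta$, so $I_i$ and $I_j$ are dependent. If $I_i \in Q$, then $I_i$ is replaced by the point interval $[v_i, v_i]$, and the same two inequalities directly show $[v_i, v_i]$ and $I_j$ are still dependent. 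In either case the dependency survives, and Lemma~\ref{lemma:decideind} gives the contradiction.

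There is no substantial obstacle here; the only subtlety worth flagging is the second case, where one might hope that learning $v_i$ exactly could decouple the pair --- the \emph{strict} containment $I_j \supset [v_i - \delta, v_i + \delta]$ is precisely what prevents this. I would also note that the argument actually shows $I_j$ is queried by every feasible solution, not merely every optimum one, since that is the form in which the fact is used when arguing about the structure of optimum query sets.
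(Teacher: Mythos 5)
Your proof is correct and is essentially the paper's argument: the paper also observes that even after querying $I_i$ the strict containment $I_j \supset [v_i-\delta, v_i+\delta]$ leaves $v_j$ free to lie in $[\ell_j, v_i-\delta)$ or $(v_i+\delta, r_j]$, so the order cannot be decided without querying $I_j$. Your routing through Lemma~\ref{lemma:decideind} (the pair remains dependent whether or not $I_i \in Q$) is just an explicit packaging of that same adversary reasoning, and your observation that the fact holds for every feasible, not just optimum, query set is consistent with how the paper uses it.
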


\begin{proof}
Even if we have queried $I_i$, we have to query $I_j$ because we may have $v_j \in [\ell_j, v_i - \delta)$ or $v_j \in (v_i + \delta, r_j]$.
\qed
\end{proof}

\begin{fact}
\label{fact:offdependent}
Let $I_i$ and $I_j$ be two dependent intervals, $v_i$ the actual value in~$I_i$ and~$v_j$ the actual value in $I_j$.
If $I_i \not\supset [v_j - \delta, v_j + \delta]$ and $I_j \not\supset [v_i - \delta, v_i + \delta]$, then it is enough to query either~$I_i$ or $I_j$ to decide their relative order.
\end{fact}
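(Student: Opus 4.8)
The plan is to show that after querying just one of the two intervals --- say $I_i$, so that we learn $v_i$ exactly while $I_j$ stays an interval --- the hypothesis $I_j \not\supset [v_i - \delta, v_i + \delta]$ already pins down a valid relative order for $I_i$ and $I_j$. First I would unfold the definition of $\supset$: $I_j \supset [v_i - \delta, v_i + \delta]$ means $\ell_j < v_i - \delta$ and $v_i + \delta < r_j$, so its negation says that $\ell_j \geq v_i - \delta$ or $r_j \leq v_i + \delta$. These two possibilities give the two cases of the proof.

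In the case $\ell_j \geq v_i - \delta$, every admissible value satisfies $v_j \geq \ell_j \geq v_i - \delta$, i.e. $v_i \leq v_j + \delta$ no matter the (still unknown) precise $v_j \in I_j$; hence we may safely set $\pi(i) < \pi(j)$, exactly in the style of the proof of Lemma~\ref{lemma:decideind}. In the case $r_j \leq v_i + \delta$, every admissible value satisfies $v_j \leq r_j \leq v_i + \delta$, so we may safely set $\pi(j) < \pi(i)$. Either way, the single query of $I_i$ certifies the order. The argument for querying $I_j$ instead is identical after swapping the roles of $i$ and $j$ and invoking the symmetric hypothesis $I_i \not\supset [v_j - \delta, v_j + \delta]$; this symmetry is what justifies the word ``either'' in the statement. (The dependence of $I_i$ and $I_j$ is not actually used here beyond framing the situation: it only rules out the degenerate case where no query at all is needed, by Lemma~\ref{lemma:decideind}.)

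I do not anticipate a real obstacle. The only point that needs care --- and that I would state explicitly --- is that in each case the deduced inequality holds for \emph{every} value in the un-queried interval, not merely for its actual value, so the chosen order is genuinely certified without a second query. Spelling out the negation of $I_j \supset [v_i - \delta, v_i + \delta]$ is the one bookkeeping step worth writing down; the rest is the same endpoint comparison already used in Lemma~\ref{lemma:decideind} and Fact~\ref{fact:offnecessary}.
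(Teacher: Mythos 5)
Your proof is correct and follows essentially the same route as the paper's: querying one interval and observing that the negation of the containment hypothesis forces $v_i$ to lie outside $(\ell_j+\delta,\, r_j-\delta)$, which certifies an order against every possible value in the un-queried interval. Your version merely spells out the case analysis that the paper leaves implicit.
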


\begin{proof}
If we query $I_i$, then $v_i \notin [\ell_j + \delta, r_j - \delta]$, so we can pick a reasonable order between $I_i$ and $I_j$.
The argument is symmetrical if we query $I_j$.
\qed
\end{proof}

The algorithm begins with a query set $Q$ containing all intervals that satisfy the condition in Fact~\ref{fact:offnecessary}.
Due to Fact~\ref{fact:offdependent}, it is enough to complement $Q$ with a minimum-cost vertex cover in the dependency graph defined by the remaining intervals, which can be found in polynomial time for chordal graphs~\cite{gavril72chordal}.

\begin{theorem}
The problem of finding an optimum query set for the sorting problem with uncertainty can be solved optimally in polynomial time.
\end{theorem}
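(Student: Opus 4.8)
The plan is to formalize and verify the algorithm already sketched immediately before the statement, showing it is correct and runs in polynomial time. The algorithm has two phases. First, compute the set $Q_0$ of all intervals $I_j$ for which there exists some $I_i$ with $I_j \supset [v_i - \delta, v_i + \delta]$; by Fact~\ref{fact:offnecessary} every such $I_j$ lies in every optimum solution, so $Q_0$ is contained in the offline optimum. Second, consider the intervals not in $Q_0$, build the dependency graph (in the sense of Definition~\ref{def:dep}) on these remaining intervals, and compute a minimum-cost vertex cover $C$ of that graph. Output $Q := Q_0 \cup C$.

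The correctness argument has two directions. For feasibility, I would argue that $Q$ suffices to prove an ordering: any pair $I_i, I_j$ that is independent needs no query by Lemma~\ref{lemma:decideind}; any dependent pair in which at least one endpoint lies in $Q_0$ is handled, since querying that interval pins its value and — because the ``necessary'' condition held — reveals enough to order the pair (this needs a short check using the definition of $Q_0$); and any dependent pair with neither interval in $Q_0$ satisfies the hypothesis of Fact~\ref{fact:offdependent}, so querying just one of them, which $C$ guarantees as a vertex cover, decides their order. For optimality, let $\opt$ be an offline optimum query set. By Fact~\ref{fact:offnecessary}, $Q_0 \subseteq \opt$. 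Restricting $\opt$ to the remaining intervals, I claim $\opt \setminus Q_0$ must be a vertex cover of the dependency graph on those intervals: if some edge $I_iI_j$ there were uncovered, then neither interval is queried by $\opt$, yet both are dependent and (not being in $Q_0$) neither contains the other's $\delta$-ball in the relevant way — I would check that this forces the relative order of $v_i$ and $v_j$ to remain genuinely ambiguous under $\opt$, contradicting feasibility of $\opt$. Hence $\cost(\opt) \ge \cost(Q_0) + \cost(\text{min vertex cover}) = \cost(Q)$, so $Q$ is optimal. Finally, polynomial running time follows because $Q_0$ is computed by checking $\Oh(n^2)$ containment conditions, the dependency graph is built in $\Oh(n^2)$ time, this graph is an induced subgraph of a co-TT graph and hence chordal (by the cited lemma, since chordality is hereditary), and minimum-weight vertex cover is solvable in polynomial time on chordal graphs~\cite{gavril72chordal}.

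The main obstacle I expect is the careful case analysis in the feasibility direction, specifically the dependent pairs where exactly one interval lies in $Q_0$: I need to confirm that knowing that interval's exact value always breaks the tie with the other (unqueried) interval, using precisely the defining inequality $I_j \supset [v_i - \delta, v_i + \delta]$ and the dependency inequalities, possibly splitting on which of $v_i$ or $r_j - \delta$, $\ell_j + \delta$ is larger. A secondary subtlety is handling trivial intervals (width $< \delta$) and making sure the reduction to vertex cover on the residual graph is stated on the correct vertex set; I would note that the dependency graph on the residual intervals is still a valid co-TT/chordal graph so the vertex-cover subroutine applies without modification.
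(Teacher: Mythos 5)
Your proposal is correct and follows essentially the same route as the paper: include every interval forced by Fact~\ref{fact:offnecessary}, then add a minimum-cost vertex cover of the dependency graph on the remaining intervals, which is chordal (co-TT), so the cover is computable in polynomial time; feasibility follows from Lemma~\ref{lemma:decideind} and Fact~\ref{fact:offdependent}, and optimality from the forced set being contained in every optimum and the rest of any optimum being a vertex cover of the residual graph. The extra case analysis you flag (a dependent pair with exactly one interval in the forced set) does go through, since the unforced interval cannot contain the $\delta$-ball around the other's value, which is precisely the one-sided version of Fact~\ref{fact:offdependent}.
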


Now we consider oblivious algorithms.
In this case, all non-trivial intervals with some dependence must be queried, and clearly this is the best possible strategy.
In the following theorem, we show that this implies a tight bound of $n$ on the query-competitive ratio for the case with uniform costs, and that in the general case the query-competitive ratio is unbounded.

\begin{theorem}
If query costs are uniform, any oblivious algorithm for sorting with uncertainty has query-competitive ratio exactly $n$.
For arbitrary costs, the query-competitive ratio is unbounded.
\end{theorem}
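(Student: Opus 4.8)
The plan is to treat the two halves of the statement separately, relying throughout on the observation made just above the theorem: a valid oblivious query set must contain every non-trivial interval that is dependent with some other interval, and querying exactly this set is also sufficient. For the uniform-cost upper bound, I would note that this natural oblivious algorithm queries at most all $n$ intervals, while the offline optimum is strictly positive on every instance that contains a dependent pair, since by Lemma~\ref{lemma:decideind} such a pair cannot be ordered without querying one of its members, which costs at least $1$ under uniform costs; and if no pair is dependent, both the algorithm and the optimum pay $0$. In either case the oblivious cost is at most $n$ times the optimum, so the ratio is at most $n$.

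For the matching uniform-cost lower bound I would use a star-shaped instance with $\delta = 0$ and unit costs: a long interval $I_1 = [0, 2n]$ together with $n - 1$ pairwise disjoint short intervals $I_j = [2j-3, 2j-2]$, $2 \le j \le n$, each strictly inside $I_1$. The dependency graph is the star centred at $I_1$, and since $\delta = 0$ every interval is non-trivial and has a dependent neighbour, so the only valid oblivious query set is $\{I_1, \dots, I_n\}$, of cost $n$. On the realization $v_1 = 2n - 1$ with arbitrary $v_j \in I_j$, each short interval lies entirely to the left of $v_1$, hence querying only $I_1$ certifies the order $I_2, \dots, I_n, I_1$ and the offline optimum has cost $1$. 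Thus every oblivious algorithm has ratio at least $n$ on this instance, and with the upper bound it is exactly $n$.

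For arbitrary costs the same idea with $n = 2$ and unequal costs gives the unbounded ratio: take $I_1 = [0,10]$ with cost $1$ and $I_2 = [1,2]$ with cost $M$, and $\delta = 0$. The two intervals are dependent and non-trivial, so the only valid oblivious query set is $\{I_1, I_2\}$, of cost $1 + M$, whereas on the realization $v_1 = 1/2$, $v_2 = 3/2$ querying $I_1$ alone certifies the order (since $v_1$ lies below all of $I_2$), so the offline optimum costs $1$. The ratio is therefore at least $1 + M$, which is unbounded as $M \to \infty$.

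The one step that needs care, and which both lower bounds share, is the assertion that every valid oblivious query set is forced to include each non-trivial dependent interval. This is where one applies the forward direction of Lemma~\ref{lemma:decideind} after an arbitrary partial set of queries: an unqueried dependent neighbour still cannot be ordered against the already-queried intervals for all realizations, because with $\delta = 0$ and positive-width intervals the queried value can be realized in the interior of that neighbour's range, leaving both relative orders consistent with the data. Once this is in place, the rest is routine verification of the two small instances.
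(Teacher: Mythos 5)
Your proof is correct and follows essentially the same route as the paper: the upper bound via the trivial "query everything that matters, OPT pays at least 1" argument, and the lower bound via one large interval containing $n-1$ pairwise independent intervals, where the adversary places the large interval's value outside all of them so that a single query suffices offline. Your unbounded-ratio construction for arbitrary costs is a minor variant (a two-interval instance where the forced containing/contained pair has an expensive member, rather than making the containing interval's cost arbitrarily small in the same $n$-interval instance), but it exploits the identical cost-asymmetry idea.
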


\begin{proof}
For the upper bound with uniform costs, a naïve algorithm that queries all intervals and then sorts the numbers suffices.

For both lower bounds, we have $n-1$ independent intervals with length greater than $2\delta$, plus an interval~$I_n$ which contains all the other ones.
Both an algorithm and the optimum solution must query~$I_n$ in order to decide where~$v_n$ fits in the order.
If the algorithm does not query some $I_i$ with $i < n$, then the adversary can set $v_n \in (\ell_i + \delta, r_i - \delta) \neq \emptyset$ and the algorithm cannot decide the order.
Thus, without the knowledge of $v_n$, the algorithm must query all~$I_i$ with $i < n$.
However, it may be the case that $v_n \notin I_i$ for all $i < n$, and querying~$I_n$ suffices to decide the order.
This gives a lower bound of $n$ on the query-competitive ratio for uniform query costs.
For the general case, $w_n$ can be arbitrarily small and the query-competitive ratio is unbounded.
\qed
\end{proof}

\section{Deterministic Adaptive Algorithms}
\label{sec:adaptive}

Now let us consider deterministic adaptive algorithms.
We begin with a lower bound.

\begin{lemma}
\label{lemma:loweradap}
Any deterministic adaptive algorithm for the sorting problem with uncertainty has query-compet-itive ratio at least~$2$, even if query costs are uniform and the dependency graph has large components.
\end{lemma}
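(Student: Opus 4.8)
The plan is to construct an adversarial family of instances with uniform query costs on which any deterministic adaptive algorithm is forced to make two queries while the offline optimum makes only one, and to make the gadget replicable so that the dependency graph can be made to have arbitrarily large connected components. The basic gadget is a pair of dependent intervals $I_1, I_2$ arranged so that each one contains the ``core'' $[\ell_j+\delta, r_j-\delta]$ of the other — i.e.\ a symmetric nesting situation. For instance, with $\delta=0$ one can take two intervals that overlap heavily, say $I_1=[0,3]$ and $I_2=[1,4]$ (the exact numbers are not important, only the containment of cores). In such a configuration neither interval is forced by Fact~\ref{fact:offnecessary}, and by Fact~\ref{fact:offdependent} a single well-chosen query would suffice for the optimum. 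The point is that ``well-chosen'' depends on the hidden values, which the algorithm does not know.

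First I would argue the single-gadget lower bound. Since the algorithm is deterministic, it must pick some interval to query first on this gadget, say $I_1$. The adversary, who has not yet committed the values, then reveals $v_1$ in the ``bad'' region: it chooses $v_1$ so that $v_1 \in (\ell_2 + \delta,\, r_2 - \delta)$, i.e.\ strictly inside the core of $I_2$. After this query the algorithm still cannot decide the relative order of $I_1$ and $I_2$, because $v_2$ could still be either below $v_1-\delta$ or above $v_1+\delta$ (both intervals are nonempty since $v_1$ sits strictly inside the core of $I_2$, which is wide because $I_2$ has length greater than $2\delta$). Hence the algorithm must also query $I_2$, for a total cost of $2$. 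On the other hand, the offline optimum knows the values in advance: it can arrange that $v_2 \notin [\ell_1+\delta, r_1-\delta]$ as well (this is consistent with $v_1$ lying in the core of $I_2$; one checks the two cores are not forced to coincide), so that a single query to $I_2$ already settles the order by Fact~\ref{fact:offdependent}. Thus $\mathrm{opt}=1$ and the algorithm pays $2$, giving ratio $2$ on this gadget. By symmetry the same argument works if the algorithm queries $I_2$ first.

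Next I would handle the ``large components'' strengthening. Instead of using disjoint copies of the two-vertex gadget, I would chain many copies together so the dependency graph is one big connected component (e.g.\ a long path, or a sequence of overlapping intervals $I_1, I_2, \dots, I_m$ where consecutive ones form the above gadget and non-consecutive ones are independent — this is easy to realize with interval graphs when $\delta = 0$). The adversary plays the same strategy locally on each gadget along the chain: whenever the algorithm first queries an interval belonging to a gadget, the adversary fixes that interval's value in the core of its partner, forcing a second query in that gadget. Since the gadgets are (essentially) independent as far as the decision problem goes, the total algorithm cost is $2$ per gadget while the optimum pays $1$ per gadget, so the ratio stays $2$ while the single component can be made as large as desired by taking $m$ large. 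One must be slightly careful that the chaining does not create extra dependencies that let the algorithm ``reuse'' a query across gadgets; arranging the intervals so that each interval is dependent only on its immediate neighbours in the chain, and choosing cores that do not overlap across gadget boundaries, avoids this.

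The main obstacle is purely a bookkeeping one: verifying that the adversary's choices of hidden values are simultaneously consistent across the whole construction — i.e.\ that revealing $v_j$ in the core of its gadget-partner for every $j$ the algorithm probes, while still leaving the optimum a single cheap query per gadget, does not over-constrain anything and respects $v_i \in I_i$ and the error threshold $\delta$. With $\delta=0$ and intervals chosen with a bit of slack this is routine, and since the statement only claims a lower bound it suffices to exhibit one such consistent family; I would write it out with explicit coordinates for a single gadget and then note that translated copies along the real line, glued via overlaps, give the large-component version.
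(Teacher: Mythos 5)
Your two-interval gadget and the adversary argument on it are essentially the paper's: force the first-queried interval's value into the overlap region so the pair stays unresolved, then reveal the second value so that a single query (to the second interval) would have sufficed. (Minor slip: ``each contains the core of the other'' is impossible for $\delta=0$ unless the intervals coincide; all you need is a wide overlap, $\ell_1<\ell_2<r_1<r_2$ with $r_1-\ell_2>2\delta$, which your $[0,3],[1,4]$ example has.) That part is fine.

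The large-component part has a genuine gap. Chaining the gadgets into a path does not preserve the factor-$2$ bound. The cross edges between consecutive gadgets are dependencies that the optimum must also cover (by Lemma~\ref{lemma:decideind}, a dependent pair always needs a query, regardless of where the values lie, so ``cores that do not overlap across gadget boundaries'' does not make those edges free), and a deterministic algorithm can exploit the path structure: by choosing which side of each gadget to probe first, it can force the optimum's per-gadget queries to miss the cross edges, driving the optimum above one query per gadget; alternatively it can query ``middle'' intervals whose answers either resolve two edges at once or certify neighbours as forced for every solution. More decisively, the paper itself shows your family cannot work: with $\delta=0$ and proper intervals a long path is exactly the setting of Theorem~\ref{teo:properlarge}, where Algorithm~\ref{alg:rand} is deterministic and $\bigl(3/2+\Oh(1/k)\bigr)$-query-competitive, and the known lower bounds for such connected instances are only $5/3$ (Lemma~\ref{lemma:lower2comp}) and $7/5$ (Lemma~\ref{lemma:lowerproperlarge}). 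The paper's construction avoids this by keeping the two-interval gadgets pairwise independent and connecting them with one large interval containing all of them: that interval is forced for both the algorithm and the optimum and its value can be placed in a gap so it reveals nothing about the gadgets, giving $2m+1$ versus $m+1$ queries, i.e.\ a ratio tending to $2$.
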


\begin{proof}
Consider intervals $I_1$ and $I_2$ with uniform query cost, $\ell_1 < \ell_2 < r_1 < r_2$ and $r_1 - \ell_2 > 2\delta$.
If the algorithm queries $I_1$, then the adversary chooses $v_1 \in (\ell_2 + \delta, r_1 - \delta)$.
The algorithm must also query~$I_2$ to decide the order, but then the adversary can choose $v_2 \in [r_1 - \delta, r_2]$ and one query would be sufficient.
The argument is symmetrical if the algorithm queries $I_2$ first, with $v_2 \in (\ell_2 + \delta, r_1 - \delta)$ and $v_1 \in [\ell_1, \ell_2 + \delta]$.
To obtain a large component, make several independent copies of this structure and connect them with a large interval containing all the others; in this case the lower bound approaches 2 asymptotically.
\qed
\end{proof}

First we give a simple deterministic 2-query-competitive adaptive algorithm for the case with uniform query costs.
It is inspired by the algorithm of Erlebach~{\em et~al.}~\cite{erlebach08steiner_uncertainty} for the minimum spanning tree problem with uncertainty, and it relies on the following concepts, which were introduced in~\cite{bruce05uncertainty}.
Let $\Ical = \{I_1, \ldots, I_n\}$ be a set of intervals for the sorting problem with uncertainty.
We say that a set $W \subseteq \Ical$ of intervals is a {\bf witness set} if at least one of the intervals in $W$ must be queried to decide the order of~$\Ical$, even if all intervals except those in~$W$ are queried.
Due to Lemma~\ref{lemma:decideind}, any pair of dependent intervals constitute a witness set.
A set of intervals $\Ical' = \{I'_1, \ldots, I'_n\}$ is a {\bf refinement} of $\Ical$ if $\Ical'$ is obtained from~$\Ical$ by performing a sequence of queries.
Fact~\ref{fact:refinement} follows simply from $\Ical'$ having more information than $\Ical$.

\begin{fact}
\label{fact:refinement}
Let $\Ical'$ be a refinement of $\Ical$.
If some set of intervals $W \subseteq \Ical' \cap \Ical$ is a witness set for $\Ical'$, then it is a witness set for $\Ical$.
\end{fact}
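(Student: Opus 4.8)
The plan is to argue by contradiction, after first restating the definition of witness set in operational form. For a family $\mathcal{J}$ with $W \subseteq \mathcal{J}$, saying that $W$ is a witness set means that even after querying every interval of $\mathcal{J}$ outside $W$, the relative order of the underlying values is still not forced; equivalently, $W$ is \emph{not} a witness set for $\mathcal{J}$ exactly when querying the whole set $\mathcal{J} \setminus W$ is already sufficient to decide the order. Note also that the hypothesis $W \subseteq \Ical' \cap \Ical$ gives $W \subseteq \Ical$, so the phrase ``witness set for $\Ical$'' is well defined.

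So I would assume that $W \subseteq \Ical' \cap \Ical$ is a witness set for $\Ical'$ but, for contradiction, that $W$ is not a witness set for $\Ical$. By the reformulation, querying all of $\Ical \setminus W$ determines the order of all $n$ items. The key step is to observe that querying $\Ical \setminus W$ (starting from $\Ical$) and querying $\Ical' \setminus W$ (starting from $\Ical'$) leave the algorithm with exactly the same knowledge: in either case one learns the exact value $v_i$ of every item whose interval does not lie in $W$, since a query reveals the exact value regardless of how narrow the interval already was; and for every item whose interval lies in $W$ one keeps precisely the same interval, because $W \subseteq \Ical' \cap \Ical$ means those intervals are untouched by the sequence of queries that produced the refinement $\Ical'$. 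Hence the state reached after querying $\Ical' \setminus W$ also determines the order. But this says that the order of $\Ical'$ can be decided without querying any interval of $W$, contradicting that $W$ is a witness set for $\Ical'$. Therefore $W$ must be a witness set for $\Ical$, which is what we wanted.

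I do not anticipate a real obstacle here; the statement is essentially a monotonicity-of-information argument. The only point that needs a careful sentence is the claim that querying $\Ical \setminus W$ and querying $\Ical' \setminus W$ yield identical knowledge, and this rests on two simple facts used above: a query always returns the exact value, so prior refinement of an interval is irrelevant once it is queried, and the intervals in $W$ are common to $\Ical$ and $\Ical'$, so nothing about them changes between the two families. An equivalent way to phrase the whole argument, avoiding even this, is to note that $\Ical'$ is obtained from $\Ical$ by a sequence of queries none of which touches $W$, so the experiment ``query everything outside $W$'' applied to $\Ical'$ is just a continuation of the same experiment applied to $\Ical$.
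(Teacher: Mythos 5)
Your argument is correct and matches the paper's intent: the paper states this fact without a detailed proof, remarking only that it ``follows simply from $\Ical'$ having more information than $\Ical$,'' and your contradiction argument (querying everything outside $W$ yields the same knowledge whether one starts from $\Ical$ or from $\Ical'$, since the intervals of $W$ are untouched and queries return exact values) is precisely a careful formalization of that monotonicity-of-information remark. No gaps.
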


The algorithm, then, consists in the following.
While there is some pair of dependent intervals, we query all intervals in this pair that have not been queried yet.
When an interval~$I_i$ is queried, it is replaced by $[v_i, v_i]$.
(Note that, even after querying $I_i$, it may still be dependent to a non-trivial interval.)
Finally, intervals are sorted by breaking ties arbitrarily.\footnote{If $\delta = 0$, then this algorithm can be implemented with stable sorting and in $\Oh(n \lg n)$ time by running a standard stable sorting algorithm (e.g., MergeSort) and querying two intervals when MergeSort needs to know the relative order between them.
It does not work, however, if $\delta > 0$, since the relation $v_i \leq v_j + \delta$ is not transitive.}

For a better understanding of the algorithm, consider the examples in Figure~\ref{fig:interval}, assuming $\delta = 0$.
In Figure~\ref{fig:interval1}, the optimum solution must query $I_1$ and $I_3$, since $v_1 \in I_3$ and $v_3 \in I_1$, and this is enough because~$I_2$ will be independent after querying~$I_1$.
If the algorithm first queries~$I_1$ and $I_2$, it must also query~$I_3$.
In Figure~\ref{fig:interval3} it is enough to query $I_1$, but the algorithm will query a dependent pair, say,~$I_1$ and~$I_2$.
Either way, the algorithm does not spend more than twice the optimum number of queries.
The following theorem was previously proven for $\delta = 0$ by Kahan~\cite{kahan91queries}, which we generalize for arbitrary $\delta \geq 0$.

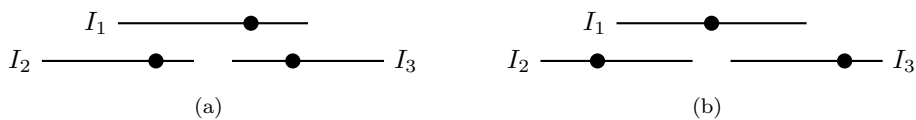
\begin{figure}[!ht]
  \centering
  \subfigure[]{\label{fig:interval1}
   \begin{tikzpicture}[thick, scale=0.5]
    \draw (0, 0) node[anchor=east]{$I_2$} -- (4, 0);
    \draw (2, 1) node[anchor=east]{$I_1$} -- (7, 1);
    \draw (5, 0) -- (9, 0) node[anchor=west] {$I_3$};
    \fill[black] (3, 0) circle (0.2cm);
    \fill[black] (5.5, 1) circle (0.2cm);
    \fill[black] (6.6, 0) circle (0.2cm);
   \end{tikzpicture}
  }\qquad
  \subfigure[]{\label{fig:interval3}
   \begin{tikzpicture}[thick, scale=0.5]
    \draw (0, 0) node[anchor=east]{$I_2$} -- (4, 0);
    \draw (2, 1) node[anchor=east]{$I_1$} -- (7, 1);
    \draw (5, 0) -- (9, 0) node[anchor=west] {$I_3$};
    \fill[black] (1.5, 0) circle (0.2cm);
    \fill[black] (4.5, 1) circle (0.2cm);
    \fill[black] (8, 0) circle (0.2cm);
   \end{tikzpicture}
  }
  \caption{Example instances of the problem.}
  \label{fig:interval}
\end{figure}

\begin{theorem}
\label{teo:simple}
The simple adaptive algorithm for sorting with uncertainty is 2-query-competitive for uniform query costs.
\end{theorem}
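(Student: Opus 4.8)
The plan is to use the witness-set machinery via a charging argument. I would let $\textrm{ALG}$ denote the set of intervals queried by the simple adaptive algorithm and let $\textrm{OPT}$ be an offline optimum query set. Since query costs are uniform, it suffices to show $|\textrm{ALG}| \leq 2\,|\textrm{OPT}|$. The algorithm proceeds in rounds, and in each round it picks a pair $\{I_i, I_j\}$ of dependent intervals (with respect to the current, possibly refined, instance) and queries whichever of the two has not yet been queried. The key observation is that each such pair is a witness set for the current refined instance, hence by Fact~\ref{fact:refinement} it is a witness set for the original instance~$\Ical$, since both intervals belong to $\Ical' \cap \Ical$ in the sense that their endpoints are unchanged until they are queried. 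Therefore $\textrm{OPT}$ must query at least one interval from each such pair.

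First I would argue that the pairs chosen across the rounds can be taken to be pairwise disjoint. This is the crucial structural point: once the algorithm queries the (at most two) unqueried members of a dependent pair, those intervals have been replaced by point intervals, and — although a point interval may still be dependent with some non-trivial interval — I claim we can account for things so that no interval is ``used'' as the witness-providing member of two different chosen pairs. Concretely, I would maintain the invariant that the algorithm only ever selects a dependent pair both of whose members are currently unqueried; if at some stage every dependent pair has at least one already-queried endpoint, then in fact all remaining dependencies are between a queried point interval and an unqueried interval, and querying that single unqueried interval resolves the dependency — but one must check that the algorithm as stated (``query all unqueried intervals in the pair'') still only adds one query in that case, so it is not worse. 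The cleanest route is: partition $\textrm{ALG}$ into the groups of intervals queried in each round; each group has size~$1$ or~$2$; assign to each group a distinct element of $\textrm{OPT}$ (the witness-set property guarantees one exists, and disjointness of the pairs across rounds, together with the fact that an interval in $\textrm{OPT}$ can only be the witness for the round in which it itself is first queried by the algorithm or in which its dependent partner is queried, gives distinctness). Then $|\textrm{ALG}| = \sum_{\textrm{rounds}} (\textrm{group size}) \leq \sum_{\textrm{rounds}} 2 \leq 2\,|\textrm{OPT}|$.

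The main obstacle I anticipate is precisely the disjointness / distinct-charging step when $\delta > 0$, because a queried point interval can remain dependent with a non-trivial interval, so the sequence of chosen pairs need not be a matching in any naive sense. I would handle this by charging each round to a vertex of $\textrm{OPT}$ that is ``new'' to that round — formally, order the rounds by time, and for round~$t$ pick $x_t \in \textrm{OPT} \cap W_t$ where $W_t$ is the pair chosen at round~$t$; if $x_t$ was already used for an earlier round $t' < t$, then $x_t \in W_{t'}$, but $x_t$ was queried (by the algorithm) no later than round $t'$, so at round~$t$ the pair $W_t \ni x_t$ has $x_t$ already queried, contradicting the invariant that chosen pairs have both members unqueried. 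Thus the $x_t$ are distinct, completing the bound. I should also remark, as a footnote-level check, that the algorithm terminates (each round strictly decreases the number of edges in the dependency graph, since at least one endpoint of the chosen edge becomes a point and point–point pairs are independent) and that upon termination the dependency graph is edgeless, so by Lemma~\ref{lemma:decideind} a valid permutation exists.
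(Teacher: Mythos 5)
Your overall plan — partition the algorithm's queries into the per-round groups of size at most $2$, show each group corresponds to a witness set, and charge each group to a distinct element of an optimum query set $Q$ — is exactly the structure of the paper's proof. However, there is a genuine gap in the one place you yourself flag as the obstacle: the rounds in which the selected dependent pair $\{I_i,I_j\}$ has a member $I_i$ that was already queried (its point value $v_i$ can still be dependent on a non-trivial $I_j$; note this happens even when $\delta=0$, not only when $\delta>0$). Your distinctness argument derives a contradiction from ``the invariant that chosen pairs have both members unqueried,'' but that invariant is precisely what fails in these mixed rounds, so the argument proves nothing there. Moreover, for such a round the pair-witness property only guarantees $Q\cap\{I_i,I_j\}\neq\emptyset$, and the guaranteed element could be $I_i$, which may already have been used as the charge of the earlier round in which the algorithm queried it — so these single-query rounds are left without a fresh element of $Q$ to pay for them, and the bound $|\mathrm{ALG}|\le 2|Q|$ does not follow. (Your remark that ``the algorithm only adds one query in that case, so it is not worse'' bounds the cost of the round but does not supply the missing charge.) A secondary imprecision feeding into this: Fact~\ref{fact:refinement} requires $W\subseteq \Ical'\cap\Ical$, and a pair containing an already-queried interval is not such a set, so you cannot conclude the original pair is a witness set that way.

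The fix is the paper's case (2): if $I_i$ has already been queried and is still dependent on $I_j$, then $I_j\supset[v_i-\delta,v_i+\delta]$, so by Fact~\ref{fact:offnecessary} the singleton $\{I_j\}$ is itself a witness set, i.e.\ $I_j\in Q$. You then charge the round to $I_j$, which is fresh because $I_j$ was unqueried before this round and every charged interval from an earlier round is queried by the end of that round. With this one additional ingredient your charging goes through and the proof coincides with the paper's.
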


\begin{proof}
Note that the optimum solution must query at least one interval in each witness set.
For every pair $\{I_i, I_j\}$ of dependent intervals selected by the algorithm, we have that: (1)~if both $I_i$ and $I_j$ have not been queried yet, the algorithm queries the witness set $\{I_i, I_j\}$; (2)~if~$I_i$ has already been queried then, by Fact~\ref{fact:offnecessary}, $\{I_j\}$ is a witness set, which is queried by the algorithm.
We conclude that the algorithm only queries disjoint witness sets of size at most 2, thus it queries at most twice the minimum number of intervals.
\qed
\end{proof}

For arbitrary query costs, the problem also admits a 2-query-competitive deterministic adaptive algorithm, although not as simple.
The algorithm first queries a minimum-cost vertex cover $S_1$ on the dependency graph.
Then, it queries all non-trivial intervals that are still dependent after querying~$S_1$, which we denote by the set $S_2$.

\begin{theorem}
\label{teo:detarbitrary}
The adaptive algorithm for sorting with uncertainty with arbitrary query costs is 2-query-competitive.
\end{theorem}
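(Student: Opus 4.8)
The plan is to compare the algorithm's total cost $w(S_1)+w(S_2)$ against the cost $\opt$ of an offline optimum query set $Q^*$ by bounding the two phases separately: I would show $w(S_1)\le\opt$ and $w(S_2)\le\opt$, which gives the claimed factor~$2$. Two easy structural remarks set this up. First, querying an interval $I_i$ replaces it by $[v_i,v_i]\subseteq I_i$, and by Definition~\ref{def:dep} shrinking an interval can only destroy dependencies, never create them (if $I'_i\subseteq I_i$ is dependent to some $I_j$, then so was $I_i$); hence the set of dependent pairs never grows during the algorithm, and in particular we may assume $S_1\cap S_2=\emptyset$, since re-querying an already-queried interval yields nothing. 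Second, to decide the order one must eventually query, for every pair of dependent intervals, at least one of the two (Lemma~\ref{lemma:decideind}); hence every feasible query set is a vertex cover of the dependency graph, and in particular $Q^*$ is such a vertex cover.

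The bound $w(S_1)\le\opt$ is then immediate, since $S_1$ is a \emph{minimum-cost} vertex cover of the dependency graph while $Q^*$ is merely \emph{some} vertex cover of it. For the second phase the key step is to prove $S_2\subseteq Q^*$. Take $I_j\in S_2$: it is non-trivial, has not been queried (so $I_j\notin S_1$), and after $S_1$ was queried it is still dependent to some current interval. That interval cannot be an unqueried original interval $I_k$, for then $\{I_j,I_k\}$ would be an original edge with neither endpoint in the vertex cover $S_1$; so it must be a queried point $[v_i,v_i]$ with $I_i\in S_1$. Unfolding Definition~\ref{def:dep} for the dependent pair $\{I_j,[v_i,v_i]\}$ yields $r_j-v_i>\delta$ and $v_i-\ell_j>\delta$, i.e. $\ell_j<v_i-\delta$ and $v_i+\delta<r_j$, which is precisely $I_j\supset[v_i-\delta,v_i+\delta]$. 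By Fact~\ref{fact:offnecessary}, $I_j$ is then queried by every optimum solution, so $I_j\in Q^*$. Therefore $w(S_2)\le w(Q^*)=\opt$, and adding the two bounds gives $w(S_1)+w(S_2)\le 2\opt$.

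It remains to observe that the algorithm is correct, i.e. that after querying $S_1\cup S_2$ no dependent pair survives: such a pair cannot consist of two queried points, and if it involved an unqueried interval $B$, then --- as in the argument above --- $B$ would lie in an original edge whose other endpoint is in $S_1$, forcing the still-dependent non-trivial interval $B$ into $S_2$, a contradiction; hence by Lemma~\ref{lemma:decideind} a valid order can be output. I expect the only real obstacle to be the inclusion $S_2\subseteq Q^*$: one must notice that every interval queried in the second phase is exactly in the ``necessary'' configuration of Fact~\ref{fact:offnecessary}, and identifying the relevant already-queried value $v_i$ requires the vertex-cover property of $S_1$. The disjointness of $S_1$ and $S_2$ and the fact that queries never create new dependencies are trivial but genuinely used, so I would state them explicitly rather than skip them.
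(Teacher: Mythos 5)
Your proof is correct and follows essentially the same route as the paper: bound $w(S_1)$ by noting that any optimum query set must be a vertex cover of the dependency graph while $S_1$ is a minimum-cost one, and bound $w(S_2)$ by showing each interval still dependent after querying $S_1$ is dependent only to a queried value, hence (via Fact~\ref{fact:offnecessary}) a singleton witness set contained in every optimum solution. You merely spell out explicitly the details the paper compresses into ``every interval in $S_2$ is a singleton witness set, since $S_2$ is a set of independent intervals.''
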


\begin{proof}
Let $Q$ be an optimum query set.
The set of intervals not contained in $Q$ must be independent.
By the duality between independent sets and vertex covers, $Q$ must be a vertex cover.
Thus $w(S_1) \leq w(Q)$, since $S_1$ has minimum cost.
Furthermore, note that every interval in $S_2$ is a singleton witness set, since~$S_2$ is a set of independent intervals.
Thus $w(S_2) \leq w(Q)$ as well, and $w(S_1 \cup S_2) \leq 2 \cdot w(Q)$.
\qed
\end{proof}

\section{Improved Adaptive Algorithms for Uniform Query Costs}
\label{sec:rand}

We now explore refined analysis of query-competitive sorting. We present a unified strategy that yields different improvements to Theorem~\ref{teo:simple}, depending on what assumptions we make.

The core observation is that the bad 2-interval instance in the proof of Lemma~\ref{lemma:loweradap} is the only structure that prevents an algorithm from performing better than twice the optimum.
The first strategy that comes to mind, then, is to use randomization: a simple randomized strategy attains query-competitive factor~$3/2$ on the instance of Lemma~\ref{lemma:loweradap}.
Before extending the algorithm to arbitrary instances, we give a lower bound for any randomized algorithm.

\begin{lemma}
\label{lemma:lowerrand}
Any randomized adaptive algorithm has query-competitive ratio at least~$3/2$ against an adversary that is oblivious to the randomized tosses, even for uniform query costs.
\end{lemma}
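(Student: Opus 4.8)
The plan is to exhibit a small family of instances on which every randomized adaptive algorithm, facing an adversary that fixes the instance before the coin tosses, must incur expected query cost at least $3/2$ times the optimum. I would build directly on the hard $2$-interval instance from Lemma~\ref{lemma:loweradap}: take intervals $I_1, I_2$ with $\ell_1 < \ell_2 < r_1 < r_2$ and $r_1 - \ell_2 > 2\delta$, so $I_1$ and $I_2$ are dependent and form a witness set. The key point is that, regardless of randomness, the algorithm must query at least one of the two intervals (by Lemma~\ref{lemma:decideind}), and whichever one it queries \emph{first}, the adversary has already committed to actual values that make the \emph{other} query necessary as well, while the optimum needed only one query.

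Concretely, I would use Yao-style reasoning: fix a distribution over two instances. In instance $A$, the values are $v_1 \in (\ell_2+\delta, r_1-\delta)$ and $v_2 \in [r_1-\delta, r_2]$, so that querying $I_1$ alone suffices (after learning $v_1$, interval $I_2$ becomes independent of it and the order is determined), hence $\opt = 1$; but if the algorithm queries $I_2$ first, it learns a value that does not resolve the position of $v_1$ inside $(\ell_2+\delta, r_1-\delta)$, forcing a second query. In instance $B$, symmetrically, $v_2 \in (\ell_2+\delta, r_1-\delta)$ and $v_1 \in [\ell_1, \ell_2+\delta]$, so querying $I_2$ alone suffices and $\opt = 1$, but querying $I_1$ first forces a second query. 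Put probability $1/2$ on each instance. A deterministic algorithm, on its first query, picks $I_1$ or $I_2$; on exactly one of the two instances this first choice is the "wrong" one and costs $2$, on the other it costs $1$ (or the algorithm could query both immediately, costing $2$ on both). Either way the expected cost over the random instance is at least $\tfrac12 \cdot 1 + \tfrac12 \cdot 2 = 3/2$, while $\opt = 1$ always. By Yao's principle this lower bound transfers to randomized algorithms against a worst-case (oblivious) instance.

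I would then remark that the bound extends to larger instances and to the "large components" regime: glue together $m$ independent copies of the gadget and add one further interval $I^*$ containing all of them, which every algorithm and the optimum must query; the optimum pays $m+1$, a randomized algorithm pays at least $\tfrac32 m + 1$ in expectation, and the ratio tends to $3/2$ as $m \to \infty$. The main obstacle, modest as it is, is being careful that learning the value of the "first-queried" interval genuinely fails to resolve the order in both branches — i.e., that the committed value of the other interval lies strictly inside the open interval $(\ell_2+\delta, r_1-\delta)$, which is nonempty precisely because $r_1 - \ell_2 > 2\delta$ — and that no cleverer adaptive rule (e.g. querying a trivial interval, or exploiting $\delta$) can do better; since there are only two intervals and they are the unique witness set, there is no room for such cleverness, so the argument is tight.
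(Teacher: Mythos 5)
Your route is the same as the paper's, which proves this lemma in one line: take the two-interval instance of Lemma~\ref{lemma:loweradap}, put probability $1/2$ on each of the two adversarial value assignments, and apply Yao's minimax principle. The skeleton of your argument is therefore fine, but the analysis of your two instances is stated backwards, and as written the key justifications are false. In your instance $A$ (with $v_1 \in (\ell_2+\delta, r_1-\delta)$ and $v_2 \in [r_1-\delta, r_2]$), querying $I_1$ alone does \emph{not} suffice: since $v_1 > \ell_2+\delta$ and $v_1 + \delta < r_1 < r_2$, the learned point $v_1$ still lies more than $\delta$ away from both endpoints of $I_2$, i.e.\ $I_2 \supset [v_1-\delta, v_1+\delta]$, so by Fact~\ref{fact:offnecessary} \emph{every} solution must query $I_2$; conversely, querying $I_2$ alone does suffice, because $v_2 \geq r_1-\delta$ gives $v_1 \leq r_1 \leq v_2+\delta$ for every possible $v_1$. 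Hence in instance $A$ the optimum is $\{I_2\}$ and it is the algorithm whose first query is $I_1$ that pays $2$; symmetrically, in instance $B$ the optimum is $\{I_1\}$ and a first query of $I_2$ is punished. A value placed inside $(\ell_2+\delta, r_1-\delta)$ is precisely the value that \emph{fails} to resolve the pair, contrary to your parenthetical claim that the intervals become independent after learning it. Because the two instances are mirror images and you weight them equally, this swap does not affect the final count: any deterministic algorithm still pays expected cost at least $\frac{1}{2}\cdot 1 + \frac{1}{2}\cdot 2 = \frac{3}{2}$ against the distribution while $\opt = 1$ on both instances, so the conclusion via Yao stands once the two sentences identifying which single query suffices are corrected. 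Your closing extension with many glued copies is not needed for the lemma as stated (the statement already assumes only uniform costs), though it is the same padding trick the paper uses elsewhere for large components.
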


\begin{proof}
Use the same bad instance as Lemma~\ref{lemma:loweradap}, set probability $1/2$ for each of the two possible inputs and apply Yao's minimax principle.
\qed
\end{proof}

The algorithm is based on the following property of the dependency graph.

\begin{lemma}
\label{lemma:simplicial}
If $I_x$ is a dependent interval with minimum $r_x$, then vertex $x$ is simplicial, i.e., its neighborhood is a clique.
\end{lemma}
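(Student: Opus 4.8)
The plan is to work directly from Definition~\ref{def:dep}, using the co-TT description with $a(I_i) = \ell_i$ and $b(I_i) = r_i - \delta$ established in the proof of Theorem~\ref{teo:cott}, so that $I_i$ and $I_j$ are dependent iff $\ell_i < r_j - \delta$ and $\ell_j < r_i - \delta$. Let $x$ be a dependent interval minimizing $r_x$, and let $I_y, I_z$ be two neighbors of $x$; I must show $I_y$ and $I_z$ are themselves dependent. The key inequalities available are: $I_x \sim I_y$ gives $\ell_y < r_x - \delta$ and $\ell_x < r_y - \delta$; similarly $I_x \sim I_z$ gives $\ell_z < r_x - \delta$ and $\ell_x < r_z - \delta$. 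Since $x$ has minimum right endpoint among dependent intervals, and $y,z$ are dependent (they have a neighbor, namely $x$), we have $r_x \le r_y$ and $r_x \le r_z$.

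The main step is to combine these. To show $I_y \sim I_z$ I need $\ell_y < r_z - \delta$ and $\ell_z < r_y - \delta$. For the first: $\ell_y < r_x - \delta \le r_z - \delta$, using $r_x \le r_z$. For the second, symmetrically, $\ell_z < r_x - \delta \le r_y - \delta$, using $r_x \le r_y$. Hence $I_y$ and $I_z$ are dependent, so the neighborhood of $x$ is a clique. I should also note the degenerate cases: if $x$ has no neighbors it is not dependent (contradiction, so this does not arise), and if $x$ has exactly one neighbor the clique condition is vacuous.

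One subtlety I would be careful about is the role of trivial intervals when $\delta > 0$: two trivial intervals are never dependent, so both $y$ and $z$ are non-trivial, but this fact is not actually needed for the argument above — the inequality chain goes through regardless. I would state the proof purely in terms of the endpoint inequalities to keep it short, remarking that the minimality of $r_x$ is exactly what lets the two ``short'' sides $\ell_y < r_x - \delta$ and $\ell_z < r_x - \delta$ be promoted to the required $\ell_y < r_z - \delta$ and $\ell_z < r_y - \delta$. I expect the only real obstacle is making sure the comparison $r_x \le r_y$ (and $r_x \le r_z$) is justified, which follows because $y$ and $z$ are dependent intervals and $x$ was chosen with minimum $r_x$ over all dependent intervals; I would spell this out explicitly so the ``minimum'' in the hypothesis is used correctly.
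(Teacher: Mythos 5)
Your proof is correct and is essentially the paper's own argument: both use that $y$ and $z$ are themselves dependent intervals, so minimality gives $r_x \le r_y, r_z$, and then chain this with $r_x - \ell_y > \delta$ and $r_x - \ell_z > \delta$ to conclude $I_y$ and $I_z$ are dependent. Your extra remarks (degenerate cases, trivial intervals) are harmless but not needed.
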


\begin{proof}
The claim is trivial if $x$ has only one neighbor, so assume it has at least two, $y$ and $z$.
Then $r_y - \ell_z \geq r_x - \ell_z > \delta$, since $I_x$ and $I_z$ are dependent.
Analogously, $r_z - \ell_y > \delta$, so $I_y$ and $I_z$ are dependent.
\qed
\end{proof}

The algorithm begins by querying intervals that are singleton witness sets according to a generalization of the condition in Fact~\ref{fact:offnecessary}.
Then, if a component of the remaining dependency graph is an edge, the randomized strategy is applied.
Else, the algorithm considers a non-isolated vertex $x$ with minimum $r_x$, a neighbor $y$ of~$x$ with minimum $r_y$, and another neighbor $z$ of $x$ (or of $y$ if $y$ is the only neighbor of $x$) with minimum~$r_z$. The algorithm first queries $I_y$.
If $x$ and $y$ are still adjacent, or if $x$ and $z$ are adjacent, then we query both~$I_x$ and $I_z$.
We repeat this strategy until the dependency graph has no edges.

A pseudocode is presented in Algorithm~\ref{alg:rand}; we parameterize the probability $p$ in the randomized strategy since the algorithm will be reused afterwards.
We also maintain a set $\Vcal$ of the values resulting of queried intervals.

\begin{algorithm}[!ht]
\KwIn{$(I_1, \ldots, I_n, p)$}
\SetAlgoNoEnd
$\Vcal \recebe \emptyset$\;
\While{there are $i, j$ with $I_i \supset [\ell_j - \delta, r_j + \delta]$ \KwOr $I_i \supset [v_j - \delta, v_j + \delta]$ with $v_j \in \Vcal$}{
 query $I_i$, add $v_i$ to $\Vcal$\;
}
\While{there is some dependency}{
 \eIf{some component is an edge $ij$}{
  pick $i$ with probability $p$ (and $j$ with probability $1-p$); assume $i$ is picked\;
  query $I_i$, add $v_i$ to $\Vcal$\;
  \If{$I_j \supset [v_i - \delta, v_i + \delta]$}{
   query $I_j$, add $v_j$ to $\Vcal$\;
  }
 }{
  \KwLet $I_x$ non-isolated with $\min r_x$, and $y$ be a neighbor of $x$ with $\min r_y$\;
  \KwLet $z$ be another neighbor of $x$ (or of $y$ if $x$ is a leaf), with $\min r_z$\;
  query $I_y$, add $v_y$ to $\Vcal$\;
  \If{$I_x \supset [v_y - \delta, v_y + \delta]$ \KwOr $I_x, I_z$ are dependent}{
   query $I_x$, add $v_x$ to $\Vcal$\;
   query $I_z$, add $v_z$ to $\Vcal$\;
  }
 }
 \While{there is $I_i \supset [v_j - \delta, v_j + \delta]$ for some $v_j \in \Vcal$}{
  query $I_i$, add $v_i$ to $\Vcal$\;
 }
}
\vspace{0.2cm}
\caption{\label{alg:rand} Improved adaptive algorithm for the sorting problem with queries.}
\end{algorithm}

\begin{theorem}
\label{teo:rand}
Algorithm~\ref{alg:rand} has expected query-competitive ratio $3/2$ if $p = 1/2$.
\end{theorem}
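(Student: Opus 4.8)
The plan is to show that the algorithm only ever queries (disjoint) witness sets, and that each such witness set is charged at most $3/2$ times its contribution to the optimum. The analysis splits according to the three places where queries are issued: (a) the initial "forced" loop querying intervals $I_i$ with $I_i \supset [\ell_j - \delta, r_j+\delta]$ or $I_i \supset [v_j-\delta, v_j+\delta]$ for some already-known $v_j \in \Vcal$; (b) the edge-component randomized step; and (c) the simplicial step using the triple $x, y, z$, together with the trailing clean-up loop.

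First I would argue that every interval queried in case (a) is a singleton witness set — this is the generalization of Fact~\ref{fact:offnecessary}: if $I_i$ strictly contains $[\ell_j-\delta,r_j+\delta]$, then no matter what $v_j$ turns out to be, $I_i$ must still be queried to resolve the $i,j$ order, and similarly if $v_j$ is already revealed and lies strictly inside $[\ell_i+\delta, r_i-\delta]$. Since these are forced in \emph{every} solution, including the optimum, we may account for them at ratio $1$ and effectively restrict attention to the reduced instance where no such containment occurs. The trailing clean-up loop inside the main \textbf{while} is of the same type, so those queries are also charged at ratio $1$.

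Next, for case (b): if a connected component of the (reduced) dependency graph is a single edge $ij$, then $\{I_i, I_j\}$ is a witness set, so $\mathrm{opt}$ pays at least $1$ on it. With probability $p = 1/2$ we query $I_i$; by the reduction, $I_i \not\supset [\ell_j-\delta, r_j+\delta]$, so after learning $v_i$ the pair is resolved unless $v_i$ happens to fall in the (now possibly shrunk) dangerous sub-interval, i.e.\ $I_j \supset [v_i-\delta, v_i+\delta]$ — but that event, conditioned on the adversary's fixed $v_j$, can only be forced by one of the two choices (querying the "inner" interval first), so with probability at least $1/2$ the algorithm stops after one query and with probability at most $1/2$ it makes two. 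Hence expected cost $\le 3/2$ against $\mathrm{opt} \ge 1$. This is exactly the $2$-interval instance of Lemma~\ref{lemma:loweradap}, and is where the probability $p=1/2$ is tight.

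The main work — and the main obstacle — is case (c), the simplicial step. Using Lemma~\ref{lemma:simplicial}, the vertex $x$ of minimum $r_x$ is simplicial, so its neighbors form a clique; consequently the chosen $y$ (min-$r_y$ neighbor of $x$) and $z$ are pairwise dependent with $x$, and $\{I_x, I_y\}$, $\{I_x, I_z\}$, $\{I_y, I_z\}$ are all witness sets. I would show that the algorithm queries at most the set $\{I_y\} \cup (\{I_x, I_z\}$ if the test fires$)$, i.e.\ either $1$ or $3$ intervals, while $\mathrm{opt}$ must pay at least $2$ on these three mutually dependent intervals \emph{whenever} the algorithm pays $3$: the key point is that if after querying $I_y$ the vertex $x$ is still adjacent to $y$ or to $z$, then the triangle $\{x,y,z\}$ still has two dependent edges remaining, so any vertex cover of the reduced graph — in particular $\mathrm{opt}$ restricted to this triple — needs $\ge 2$ vertices; and since $x, y, z$ were chosen by the global minimum-$r$ rule, these witness sets are disjoint from those charged in later iterations (the removal of $x,y,z$ and the clean-up loop guarantee no later iteration reuses them). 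Thus each "expensive" iteration pays $3$ against an $\mathrm{opt}$-contribution of $\ge 2$, ratio $3/2$, and each "cheap" iteration pays $1$ against $\ge 1$. Summing over all iterations and adding back the ratio-$1$ forced queries from (a), the total is at most $\tfrac{3}{2}\,\mathrm{opt}$ in expectation. The delicate bookkeeping is (i) verifying that the various witness sets charged across all iterations are genuinely disjoint so the per-iteration bounds add up, which relies on the min-$r$ selection and the structure of the clean-up loop, and (ii) checking the edge case where $x$ is a leaf and $z$ is taken to be a neighbor of $y$ instead — here one must confirm $\{I_y, I_z\}$ and $\{I_x, I_y\}$ still give $\mathrm{opt} \ge 2$ on $\{I_x,I_y,I_z\}$ when the test fires, which again follows because $I_y$ being non-trivial and still dependent after its own query forces at least one more query among its neighbors while $I_x,I_y$ remain a witness set.
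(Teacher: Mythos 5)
Your overall strategy is the same as the paper's: charge forced (singleton-witness) queries at ratio $1$, handle edge components with the coin flip, and show that the $x,y,z$ step queries $3$ intervals only when the optimum must query at least $2$ of them, with disjoint charging across iterations. However, two steps as written are incorrect. In the edge-component case you claim that the event "the first-queried interval's value leaves the pair unresolved" can be forced by only one of the two choices, so with probability at least $1/2$ the algorithm stops after one query and pays expected cost at most $3/2$ against $\opt \ge 1$. This is false: it can happen that $I_i \supset [v_j-\delta, v_j+\delta]$ and $I_j \supset [v_i-\delta, v_i+\delta]$ simultaneously (neither interval need contain the other, so your preprocessing reduction does not exclude this), and then the algorithm pays $2$ with probability $1$. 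The ratio survives only because in that situation both intervals are forced, so $\opt$ also pays $2$ on the pair --- exactly the subcase the paper's proof treats separately; your accounting ("expected cost $\le 3/2$ against $\opt \ge 1$") does not cover it.

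Second, in the leaf subcase of the $x,y,z$ step ($y$ is the only neighbor of $x$ and $z$ is a neighbor of $y$), you propose to obtain $\opt \ge 2$ on $\{I_x,I_y,I_z\}$ from the two witness pairs $\{I_y,I_z\}$ and $\{I_x,I_y\}$. That is not enough: both pairs are covered by querying $I_y$ alone, so they only force $\opt \ge 1$, which against $a=3$ gives ratio $3$, not $3/2$. The correct argument (the paper's) is that when the test fires in this subcase it must be because $I_x \supset [v_y-\delta, v_y+\delta]$, so $\{I_x\}$ is by itself a singleton witness set, and together with the witness pair $\{I_y,I_z\}$ (which does not involve $x$) this yields $\opt \ge 2$; your trailing sentence gestures at $x$ being forced but then pairs it with $\{I_x,I_y\}$, which again yields only $1$. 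Relatedly, in the non-leaf case your justification "two dependent edges remaining, so any vertex cover needs $\ge 2$ vertices" is invalid for two edges sharing a vertex; the correct reason is that $x,y,z$ form a triangle of pairwise dependent intervals (which you do establish via Lemma~\ref{lemma:simplicial}), and a triangle has no vertex cover of size $1$. With these repairs your argument coincides with the paper's proof.
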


\begin{proof}
We form a partition $V_1, \ldots, V_m$ of the set of input intervals with the following property.
Let $a(V_i)$ be the number of intervals in $V_i$ that are queried by the algorithm, and let $q(V_i) := |Q \cap V_i|$, where $Q$ is an optimum query set.
We show that $\Esp[a(V_i) / q(V_i)] \leq 3/2$ for every $i$, from which the theorem follows.

If the algorithm queries an interval $I_i$ in Line~3 or Line~18, then $\{I_i\}$ is the next set in the partition.
Due to Fact~\ref{fact:offnecessary}, it is a singleton witness set, so $a(\{I_i\}) / q(\{I_i\}) = 1$.

If the algorithm runs Lines~6--9 for edge $ij$, then $W = \{I_i, I_j\}$ is the next set in the partition.
We consider the following cases.
\begin{enumerate}
 \item If $I_i \supset [v_j - \delta, v_j + \delta]$ and $I_j \supset [v_i - \delta, v_i + \delta]$, then $q(W) = 2$ and $a(W) = 2$.
 \item Otherwise, $q(W) \geq 1$ because this is a witness pair.
 \begin{enumerate}
  \item If $I_i \supset [v_j - \delta, v_j + \delta]$ but $I_j \not\supset [v_i - \delta, v_i + \delta]$, then with probability $1/2$ the algorithm queries~$I_i$ and this is enough, and with probability $1/2$ it queries both, so $\Esp[a(W)] = 3/2$; the same holds for the symmetrical case.
  \item If $I_i \not\supset [v_j - \delta, v_j + \delta]$ and $I_j \not\supset [v_i - \delta, v_i + \delta]$, then Line~9 is not executed and $a(W) = 1$.
 \end{enumerate}
\end{enumerate}

If the algorithm runs Lines~11--16 for $x$, $y$ and $z$, then we have two cases.
\begin{enumerate}
 \item If $x$ and $z$ are not neighbors, and $x$ and $y$ are not neighbors after Line~13, then $W = \{I_x, I_y\}$ is the next set in the partition.
  Since it is a witness set, $q(W) \geq 1$.
  But the algorithm will not query $I_x$ because $y$ is the only neighbor of $x$, so $a(W) = 1$.
 \item Otherwise, $W = \{I_x, I_y, I_z\}$ is the next set in the partition.
 We have two subcases.
 \begin{enumerate}
  \item If $x$ and $z$ are neighbors, then $xyz$ is a clique by Lemma~\ref{lemma:simplicial}.
  So $q(W) \geq 2$, since otherwise a pair is unsolved.
  \item Otherwise, $I_x \supset [v_y - \delta, v_y + \delta]$ and $\{I_x\}$ is a singleton witness set.
  Since $x$ and $z$ are not neighbors, then $y$ and $z$ are neighbors and, by Lemma~\ref{lemma:decideind}, $\{I_y, I_z\}$ is a witness set.
 \end{enumerate}
 Either way, $q(W) \geq 2$ and $a(W) = 3$.
\end{enumerate}
We conclude that the expected query-competitive ratio is $3/2$.
\qed
\end{proof}

Our second strategy to obtain an improvement on Theorem~\ref{teo:simple} is, instead of using randomization, to assume that the graph does not have $2$-components, i.e., components consisting of a single edge.
This is not enough, however, since in Lemma~\ref{lemma:loweradap} we have shown that we can have a large component.
So our hypothesis is that $\delta = 0$ and, after executing the loop of Lines~2--3, the remaining dependency graph, which becomes a proper interval graph, has no $2$-components.
(Note that Theorem~\ref{teo:rand} is still true if we remove Lines~2--3 of the algorithm.)
Let us prove a lower bound for this case.

\begin{lemma}
\label{lemma:lower2comp}
Any deterministic adaptive algorithm has query-competitive ratio at least~$5/3$, even if $\delta = 0$ and the dependency graph is a proper interval graph with no $2$-components.
\end{lemma}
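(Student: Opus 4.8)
The plan is to exhibit an adversarial instance — a proper interval graph with no $2$-component — on which every deterministic adaptive algorithm spends at least $5/3$ times the cost of an offline optimum query set. I would build the instance as a disjoint union of many copies of a fixed connected gadget $H$ of bounded size on more than two vertices (note a single triangle only yields the ratio $3/2$, so $H$ must be larger and carefully shaped). Since the intervals are presented up front and the algorithm is deterministic, the adversary can, for each copy of $H$, predict the first interval the algorithm queries there and answer accordingly; the goal is to show that on each copy the algorithm is forced to make $5a$ queries whenever a fixed offline query set uses only $3a$ queries there (for the constant $a$ of $H$), and then sum over copies.

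The mechanism extends the $2$-component lower bound of Lemma~\ref{lemma:loweradap}. The basic move: when the algorithm queries an interval $I$, the adversary reveals $v_I$ strictly inside the interior of a neighbor $J$'s interval; then $I$ collapses to a point dependent with $J$, so (re-querying $I$ being useless) the algorithm is obliged to query $J$ too, while the adversary may later reveal $v_J$ outside $I$'s interval, so in hindsight the optimum could have skipped $I$ and queried only $J$. I would design $H$ and the replies so that, whatever interval the algorithm touches first, the adversary can chain such forced queries through the whole gadget, while a fixed cheap offline solution avoids the intervals on which the algorithm wastes effort. The proof then splits into cases according to which interval of $H$ the algorithm queries first (and, within each, its next move, including the possibility that it queries several intervals before any value is revealed); by the near-symmetry of $H$ and a tailored reply in each case, every branch gives ratio $5/3$, and the single-edge bad case that Lemma~\ref{lemma:loweradap} exploits never arises because $H$ has no $2$-component.

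The main obstacle is the tension between forcing the algorithm and not simultaneously inflating the offline optimum: poisoning $v_I$ into a neighbor $J$'s interior is exactly what makes $I$ useless for the optimum to cover the edge $IJ$, and if done carelessly this propagates so that the optimum is forced to query everything the algorithm does (ratio $1$). The gadget must therefore be engineered so that the algorithm's forced cascade and the cheap offline set are essentially disjoint — the algorithm is repeatedly coaxed into querying the "wrong" endpoint of a dependent pair, while the optimum's witness sets pick the "right" one, whose value the adversary keeps safe. Verifying that no clever first move lets the algorithm escape (in particular that the ratio does not degrade to $3/2$ or to $1$), and calibrating the construction so that the ratio is exactly $5/3$ — matching the deterministic upper bound of the companion result — is the delicate part of the argument; everything else is bookkeeping with the dependency relation of Definition~\ref{def:dep} and the witness-set counting used in Theorems~\ref{teo:simple} and~\ref{teo:rand}.
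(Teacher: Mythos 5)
Your proposal correctly identifies the right strategy -- a fixed bounded gadget that is a proper interval graph with no $2$-component, an adversary that answers according to the algorithm's first query in the gadget, and replies that "poison" the queried value into a neighbor's interior so the algorithm's forced queries and a cheap offline query set are nearly disjoint -- and this is indeed the shape of the paper's argument. But the proposal stops exactly where the proof begins: you never exhibit the gadget $H$ or the adversarial replies, and you yourself flag the construction and its case analysis as "the delicate part." Without a concrete $H$ and a verification that \emph{every} first move (and every subsequent move) yields ratio at least $5/3$ while the offline optimum stays small, there is no proof; the claim that "by the near-symmetry of $H$ and a tailored reply in each case, every branch gives ratio $5/3$" is precisely what must be demonstrated, not assumed. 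Also, the disjoint-union-of-many-copies scaffolding is unnecessary: a single gadget achieving $5$ algorithm queries against $3$ optimum queries already gives the stated bound of $5/3$ exactly.

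For comparison, the paper's gadget is five proper intervals $I_a,I_b,I_c,I_d,I_e$ with $\ell_a<\ell_b<\ell_c<\ell_d<\ell_e$ whose dependency graph consists of the two triangles $abc$ and $cde$ sharing the middle vertex $c$ (a bowtie). If the algorithm first queries $I_c$, the adversary reveals $v_c \in I_c \setminus (I_a\cup I_b\cup I_d\cup I_e)$, after which the pairs $ab$ and $de$ are each played as the $2$-component instance of Lemma~\ref{lemma:loweradap}: the algorithm spends $1+2+2=5$ queries, the optimum $1+1+1=3$. If the algorithm first queries $I_a$ (symmetrically $I_e$), the adversary sets $v_a \in I_b\cap I_c$, forcing queries of $I_b$ and $I_c$, then reveals $v_b,v_c \in (I_b\cup I_c)\setminus(I_a\cup I_d\cup I_e)$ so the optimum skips $I_a$, and the remaining edge $de$ is again played as in Lemma~\ref{lemma:loweradap}; the case of a first query to $I_b$ (symmetrically $I_d$) is handled analogously with $v_b \in I_a\cap I_c$. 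Each branch gives $5$ versus $3$. Your outline is consistent with this construction, but until you supply such a gadget and walk through the first-move cases, the argument has a genuine gap.
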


\begin{proof}
Consider five proper intervals $I_a, I_b, I_c, I_d, I_e$ with $\ell_a < \ell_b < \ell_c < \ell_d < \ell_e$.
The dependencies are defined by two triangles, $abc$ and $cde$.

If the algorithm first queries $I_c$, then we set $v_c \in I_c \setminus (I_a \cup I_b \cup I_d \cup I_e)$, and we can make $ab$ and $de$ behave as the bad instance of Lemma~\ref{lemma:loweradap}.

If the algorithm first queries $I_a$, then we set $v_a \in I_b \cap I_c$, so the algorithm will be forced to query $I_b$ and~$I_c$, and we set $v_b, v_c \in (I_b \cup I_c) \setminus (I_a \cup I_d \cup I_e)$, so the optimum can avoid $I_a$.
Then we can make $de$ behave as the bad instance of Lemma~\ref{lemma:loweradap}.
The argument is symmetric if the algorithm first queries $I_e$.

If the algorithm first queries $I_b$, then we set $v_b \in I_a \cap I_c$, so the algorithm will be forced to query $I_a$ and~$I_c$, and we set $v_a, v_c \in (I_a \cup I_c) \setminus (I_b \cup I_d \cup I_e)$, so the optimum can avoid $I_b$.
Then we can make $de$ behave as the bad instance of Lemma~\ref{lemma:loweradap}.
The argument is symmetric if the algorithm first queries $I_d$.
\qed
\end{proof}

\begin{theorem}
\label{teo:proper2comp}
Algorithm~\ref{alg:rand} (with $p=0$ or $1$) is $5/3$-query-competitive if $\delta = 0$ and the dependency graph has no $2$-components after finishing the loop of Lines~2--3.
\end{theorem}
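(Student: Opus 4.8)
The plan is to reuse the partition-charging framework from the proof of Theorem~\ref{teo:rand}, but with a sharper accounting that exploits the hypothesis $\delta = 0$ and the absence of $2$-components. As before, I would partition the input intervals into witness sets $V_1,\dots,V_m$ generated by the algorithm's decisions, let $a(V_i)$ be the number of intervals in $V_i$ queried by the algorithm and $q(V_i) = |Q \cap V_i|$ for an optimum query set $Q$, and show $a(V_i)/q(V_i) \le 5/3$ for each $i$; the theorem follows by summing. Singletons queried in Line~3 or Line~18 again contribute ratio $1$. The crucial difference is that with $\delta = 0$ the branch in Lines~6--9 (a $2$-component) never occurs by hypothesis, so only the case analysis of Lines~11--16 matters, and there I must show $a(W)/q(W) \le 5/3$ deterministically.

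First I would re-examine the two main cases from the proof of Theorem~\ref{teo:rand}. In case~2 (where $W = \{I_x, I_y, I_z\}$), we already have $a(W) = 3$ and $q(W) \ge 2$, giving ratio $3/2 \le 5/3$, so nothing new is needed there. The problematic case is case~1, where $x$ has $y$ as its only neighbor and they become independent after querying $I_y$, so $W = \{I_x, I_y\}$, $a(W) = 1$, and I claimed $q(W) \ge 1$ — ratio $1$, which is fine. So at first glance the bound even seems to give $3/2$. The subtlety the theorem is guarding against must be the interaction between $W$ and the singleton intervals pulled in by Lines~17--18 after a round: when the algorithm queries $I_y$ and then $I_x$ becomes independent, some previously-covered interval may now need querying, and more importantly the optimum might "pay" for $I_y$ in a way shared across what the algorithm treats as separate witness sets. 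I would therefore reconsider how the partition is formed: when case~1 fires but $I_x$ is \emph{not} removed from the graph (because $x$ has other neighbors beyond $y$, or $y$ still touches something), we are actually not in case~1 at all. The no-$2$-component hypothesis guarantees that whenever $x$ is the minimum-$r$ endpoint with a single neighbor $y$, the vertex $y$ has degree $\ge 2$, so $y$ has a neighbor $z$, which forces us into case~2 (since $z$ is then chosen as "a neighbor of $y$"). Hence I expect case~1 with $a(W)=1$ to be essentially vacuous \emph{except} possibly at the very end of processing a component, and I would argue that the last such set in each component can be absorbed into an adjacent triple without pushing the amortized ratio above $5/3$: a component of size $s$ gets processed into triples plus at most one leftover pair, and $3\lfloor s/3\rfloor + (\text{leftover}) \le \tfrac{5}{3}\cdot q$ where $q \ge \lceil 2s/3 \rceil$ by a vertex-cover lower bound on the chordal (here proper-interval) component.

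The main obstacle, then, is the global bookkeeping at the level of an entire connected component rather than a single witness triple: I need a clean invariant saying that after each execution of Lines~11--16 the remaining graph still has no isolated-pair component among the \emph{unprocessed} part, so that the "bad" ratio-$1$ pair arises at most once per original component and can be merged with a neighboring triple. Establishing this invariant requires checking that querying $I_y$ (and possibly $I_x, I_z$) and then cascading Lines~17--18 cannot split a large component into a $2$-component — this uses Lemma~\ref{lemma:simplicial} (the removed vertices $x$, with minimum $r_x$, are simplicial, so deleting $x$ from a proper interval graph keeps it a proper interval graph and does not create new small components in a problematic way) together with the structure of proper interval graphs under simplicial-vertex elimination. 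Once that invariant is in hand, the arithmetic $\frac{3k + r}{2k + \lceil r/2\rceil} \le \frac53$ for the relevant small values of the leftover $r \in \{0,1,2\}$ is routine and finishes the proof.
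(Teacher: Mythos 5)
There is a genuine gap, and it is in your very first structural claim: you assert that because the graph has no $2$-components after the preprocessing loop of Lines~2--3, the branch in Lines~6--9 ``never occurs by hypothesis.'' The hypothesis only constrains the dependency graph at the moment the preprocessing ends. During the main loop, queries (Lines~11--16 and the cascade in Lines~17--18) delete vertices, and a large component can perfectly well shrink to a single edge; at that point the algorithm \emph{does} execute Lines~6--9, now deterministically (since $p\in\{0,1\}$), and on that edge it can be forced to query both intervals while the optimum queries one --- a local ratio of $2$. This is precisely the event the theorem has to control, and your proposal dismisses it. Symptomatically, your fallback ``invariant'' --- that no $2$-component ever arises among the unprocessed part --- cannot be true: if it were, your own accounting would give ratio $3/2$, contradicting the matching lower bound of $5/3$ in Lemma~\ref{lemma:lower2comp} for exactly this class of instances. (Also, the case you flag as the ``bad pair,'' case~1 of Lines~11--16 with $a(W)=1$, $q(W)\ge 1$, has ratio at most $1$ and needs no repair.)

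The paper's proof keeps the Lines~6--9 case and amortizes it per component: since $\delta=0$, the post-preprocessing graph is a proper interval graph, and Lines~11--16 process intervals left to right, which implies that within each original component $C$ the Lines~6--9 event can happen for at most one edge $ij$ (essentially the last edge of $C$). Because $C$ is not a $2$-component, some other partition set $W'$ (a singleton, a pair from case~1, or a triple with $a/q\le 3/2$) also lies in $C$, and merging $\{I_i,I_j\}$ with $W'$ gives $a(W)/q(W)\le (2+3)/(1+2)=5/3$ in the worst case. Your instinct that some per-component bookkeeping plus simpliciality of the minimum-$r$ vertex is needed points in the right direction, but the object being amortized is the deterministic $2$-versus-$1$ edge event, not a ratio-$1$ pair, and the key structural fact to prove is the ``at most one such edge per component, and that component contributes another set to merge with'' claim --- not a vertex-cover bound of the form $q\ge\lceil 2s/3\rceil$ on the whole component, which the paper neither needs nor establishes.
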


\begin{proof}
The analysis is similar to that of Theorem~\ref{teo:rand}.
We will give a partition $V_1, \ldots, V_m$ of the set of intervals with the following property.
Let $a(V_i)$ be the number of intervals in $V_i$ that are queried by the algorithm, and let $q(V_i) := |Q \cap V_i|$, where $Q$ is an optimum query set.
We will have that $a(V_i) / q(V_i) \leq 5/3$ for every $i$, and then the theorem follows.
The analysis for the cases of Lines~3, 11--16 and~18 are identical.

If the algorithm runs Lines~6--9 for edge $ij$, then let $C$ be the component containing $ij$ in the dependency graph after finishing the loop of Lines~2--3.
We claim that $i$ and $j$ are the only vertices of $C$ queried in Lines~6--9: Lines~11--16 force that intervals are queried from left to right; thus, since the dependency graph at this point is a proper interval graph, if an interval $i'$ is queried in Line~18, then after that no interval $j'$ with $r_{j'} < r_{i'}$ will have some dependency.
Pick an arbitrary set~$W'$ of the partition consisting of vertices of~$C$.
We merge $\{I_i, I_j\}$ and $W'$ into a single set~$W$ of the partition, and from the previous cases we have that $a(W) / q(W) \leq 5/3$.
\qed
\end{proof}

This proof indicates that the analysis can be improved if we require the graph to have {\em large} components after finishing the loop of Lines~2--3.

\begin{theorem}
\label{teo:properlarge}
Algorithm~\ref{alg:rand} (with $p=0$ or $1$) has query-competitive factor $3/2 + \Oh(1/k)$ if $\delta = 0$ and each component of the dependency graph has size at least $k$ after finishing the loop of Lines~2--3.
\end{theorem}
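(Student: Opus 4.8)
The plan is to reuse the partition-based amortization scheme from Theorems~\ref{teo:rand} and~\ref{teo:proper2comp}, but to account more carefully for how an expensive $2$-component charge (ratio~$2$) gets diluted when it is absorbed into a large component. As in the proof of Theorem~\ref{teo:proper2comp}, after the loop of Lines~2--3 the dependency graph is a proper interval graph, and the subsequent execution of Lines~11--16 and~18 processes intervals essentially from left to right within each component; in particular, once some interval is queried in Line~18, no earlier-ending interval retains a dependency. So within a fixed post-preprocessing component $C$ of size $s \ge k$, the algorithm builds its partition sets $W_1, \ldots, W_t$ (each of type $\{I_x,I_y\}$ with $a/q = 1$, or $\{I_x,I_y,I_z\}$ with $a = 3, q \ge 2$, or a singleton witness from Line~18), plus possibly one set coming from Lines~6--9 when $C$ itself has shrunk to a single edge $ij$.

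First I would argue that, except for that at most one "edge" set per component, every partition set $W_r$ inside $C$ already satisfies $a(W_r)/q(W_r) \le 3/2$: the singleton-witness sets have ratio~$1$, the two-vertex sets from case~1 of the Lines~11--16 analysis have ratio~$1$, and the three-vertex sets have $a(W_r) = 3$ and $q(W_r) \ge 2$, i.e.\ ratio $\le 3/2$. The only offender is the edge set $W_0 = \{I_i, I_j\}$ produced when the component has collapsed to an edge, which in the worst case has $a(W_0) = 2$, $q(W_0) = 1$. As in Theorem~\ref{teo:proper2comp} I would merge $W_0$ into one of the other partition sets $W'$ of $C$ (which exists precisely because $C$ was not a $2$-component to begin with, so $s \ge 3$, and in fact $s \ge k$). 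The key quantitative point is that $C$ contains at least $s - 2 \ge k-2$ intervals outside $W_0$, and these are distributed among the sets $W_1, \ldots, W_t$; hence $\sum_{r \ge 1} q(W_r) \ge \frac{1}{?}$ — more precisely, since each three-vertex set covers $3$ intervals and has $q \ge 2$, and the others are even cheaper, one gets $\sum_{r\ge 1} q(W_r) \ge \lceil (s-2)/3 \rceil \cdot 1$, a quantity that grows linearly in $k$. Summing $a$ and $q$ over all of $C$: $a(C) \le \frac32 q(C) + \tfrac12$ (the $+\tfrac12$ is the excess of the single bad edge over the $3/2$ bound, since $2 = \tfrac32\cdot 1 + \tfrac12$), while $q(C) \ge \Omega(k)$. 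Therefore $a(C)/q(C) \le 3/2 + \tfrac{1}{2 q(C)} \le 3/2 + \Oh(1/k)$.

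Finally I would assemble the global bound: the partition sets coming from Lines~3 and~18 that lie in no post-preprocessing component (the singleton witnesses) have ratio exactly~$1$, and for each component $C$ we have just shown $a(C) \le (3/2 + \Oh(1/k))\, q(C)$; summing over all components and all singletons, and using that the optimum $Q$ satisfies $|Q| = \sum_i q(V_i)$ while the algorithm queries $\sum_i a(V_i)$ intervals, we conclude the algorithm is $(3/2 + \Oh(1/k))$-query-competitive.

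The main obstacle I anticipate is the bookkeeping in the merge step: one must be sure that every post-preprocessing component that ever degenerates to a single edge in Lines~6--9 genuinely contributes at least $\Omega(k)$ to $\sum q$ through its other partition sets, and that no component is charged its bad $+\tfrac12$ twice. This requires checking that the left-to-right processing order guarantees a component contributes \emph{at most one} Lines~6--9 edge set — which follows from the proper-interval structure exactly as in Theorem~\ref{teo:proper2comp} — and then a clean worst-case count of how few $q$-intervals a size-$s$ component can force (the adversary will try to make as many three-vertex cliques with $q = 2$ as possible, giving the $\lceil (s-2)/3\rceil$ bound and hence the constant hidden in $\Oh(1/k)$). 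Everything else is the same amortization already carried out above.
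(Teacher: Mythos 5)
Your overall strategy is exactly the paper's: keep the partition from Theorems~\ref{teo:rand} and~\ref{teo:proper2comp}, observe (via the left-to-right, proper-interval argument of Theorem~\ref{teo:proper2comp}) that each post-preprocessing component contributes at most one Lines~6--9 edge set, merge that one bad set with the rest of its component, and dilute its $+\tfrac12$ excess over an optimum mass of size $\Omega(k)$; the paper's own proof is just a terser version of this merge-and-dilute argument.

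There is, however, one step in your quantitative accounting that is not literally true as stated: you claim that the $s-2$ intervals of $C$ outside $W_0$ ``are distributed among the sets $W_1,\ldots,W_t$,'' and deduce $\sum_{r\ge 1} q(W_r) \ge \lceil (s-2)/3\rceil$. In fact an interval of $C$ can end up in no partition set at all: if all of its dependencies are resolved by queries to its neighbours and it is never queried and never chosen as one of $x,y,z$ (e.g.\ the last vertex of a path whose predecessor's value falls outside it), it is simply left over. So the partition sets inside $C$ need not cover $C$, and your lower bound on $\sum_r q(W_r)$ does not follow from coverage. The conclusion $q(C)=\Omega(k)$ is still correct, but it needs a different justification: at termination the unqueried intervals of $C$ are pairwise independent (Lemma~\ref{lemma:decideind}), so the algorithm's queried intervals in $C$ form a vertex cover of $C$; since $C$ is a connected proper interval graph (hence claw-free) on $s\ge k$ vertices, any vertex cover has size at least $s/3$; every queried interval lies in some partition set, and each non-edge set satisfies $q \ge \tfrac23 a$ (singletons $1\ge\tfrac23$, case-1 pairs $q\ge 1=a$, triples $q\ge 2=\tfrac23\cdot 3$), whence $\sum_{r\ge1} q(W_r) \ge \tfrac23\bigl(\tfrac{s}{3}-2\bigr) = \Omega(k)$. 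Plugging this into your inequality $a(C) \le \tfrac32 q(C) + \tfrac12$ gives the claimed $3/2+\Oh(1/k)$, so the repair is local; but without some argument of this kind (the paper glosses over it with ``from the other cases'') the dilution step is unsupported.
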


\begin{proof}
We only have to reconsider the case of Lines~6--9 in the proof of Theorem~\ref{teo:proper2comp}.
If the algorithm runs Lines~6--9 for edge $ij$, then let $C$ be the component containing $ij$ in the dependency graph after finishing the loop of Lines~2--3.
We merge $\{I_i, I_j\}$ and all partition sets containing vertices of $C$ into a single set $W$ of the partition.
Since $i$ and $j$ are the only vertices of $C$ queried in Lines~6--9 and $C$ has size at least $k$, from the other cases we have that $a(W) / q(W) \leq 3/2 + \Oh(1/k)$.
\qed
\end{proof}

The analysis is tight since we can have a chain of $k$ triangles plus $1$ edge, such that we can force the algorithm to query all intervals, while the optimum can avoid one interval in each triangle and one interval in the extra edge.
For large components, we still have a lower bound of $7/5$ for any deterministic algorithm.

\begin{lemma}
\label{lemma:lowerproperlarge}
Any deterministic adaptive algorithm has query-competitive ratio at least~$7/5$, even if $\delta = 0$ and the dependency graph is a proper interval graph with large components.
\end{lemma}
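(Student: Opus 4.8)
The plan is to follow the same adversarial template used in Lemmas~\ref{lemma:loweradap} and~\ref{lemma:lower2comp}, but to build a gadget large enough that the ratio $7/5$ appears as the ``local'' ratio on a constant-size block, and then to tile that gadget so that the whole dependency graph becomes a proper interval graph with arbitrarily large components. Concretely, I would look for a block of $m$ proper intervals (I expect $m=5$ to be the right size, matching the denominator $5$) whose dependency graph is a connected proper interval graph — a short ``path of triangles'' or a similar overlapping chain — such that, whatever interval the algorithm queries first, the adversary can fix that value to lie strictly inside the overlap of its two neighbors (forcing those to be queried), while simultaneously making sure the revealed values fall outside enough other intervals that the optimum query set has size only $5 \cdot \tfrac{5}{7}$, i.e., the algorithm is forced into $7$ queries where $5$ would suffice. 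As in the earlier lemmas, once one vertex of the block is pinned down the adversary reuses the bad $2$-interval configuration of Lemma~\ref{lemma:loweradap} on the residual sub-blocks to propagate the worst case.

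The key steps, in order, are: (i) describe the block of proper intervals explicitly by their endpoints $\ell_a < \ell_b < \cdots$ and specify which pairs overlap, so that the dependency graph is the intended connected proper interval graph; (ii) do a case analysis over which interval of the block the deterministic algorithm probes first — by the left-right symmetry of a proper-interval chain this reduces to roughly $\lceil m/2 \rceil$ cases — and in each case exhibit actual values $v_i$ that (a) are consistent with all the uncertainty intervals, (b) force the algorithm to spend $7$ queries on the block, and (c) admit an offline certificate of size $5$; (iii) verify the offline bound by checking that, with the chosen values, removing the $5$ ``optimum'' intervals leaves an independent set (no dependent pair survives), invoking Lemma~\ref{lemma:decideind}; and (iv) take $t$ disjoint copies of the block and glue them together with one very wide interval that strictly contains all the others, exactly as in Lemma~\ref{lemma:loweradap}, so the components become large; the extra wide interval contributes a lower-order term and the ratio tends to $7/5$ as $t\to\infty$. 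Since $\delta=0$ throughout, every interval is proper as long as it has positive width, so the ``proper interval graph'' requirement is automatic once the endpoint inequalities are chosen to be strict and nested appropriately.

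The main obstacle I anticipate is step~(ii)–(iii): one must design the block so that \emph{every} first query of the algorithm can be ``punished'' to cost $7$ while the optimum stays at $5$, and simultaneously so that the adversary's forced responses on different first-query cases remain mutually consistent with one fixed interval layout — this is the delicate combinatorial balancing that made Lemma~\ref{lemma:lower2comp} nontrivial, and pushing the denominator from $3$ up to $5$ means the overlap pattern has to be engineered more carefully (in particular the central intervals must each sit inside the overlap of a neighboring pair for the ``query a middle interval'' cases, while still letting the optimum avoid two intervals per block). Everything after the block is constructed — the symmetry reduction, the independence check for the offline bound, and the tiling-plus-wide-interval argument — is routine and parallels the earlier lemmas verbatim.
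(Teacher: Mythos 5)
There are two genuine gaps. First, your per-block target is not achievable as stated: with $\delta=0$ a query reveals the exact value, so each interval is queried at most once, and a block of $m=5$ intervals can never absorb $7$ queries (nor can the optimum have size $5\cdot\tfrac57$, which is not even an integer). The numbers $7$ and $5$ do not come from a single new monolithic gadget that you would have to engineer from scratch; in the paper's construction each block of $7$ vertices is simply the $5$-interval gadget of Lemma~\ref{lemma:lower2comp} (on which the algorithm is forced to query all $5$ while the optimum queries $3$) together with a dependent pair $x_i,y_i$ whose values are placed in $I_{x_i}\cap I_{y_i}$, so that \emph{both} the algorithm and the optimum must query both of them and learn nothing useful. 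This gives $(5+2)/(3+2)=7/5$ per block with no new case analysis at all --- the delicate adversary argument is entirely inherited from Lemma~\ref{lemma:lower2comp}. Your proposal never identifies this ``forced pair'' dilution idea, and without it the hoped-for case analysis in your steps (ii)--(iii) has no concrete construction behind it.

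Second, your gluing step (iv) is wrong for this lemma. Adding one very wide interval that contains all the others, as in Lemma~\ref{lemma:loweradap}, makes that interval dependent on every block; picking one interval from each of three mutually independent blocks then yields an induced $K_{1,3}$, and proper interval graphs are claw-free (equivalently, the containing interval ruins any proper representation). So the resulting dependency graph is \emph{not} a proper interval graph, and the construction would not witness the statement, which explicitly requires a proper interval graph with large components. The forced pairs $x_i,y_i$ in the paper do double duty precisely here: they chain consecutive $5$-gadgets into one long connected proper interval graph of $7k+2$ vertices (so components are large), while being charged equally to the algorithm and the optimum, which is exactly what degrades the ratio from $5/3$ to $7/5$ rather than destroying the graph-class hypothesis.
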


\begin{proof}
Consider the graph of Figure~\ref{fig:lowerlarge}, which has $7k+2$ vertices.
For $i = 0, \ldots, k-1$, vertices $7i + 3, \ldots, 7i+7$ consist in a copy of the instance of Lemma~\ref{lemma:lower2comp}.
For $i = 0, \ldots, k$, vertices $x_i = 7i+1$ and $y_i = 7i+2$ are dependent, $x_i$ is dependent to $7(i-1)+7$ if $i > 0$, and $y_i$ is dependent to $7i+3$ if $i < k$.
We set $v_{x_i}, v_{y_i} \in I_{x_i} \cap I_{y_i}$, so both the algorithm and the optimum must query $I_{x_i}$ and $I_{y_i}$, but querying them gives us no information about the remaining vertices.
From Lemma~\ref{lemma:lower2comp}, we can force any deterministic algorithm to query all vertices in the graph, while the optimum solution can query only $3$ vertices of $7i+3, \ldots, 7i+7$.
\qed
\end{proof}

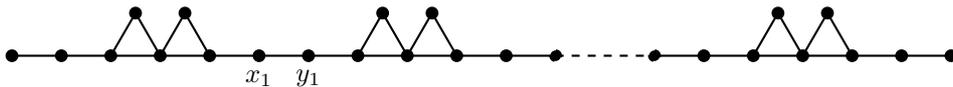
\begin{figure}[!ht]
  \centering
  \tikzstyle{every node}=[circle, draw, fill=black, inner sep=0pt, minimum width=4pt]
  \begin{tikzpicture}[thick, scale=0.65]
   \draw \foreach \x in {0, 1, ..., 10} {
     (\x, 0) node[label=south:\ifthenelse{\x=5}{$x_1$}{\ifthenelse{\x=6}{$y_1$}{}}]{} -- (\x + 1, 0)
   };
   \draw[dashed] (11, 0) node{} -- (13, 0) node{};
   \draw \foreach \x in {13, 14, ..., 18} {
     (\x, 0) -- (\x + 1, 0) node{}
   };
   \draw (2, 0) -- (2.5, 0.87) node{};
   \draw (2.5, 0.87) -- (3, 0);
   \draw (3, 0) -- (3.5, 0.87) node{};
   \draw (3.5, 0.87) -- (4, 0);
   \draw (7, 0) -- (7.5, 0.87) node{};
   \draw (7.5, 0.87) -- (8, 0);
   \draw (8, 0) -- (8.5, 0.87) node{};
   \draw (8.5, 0.87) -- (9, 0);
   \draw (15, 0) -- (15.5, 0.87) node{};
   \draw (15.5, 0.87) -- (16, 0);
   \draw (16, 0) -- (16.5, 0.87) node{};
   \draw (16.5, 0.87) -- (17, 0);
  \end{tikzpicture}
  \caption{Instance which attains the lower bound for proper interval graphs with large components.}
  \label{fig:lowerlarge}
\end{figure}

It remains an open question to close the gap between the lower bound of $7/5$ and the upper bound of $3/2 + \Oh(1/k)$.
Finally, we note that the problem can be solved exactly for laminar families of intervals\footnote{A set of intervals $\{I_1, \ldots, I_n\}$ is a laminar family if, for every $I_i, I_j$ with $I_i \cap I_j \neq \emptyset$, we have that either $I_i \subset I_j$ or $I_j \subset I_i$.}, since all queries will happen at Line~3 of the algorithm.

\begin{theorem}
Algorithm~\ref{alg:rand} obtains an optimum solution if $\delta = 0$ and the intervals constitute a laminar family.
\end{theorem}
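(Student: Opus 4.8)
The plan is to show that for a laminar family with $\delta = 0$, every query made by Algorithm~\ref{alg:rand} happens in Line~3, and moreover that these queries form exactly a collection of disjoint singleton witness sets, so the cost equals the optimum by Fact~\ref{fact:offnecessary}. The starting observation is a structural one: in a laminar family, two intervals are dependent (with $\delta = 0$) precisely when one strictly contains the other, because overlapping-but-incomparable intervals are forbidden, and intervals that are disjoint or share only an endpoint are independent by Lemma~\ref{lemma:decideind}. Hence the dependency graph of a laminar family is a comparability graph of the containment order, and in particular every connected component with an edge has a unique maximal (outermost) interval $I_i$ containing all the others in that component.

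Next I would argue that whenever the dependency graph still has an edge, the \textbf{while} loop of Lines~2--3 must fire, so the algorithm never reaches Lines~4--20 with a nonempty dependency graph. Take a component with an edge and let $I_i$ be its outermost interval, and let $I_j$ be one of its children (a maximal interval strictly inside $I_i$). Since $\delta = 0$ and $I_i \supsetneq I_j$, we have $\ell_i < \ell_j \le r_j < r_i$, which is exactly the condition $I_i \supset [\ell_j - \delta, r_j + \delta]$ in the guard of Line~2 (with $\delta = 0$ this reads $\ell_i < \ell_j$ and $r_j < r_i$). So the loop queries some such $I_i$. After querying, $I_i$ is replaced by $[v_i, v_i]$; I then need to check that the inner bookkeeping keeps the invariant that all remaining dependencies are still "strict containment of a point-or-interval inside an interval," which is handled by the $v_j \in \Vcal$ clause of the same guard — any interval that strictly contains a queried point $v_j$ is caught and queried in turn. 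Iterating, the loop of Lines~2--3 terminates only when no interval strictly contains another interval or any recorded value, i.e., when the dependency graph is edgeless; therefore the \textbf{while} loop of Lines~4--20 is entered with nothing left to do.

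For optimality, I would invoke Fact~\ref{fact:offnecessary}: each interval $I_i$ queried in Line~3 either strictly contains another original interval $I_j$ (whose value $v_j$ lies in $I_i$ since $v_j \in I_j \subsetneq I_i$, so $I_i \supset [v_j - \delta, v_j + \delta] = \{v_j\}$) or strictly contains a previously revealed value $v_j$; in both cases $I_i \supset [v_j - \delta, v_j + \delta]$ for some actual value $v_j$, so by (the generalization of) Fact~\ref{fact:offnecessary} every optimum query set must contain $I_i$. Thus the algorithm's query set is contained in every optimum query set, and since it also solves the instance, it \emph{is} an optimum query set.

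The main obstacle I anticipate is the careful treatment of the recursive clause "$I_i \supset [v_j - \delta, v_j + \delta]$ with $v_j \in \Vcal$" in the guard of Line~2: I must make sure that after a sequence of such queries the set of queried intervals is still forced — in a laminar family this is clean because the revealed values sit inside a nested chain, but one should verify that no interval gets queried spuriously (e.g., an interval that contains a revealed point but would have been independent of everything else after earlier queries). The laminar structure makes this routine, since containment of a point propagates up a single chain, but it is the step that needs to be written out with care.
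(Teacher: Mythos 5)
Your proposal is correct and follows the same route as the paper's (very brief) justification: with $\delta=0$ and a laminar family, every dependency is a strict containment, so the guard of Lines~2--3 fires until the dependency graph is edgeless, meaning all queries happen at Line~3, and each such query is forced in every optimum solution by Fact~\ref{fact:offnecessary}. Your worry about "spurious" queries from the $\Vcal$ clause is resolved exactly as you note: any interval strictly containing a revealed actual value satisfies the hypothesis of Fact~\ref{fact:offnecessary} and is therefore necessary, so no superfluous query can occur.
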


\section{Randomized Adaptive Algorithms for Arbitrary Query Costs}
\label{sec:randcosts}

In this section, we improve over Theorem~\ref{teo:detarbitrary}.
We begin by showing that, when query costs are arbitrary, any deterministic algorithm has query-competitive ratio at least 2, even for proper interval graphs with large components.

\begin{lemma}
\label{lemma:lowerproperlargecosts}
Any deterministic adaptive algorithm for arbitrary query costs has query-competitive ratio at least~$2$, even if $\delta = 0$ and the dependency graph is a proper interval graph with large components.
\end{lemma}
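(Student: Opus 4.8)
The plan is to chain many copies of the two-interval obstruction of Lemma~\ref{lemma:loweradap} into a single proper-interval component, gluing consecutive copies with cheap \emph{connector} intervals that keep the graph connected while preventing the copies from interacting.

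Concretely, fix $\delta = 0$ and build $3k-1$ intervals. For $i = 1, \dots, k$ take a \emph{gadget}: a dependent pair $A_i, B_i$ of unit query cost with $\ell_{A_i} < \ell_{B_i} < r_{A_i} < r_{B_i}$, i.e.\ exactly the bad pair of Lemma~\ref{lemma:loweradap}. For $i = 1, \dots, k-1$ take a connector $C_i$ of query cost $\epsilon > 0$ (cost $0$ also works). Arrange all intervals so that, ordered by increasing left endpoint, they are also ordered by increasing right endpoint — making the dependency graph a proper interval graph — in the order $A_1, B_1, C_1, A_2, B_2, \dots, A_k, B_k$, with consecutive intervals in this list dependent and intervals two apart independent. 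Then the dependency graph is the path on $3k-1$ vertices: one connected proper-interval component, arbitrarily large (and a disjoint union of such paths handles the case of several large components).

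The adversary handles the gadgets independently, as in Lemma~\ref{lemma:loweradap}. Each $v_{C_i}$ is fixed in advance inside $I_{C_i}$, strictly right of $r_{B_i}$ and strictly left of $\ell_{A_{i+1}}$; this interval is nonempty by the arrangement, a single query on $C_i$ then removes both connector edges at $C_i$, and $v_{C_i}$ never affects a gadget. For each gadget, the moment the algorithm first queries one of $A_i, B_i$, the adversary returns a value in the interior of the other (hence in $I_{A_i} \cap I_{B_i}$), which by Lemma~\ref{lemma:decideind} keeps that interval dependent on the queried point and forces the algorithm to query it too. Since the edge $A_iB_i$ can be resolved only by querying $A_i$ or $B_i$, every deterministic adaptive algorithm makes at least two queries per gadget, for total cost at least $2k$. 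The offline optimum instead queries all $k-1$ connectors (cost $(k-1)\epsilon$) and, in each gadget, the single interval whose adversary-chosen value was made to avoid its neighbours (cost $k$); one checks that this resolves every edge, so $\opt \le k + (k-1)\epsilon$. Hence the ratio is at least $2k / (k + (k-1)\epsilon)$, which tends to $2$ as $\epsilon \to 0$ and equals $2$ when the connectors are free.

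The step needing the most care — and the reason arbitrary costs are essential — is the decoupling: one must verify that the connectors can be made cheap enough that the optimum pays, up to a vanishing term, only for the gadgets, and that the adversary's preassigned connector values are consistent with its reactive gadget answers and give the algorithm no cross-gadget leverage (querying a $C_i$ removes only the two connector edges at $C_i$, never a gadget edge). Under uniform costs the connectors would have to be paid in full and the construction would degrade, which is consistent with the $3/2 + \Oh(1/k)$ upper bound of Theorem~\ref{teo:properlarge}.
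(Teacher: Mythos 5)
Your proposal is correct and takes essentially the same route as the paper: both pad a single proper-interval path with intervals of negligible cost $\epsilon$ so that the unit-cost bad pair(s) of Lemma~\ref{lemma:loweradap} dominate the cost, driving the ratio to $2$ as $\epsilon \to 0$. The only (immaterial) difference is that the paper uses one unit-cost gadget and makes all the cheap intervals forced for both the algorithm and the optimum, whereas you repeat the gadget $k$ times and use the cheap intervals only as connectors.
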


\begin{proof}
Consider a path with $2n$ intervals.
The first two intervals have query cost 1, and consist in the bad 2-component instance of Lemma~\ref{lemma:loweradap}.
The other $2n-2$ intervals have query cost $0 < \epsilon \ll 1/(2n)$, and we force them all to be in any solution by setting $v_{2i+1} \in I_{2i+2}$ and $v_{2i+2} \in I_{2i+1}$, for $i = 1, \ldots, n-1$.
\qed
\end{proof}

We can obtain an improvement, however, by using randomization.
First, consider a 2-component with intervals $I_a$ and $I_b$.
A simple strategy is to query $I_b$ first with probability $p = \min \left( 1, \displaystyle\frac{w_a}{2w_b} \right)$ (and $I_a$ with probability $1-p$), and then query the other interval if needed.
If the optimum solution queries both $I_a$ and~$I_b$, then the algorithm is optimal.
If $w_a \geq 2w_b$ then the algorithm is deterministic but clearly pays at most $3/2$ times the optimum.
Else, if the optimum solution is to query only $I_a$, then the algorithm pays at most
\begin{displaymath}
 \left( 1 - \frac{w_a}{2w_b} \right) \cdot w_a + \frac{w_a}{2w_b} \cdot (w_b + w_a)
 = \frac{3}{2} w_a.
\end{displaymath}
If the optimum solution is to query only $I_b$, then the algorithm pays at most
\begin{eqnarray*}
 \frac{w_a}{2w_b} \cdot w_b + \left( 1 - \frac{w_a}{2w_b} \right) \cdot (w_a + w_b)
 & = & w_b \cdot \left( 1 + \frac{w_a}{w_b} \cdot \left( 1 - \frac{w_a}{2w_b} \right) \right) \\
 & = & w_b \cdot \left( 1 + \frac{w_a}{w_b} \cdot \frac{w_b^2 - (w_a - w_b)^2}{2 w_a w_b} \right)\\
 & \leq & w_b \cdot \left( 1 + \frac{w_a}{w_b} \cdot \frac{w_b}{2w_a} \right) = \frac{3}{2} \cdot w_b,
\end{eqnarray*}
where the inequality follows from the fact that $(w_a - w_b)^2 \geq 0$.
Note that this strategy is optimal due to Lemma~\ref{lemma:lowerrand}.
(The same result works if we pick $I_a$ with probability $p' = \min \left( 1, \displaystyle\frac{w_b}{2w_a} \right)$ and $I_b$ with probability $1 - p'$, but we will favor $I_b$ to simplify the discussion hereon.)

Now let us discuss how to obtain a general strategy for arbitrary graphs.
The first key ingredient is the Local Ratio Theorem, which we state below (see, e.g., \cite{baryehuda04localratio} for a proof).

\begin{theorem}[Local Ratio]
\label{teo:localratio}
Consider a minimization problem whose objective function is a linear combination of the solution vector.
Let $w, w^{(1)}, w^{(2)} \in \reais^n$ be cost vectors such that $w = w^{(1)} + w^{(2)}$.
Let $x \in \reais^n$ be a feasible solution, and let $x^*, x^{*(1)}, x^{*(2)} \in \reais^n$ be optimum solutions for costs $w, w^{(1)}, w^{(2)}$, respectively.
If, for some constant $\alpha > 0$, $w^{(1)}x \leq \alpha w^{(1)} x^{*(1)}$ and $w^{(2)}x \leq \alpha w^{(2)} x^{*(2)}$, then $wx \leq \alpha w x^*$.
\end{theorem}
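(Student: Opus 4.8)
The plan is to exploit the fact that the feasible region of the problem does not depend on the cost vector: passing from $w$ to $w^{(1)}$ or $w^{(2)}$ changes only the objective, not which vectors are admissible. First I would record the consequence of this. Since $x^*$ is a feasible solution, it is in particular admissible for the instances whose objectives are $w^{(1)}x$ and $w^{(2)}x$; by optimality of $x^{*(1)}$ and $x^{*(2)}$ for those instances (and because we are minimizing), this gives the two inequalities $w^{(1)} x^{*(1)} \leq w^{(1)} x^*$ and $w^{(2)} x^{*(2)} \leq w^{(2)} x^*$.

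Next I would chain the hypotheses of the theorem with these two observations. Using that the objective is linear in the solution vector, write $wx = w^{(1)}x + w^{(2)}x$. The assumptions $w^{(1)}x \leq \alpha\, w^{(1)} x^{*(1)}$ and $w^{(2)}x \leq \alpha\, w^{(2)} x^{*(2)}$, together with $\alpha > 0$ so that scaling preserves the direction of the inequalities, yield $wx \leq \alpha\,(w^{(1)} x^{*(1)} + w^{(2)} x^{*(2)})$. Substituting the bounds from the first step, $wx \leq \alpha\,(w^{(1)} x^* + w^{(2)} x^*) = \alpha\,(w^{(1)} + w^{(2)})x^* = \alpha\, w x^*$, which is exactly the claimed bound.

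The only point that deserves care — and it is essentially the entire content of the statement — is that ``optimum solution for cost $w^{(i)}$'' must be understood as optimizing over the same feasible set as the original problem, so that $x^*$ may legitimately be compared with $x^{*(i)}$ under the objective $w^{(i)}$. Granting that, the argument is a two-line computation, so I do not expect a real obstacle; the theorem is best viewed as a bookkeeping identity, packaged for repeated use in the local-ratio analysis of the algorithms that follow.
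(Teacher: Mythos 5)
Your proof is correct: the key point that the feasible region is independent of the cost vector, so that $x^*$ can be compared against $x^{*(1)}$ and $x^{*(2)}$ under their respective objectives, together with the linearity $wx = w^{(1)}x + w^{(2)}x$ and $\alpha > 0$, is exactly the standard argument. The paper itself gives no proof (it defers to the cited reference \cite{baryehuda04localratio}), and your two-step chaining is precisely the proof found there, so there is nothing to add.
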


We use this result to eliminate triangles in the dependency graph.
We begin by querying intervals with zero query cost, since they do not change the solution cost.
Then we consider an arbitrary triangle $abc$ and let $i$ denote the interval in $\{a,b,c\}$ of minimum query cost.
We set $w_j \recebe w_j - w_i$ for $j \in \{a, b, c\}$, which by the Local Ratio theorem makes progress towards a 3/2-approximation. 

After breaking all triangles, the dependency graph becomes a forest, since it is a chordal graph.
In fact, it becomes a caterpillar (where all vertices are at distance at most 1 from a longest path), since co-TT graphs cannot contain the graph of Figure~\ref{fig:asteroid1} as an induced subgraph~\cite{monma88ttolerance}.
We identify a longest path in a component (which can be done in polynomial time for trees) and let $a,b,c$ be its first three nodes.
We focus on the node $b$ and the set $N_1(b)$ of the neighbors of $b$ excluding $c$.
Since the graph is a caterpillar, note that $b$ is the only neighbor of the vertices in $N_1(b)$.
Let $w(N_1(b)) = \sum_{i \in N_1(b)} w_i$.
We query $I_b$ with probability $p = \min \left( 1, \displaystyle\frac{w(N_1(b))}{2w_b} \right)$, and we query all intervals in $N_1(b)$ with probability $1-p$.

We repeat this strategy and query singleton witness sets until the dependency graph has no edges.
A pseudocode is presented in Algorithm~\ref{alg:randcost}.
Note that, after all triangles are broken, the longest path of each component can be computed just once; we enforce this, so that we can assume that the longest path is consistent between executions of Lines~10--14 in the same component.

\begin{algorithm}[!ht]
\SetAlgoNoEnd
\KwIn{$(I_1, \ldots, I_n, w)$}
$\Vcal \recebe \emptyset$\;
\While{there is some dependency}{
 \uIf{some $I_i$ has $w_i = 0$}{
  query $I_i$, add $v_i$ to $\Vcal$\;  
 }
 \uElseIf{there is a triangle $abc$}{
  \KwLet $i = \arg \min_{i \in \{a, b, c\}} \{ w_i \}$\;
  \ForEach{$j \in \{a, b, c\}$}{
   $w_j \recebe w_j - w_i$\;
  }
 }
 \Else{
  \KwLet $P = abc\cdots$ be a longest path in a component\;
  \WithProb{$p = \min \left( 1, \displaystyle\frac{w(N_1(b))}{2w_b} \right)$}{
   query $I_b$, add $v_b$ to $\Vcal$\;
  }
  \ElseForEach{$j \in N_1(b)$}{
   query $I_j$, add $v_j$ to $\Vcal$\;
  }
 }
 \While{there is $I_i \supset [v_j - \delta, v_j + \delta]$ for some $v_j \in \Vcal$}{
  query $I_i$, add $v_i$ to $\Vcal$\;
 }
}
\vspace{0.2cm}
\caption{\label{alg:randcost} Randomized adaptive algorithm for arbitrary query costs.}
\end{algorithm}

\begin{theorem}
Algorithm~\ref{alg:randcost} has query-competitive ratio $57/32 = 1.78125$ for arbitrary query costs.
\end{theorem}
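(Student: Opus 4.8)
The plan is to use the Local Ratio Theorem (Theorem~\ref{teo:localratio}) together with the witness-set bookkeeping from the proof of Theorem~\ref{teo:rand} to show that the query cost paid by Algorithm~\ref{alg:randcost} is at most $\tfrac{57}{32}$ times the cost of an optimum query set $Q$. As usual, intervals of zero cost contribute nothing, so the analysis focuses on the three remaining operations: triangle peeling (Lines~5--8), the randomized choice at a path endpoint (Lines~10--14), and the singleton-witness cleanup (Lines~15--16).

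First I would justify the triangle-peeling step. For a triangle $abc$ with cheapest vertex $i$, any query set must query at least two of $a,b,c$ (otherwise a dependent pair survives), so the cost vector $w^{(1)}$ that equals $w_i$ on $a,b,c$ and $0$ elsewhere satisfies: any feasible solution pays at least $2w_i$ against $w^{(1)}$, whereas the algorithm pays at most $3w_i$; hence the algorithm is $\tfrac32$-competitive with respect to $w^{(1)}$. Setting $w^{(2)} = w - w^{(1)}$ (which is nonnegative since $i$ is cheapest) and invoking the Local Ratio Theorem with $\alpha = \tfrac{57}{32}$, it suffices to bound the algorithm's cost against the residual costs $w^{(2)}$ — i.e., to prove the bound once all triangles have been removed and the dependency graph is a caterpillar forest.

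Next I would analyze the caterpillar phase. Fix a component, let $P = abc\cdots$ be the longest path chosen once for that component, and let $N_1(b)$ be the leaves hanging off $b$ (together with $a$). Set $\sigma = w(N_1(b))$ and $\beta = w_b$. The optimum $Q$ must either query $I_b$ (covering all the pendant edges at once) or query every interval of $N_1(b)$; in the first case $Q$ pays at least $\beta$ on these intervals, in the second at least $\sigma$. I would bound the algorithm's expected cost in each case exactly as in the $2$-component warm-up computation preceding Algorithm~\ref{alg:randcost}: with $p = \min(1, \sigma/(2\beta))$, if $Q$ queries only $I_b$ the algorithm pays at most $\tfrac32\beta$, and if $Q$ queries only $N_1(b)$ the algorithm pays at most $\tfrac32\sigma$ — the same $(w_a-w_b)^2 \ge 0$ trick applies with $w_a$ replaced by $\sigma$ and $w_b$ by $\beta$. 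The subtlety, and where the ratio degrades from $\tfrac32$ to $\tfrac{57}{32}$, is that after querying $I_b$ (or $N_1(b)$) the cleanup loop (Lines~15--16) may force further queries — namely the intervals that become singleton witness sets under Fact~\ref{fact:offnecessary} — and those do not immediately "belong" to a witness set paid for by $Q$ in this local step. I would form a partition of the intervals in the spirit of Theorem~\ref{teo:rand}: each Line~15--16 query that is a genuine singleton witness set is its own block with ratio $1$; the danger is a query whose necessity only shows up relative to $Q$ via a chain of refinements. Charging those to the node $b$ being processed, and accounting for the worst interaction between the randomized payment at $b$ and the forced cleanup on the two remaining path-endpoint nodes that inherit the structure, yields the constant $\tfrac{57}{32}$ after combining the two Local-Ratio pieces (the $\tfrac32$ from triangles against $w^{(1)}$ and the caterpillar bound against $w^{(2)}$).

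The main obstacle will be the bookkeeping in the caterpillar phase: showing that every interval queried in the cleanup loop can be assigned to a block of the partition on which the algorithm-to-optimum ratio is controlled, and that these blocks are disjoint across the (possibly many) iterations within a single component. The key structural facts I would lean on are that the dependency graph remains a proper interval graph / caterpillar throughout (co-TT graphs are chordal and forbid the subgraph of Figure~\ref{fig:asteroid1}), that processing always moves along a fixed longest path of the component so that a queried interval's "left side" is permanently resolved, and that $b$ is the unique neighbor of every vertex in $N_1(b)$ — this last point is what makes the $2$-component-style probability choice applicable verbatim. Once those disjointness and locality claims are nailed down, the numerical optimization over the worst case is the routine calculation that produces $57/32$, and the subsequent improvement to $1 + \tfrac{4}{3\sqrt3}$ is exactly the remark that re-optimizing the probability $p$ sharpens this constant.
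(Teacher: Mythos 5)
Your overall architecture matches the paper's proof (zero-cost queries are free, Line~16 queries are in the optimum by Fact~\ref{fact:offnecessary}, triangles are peeled by the Local Ratio Theorem with a $3/2$ guarantee on $w^{(1)}$, and the residual caterpillar is handled by the randomized trial at $b$ with the $2$-component probability). But there is a genuine gap in the caterpillar phase, and it sits exactly where the constant $57/32$ comes from. You claim that when the optimum queries only $I_b$ the algorithm pays at most $\tfrac32 w_b$ on this block, ``the same $(w_a-w_b)^2\ge 0$ trick.'' That is false for this algorithm: with probability $1-p$ the first trial queries all of $N_1(b)$, whose cost $w(N_1(b))$ is wasted (those intervals are not in the optimum) and must be charged to $I_b$; moreover $I_b$ is \emph{still unresolved} toward $c$, so it is involved in a second random trial (at $c$, as part of $N_1(c)$) where its expected charge is again up to $\tfrac32 w_b$. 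The correct bound for this case is therefore $\frac{w(N_1(b))}{2w_b}\cdot w_b + \bigl(1-\frac{w(N_1(b))}{2w_b}\bigr)\bigl(w(N_1(b))+\tfrac32 w_b\bigr) \le \tfrac{57}{32}\,w_b$, the maximum being attained at $w(N_1(b))=\tfrac34 w_b$ (this is precisely the tight path instance $w_a=1$, $w_b=w_c=4/3$ mentioned after the theorem). If your $3/2$ claim were right, the whole algorithm would be $3/2$-competitive, contradicting that tight example; so the missing piece is not a ``routine numerical optimization'' but the two-trial charging scheme itself, together with the structural observation that $I_b$ can be involved in at most one further trial and that each interval not in the optimum is charged only to optimum neighbors in trials involving it (shared in proportion to cost).

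A secondary misdiagnosis: you attribute the degradation from $3/2$ to $57/32$ to the cleanup loop of Lines~15--16. Those queries are the harmless ones --- by Fact~\ref{fact:offnecessary} they belong to every optimum solution and are charged to themselves at ratio $1$. The degradation comes solely from the interaction of two consecutive random trials both involving $I_b$, as described above. Once that charging argument is in place, the rest of your plan (disjointness of blocks, the fixed longest path making $b$ the unique neighbor of $N_1(b)$, and the Local Ratio composition giving $\max(3/2,\,57/32)=57/32$) is sound and follows the paper's route.
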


\begin{proof}
If an interval is queried at Line~4, then it does not change the solution cost.
Also, recall that, due to Fact~\ref{fact:offnecessary}, every interval queried in Line~16 is in the optimum solution.

If the dependency graph contains triangles, then we do an induction.
The base case is when all intervals have zero cost, which has a trivial optimum solution, or when the dependency graph is a forest, for which we will prove later that the algorithm has factor $57/32$.
If the algorithm considers a triangle $abc$ in Lines~6--8, then we apply the Local Ratio Theorem with
$w^{(1)}(j) = \left\{ \begin{array}{ll}
 w_i, & \mbox{if } j \in \{a, b, c\}\\
 0, & \mbox{otherwise}
\end{array} \right.$
and $w^{(2)} = w - w^{(1)}$, where~$w$ is the cost vector before Line~7.
By induction hypothesis, the solution returned by the algorithm has query-competitive factor $57/32$ for $w^{(2)}$.
The solution returned by the algorithm in the worst case queries all intervals in $abc$; since in any triangle at least two intervals must be queried, the solution costs at most $3/2$ times the optimum for $w^{(1)}$.
By the Local Ratio Theorem, the algorithm has query-competitive factor $57/32$ for $w$.
In the following, we focus on the case where each component of the dependency graph is a tree (specifically, a caterpillar).

Let us define some terminology.
A {\bf random trial} is an execution of Lines~11--14 of the algorithm.
We say that a random trial {\bf involves} interval $I_i$, and $I_i$ {\bf is involved} in the random trial, if $i \in N_1(b) \cup \{b\}$.
An interval $I_i$ {\bf is queried in} a random trial if it is queried in an execution of Line~12 or~14 which is part of that random trial.
Note that some intervals involved in a random trial may not be queried in the random trial, but may be queried later on, either in another random trial or in Line~16.

We bound the cost of the returned solution by assigning a cost share to each interval in the optimum solution.
Intervals that are queried and are in the optimum solution have their cost assigned to themselves.
Intervals that are queried and are not in the optimum solution have their cost assigned to their neighbors in the optimum solution.
However, we only assign a query cost $w_j$ to an interval $I_i$ with $i \neq j$ if $I_j$ is queried in a random trial involving $I_i$.
(Note that intervals that are not in the optimum solution can only be queried in random trials.)
Also, if $I_j$ has more than one neighbor, its cost is shared among its neighbors in proportion to their cost.

Let us bound the cost share for each interval in the optimum solution.
Note that we do not have to care about intervals that are not involved in random trials, since they are not assigned extra cost.
Let $P$ and $b$ be as defined in Line~10.
Note that, if the optimum solution does not query $I_b$, then it must query all neighbors of $b$.
If $w(N_1(b)) \geq 2w_b$, then the algorithm deterministically queries $I_b$ first, and spends at most $3/2$ times the portion of the optimum solution contained in $N_1(b) \cup \{b\}$.
If $w(N_1(b)) < 2w_b$, then we consider two cases.

\begin{enumerate}
 \item If the optimum solution does not query $I_b$, then it queries all of $N_1(b)$. The expected total cost share for intervals in $N_1(b)$ will be at most
\begin{displaymath}
 \left( 1 - \frac{w(N_1(b))}{2w_b} \right) \cdot w(N_1(b)) + \frac{w(N_1(b))}{2w_b} \cdot (w_b + w(N_1(b)))
 = \frac{3}{2} w(N_1(b)).
\end{displaymath}
Remember that we split this cost among the intervals in $N_1(b)$ in proportion to their cost, so the expected cost share for $j \in N_1(b)$ will be at most $\frac{3}{2} w_j$.
Note that we do not assign any cost share to~$I_b$, since it is not in the optimum solution.
Also, if $I_b$ is not queried in this trial but is queried later on, its cost will not be assigned to $N_1(b)$.

 \item If the optimum solution queries $I_b$, then the algorithm may not query $I_b$ in the trial between~$b$ and $N_1(b)$, but in this case $I_b$ will be involved in the trial between $c$ and the neighbors of $c$, and in no further trials.
All intervals queried in the first trial that are not in the optimum solution will have their cost assigned to $I_b$.
For the second trial, from the previous case argument, the expected cost share of $I_b$ is at most $\frac{3}{2} w_b$ (or at most $w_b$ if the optimum solution also queries $I_c$).
Therefore, the total expected cost share of~$I_b$ for both trials will be at most
\begin{eqnarray*}
 & & \frac{w(N_1(b))}{2 w_b} \cdot w_b + \left( 1 - \frac{w(N_1(b))}{2 w_b} \right) \left( w(N_1(b)) + \frac{3}{2} w_b \right) \\
 & = & \frac{w_b w(N_1(b))}{2 w_b} + w(N_1(b)) + \frac{3}{2} w_b - \frac{w^2(N_1(b))}{2 w_b} - \frac{3 w(N_1(b))}{4} \\
 & = & \frac{6 w_b^2 + 3 w_b w(N_1(b)) - 2 w^2(N_1(b))}{4 w_b} \cdot \frac{8}{8} \\
 & = & \frac{57 w_b^2 - 9 w_b^2 + 24 w_b w(N_1(b)) - 16 w^2(N_1(b))}{32 w_b} \leq \frac{57}{32} w_b,
\end{eqnarray*}
where the inequality uses the fact that $(3 w_b - 4 w(N_1(b)))^2 \geq 0$.
The expected cost share for $j \in N_1(b)$ such that $I_j$ is the optimum solution will be at most $w_j$.\qed
\end{enumerate}
\end{proof}

The analysis is tight if we consider a path with three intervals $I_a, I_b, I_c$ with $\delta = 0$, $w_a = 1$, $w_b = w_c = 4/3$, $v_a \in I_a \setminus I_b$, $v_b \in I_b \setminus (I_a \cup I_c)$ and $v_c \in I_b$.
Note, however, that this does not give us an improved lower bound, since for paths of length 3 we can use a similar strategy as that for 2-components, doing a randomized trial that considers querying either $I_b$, or $I_a$ and~$I_c$.

However, the analysis indicates us that there is room for improvement if we change the probabilities a little bit.
The probabilities were chosen to guarantee that the algorithm performs well for a 2-component, no matter if $I_a$ or $I_b$ is the best option.
The example in the previous paragraph shows that we cannot improve the bound when $I_b$ is the best option, but we may use a higher probability for $I_b$, thus incurring some loss in the bound when $I_a$ is the best option, but improving the bound when $I_b$ is the best option.
We claim that the ratio is minimum when $p = \min \left( 1, \displaystyle\frac{w(N_1(b))}{w_b\sqrt{3}} \right)$, in which case the query-competitive factor is improved to $1 + \frac{4}{3\sqrt{3}} \approx 1.7698$.

\begin{theorem}
Algorithm~\ref{alg:randcost} has query-competitive ratio $1 + \frac{4}{3\sqrt{3}}$ for arbitrary query costs if we replace $p = \min \left( 1, \displaystyle\frac{w(N_1(b))}{w_b\sqrt{3}} \right)$ in Line~11.
\end{theorem}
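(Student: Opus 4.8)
The plan is to re-use the proof of the previous theorem essentially verbatim, since its combinatorial skeleton does not depend on the value of $p$. In particular, the Local Ratio induction that removes zero-cost intervals and triangles, the reduction to the caterpillar case, the notions of random trial and cost share, the fact that intervals queried in Line~16 belong to every optimum solution, the assignment of the cost of a queried non-optimal interval to its neighbours in the optimum, and the proportional splitting of that cost among several optimal neighbours are all unchanged. The only quantities that depend on $p$ are (i)~the threshold of Line~11, which now separates the regime $w(N_1(b)) \geq w_b\sqrt{3}$ (where the algorithm queries $I_b$ deterministically) from $w(N_1(b)) < w_b\sqrt{3}$; and (ii)~the expected cost shares computed in the two cases ``the optimum does not query $I_b$'' and ``the optimum queries $I_b$''. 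So it suffices to redo those bounds with $p = \min\bigl(1, \frac{w(N_1(b))}{w_b\sqrt{3}}\bigr)$.

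In the deterministic regime $w(N_1(b)) \geq w_b\sqrt{3}$ the algorithm queries $I_b$ first and afterwards only the intervals of $N_1(b)$ that remain dependent, which are exactly the ones queried by any optimum solution that also queries $I_b$; hence if the optimum queries $I_b$ the algorithm pays at most the optimum on $\{I_b\} \cup N_1(b)$, while if the optimum avoids $I_b$ (and thus queries all of $N_1(b)$) the algorithm pays at most $w_b + w(N_1(b)) \leq (1 + \frac{1}{\sqrt{3}})\,w(N_1(b))$. In both subcases the ratio on this part of the partition is at most $1 + \frac{1}{\sqrt{3}} < 1 + \frac{4}{3\sqrt{3}}$.

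In the randomized regime the case where the optimum does not query $I_b$ is handled as before: the expected total cost share of $N_1(b)$ equals $(1-p)\,w(N_1(b)) + p\,(w_b + w(N_1(b))) = w(N_1(b)) + p\,w_b = (1 + \frac{1}{\sqrt{3}})\,w(N_1(b))$, which split proportionally gives each $I_j$ with $j \in N_1(b)$ an expected share of at most $(1 + \frac{1}{\sqrt{3}})\,w_j \leq (1 + \frac{4}{3\sqrt{3}})\,w_j$. The binding case is when the optimum queries $I_b$: with probability $p$ the algorithm queries $I_b$ in its own trial (share $w_b$, charged to itself), and with probability $1-p$ it instead queries all of $N_1(b)$ (worst case none of them in the optimum, a charge of $w(N_1(b))$ to $I_b$) and $I_b$ is then involved once more, in the later trial centred at the path vertex $c$, where by the previous case its share is at most $(1 + \frac{1}{\sqrt{3}})\,w_b$. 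Writing $t = w(N_1(b))/w_b$, dividing by $w_b$ and substituting $p = t/\sqrt{3}$, the total expected share of $I_b$ becomes
\begin{displaymath}
 \frac{t}{\sqrt{3}} + \Bigl(1 - \frac{t}{\sqrt{3}}\Bigr)\Bigl(t + 1 + \frac{1}{\sqrt{3}}\Bigr) = 1 + \frac{1}{\sqrt{3}} + \frac{2t}{3} - \frac{t^2}{\sqrt{3}},
\end{displaymath}
a downward parabola in $t$ whose maximum, attained at $t = \frac{1}{\sqrt{3}}$ (which lies in the admissible range $t < \sqrt{3}$), equals $1 + \frac{4}{3\sqrt{3}}$; equivalently, $1 + \frac{4}{3\sqrt{3}}$ minus the displayed expression equals $\frac{1}{\sqrt{3}}\bigl(t - \frac{1}{\sqrt{3}}\bigr)^2 = \frac{(\sqrt{3}\,w(N_1(b)) - w_b)^2}{3\sqrt{3}\,w_b^2} \geq 0$. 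Feeding these bounds into the same partition argument as before yields the stated factor.

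The work is essentially bookkeeping rather than new ideas: the two points to watch are that enlarging $p$ beyond the value used in the $57/32$ analysis does not spoil the case where the optimum avoids $I_b$ (it only raises that ratio to $1 + \frac{1}{\sqrt{3}}$, still below $1 + \frac{4}{3\sqrt{3}}$), and that the share used for the ``second trial'' of $I_b$ must be the one induced by the new $p$, namely $1 + \frac{1}{\sqrt{3}}$ rather than $3/2$. Finally, one may note that $p = \frac{w(N_1(b))}{w_b\sqrt{3}}$ is exactly the minimizer: writing $p = c\,\frac{w(N_1(b))}{w_b}$, the maximum over $t$ of the expression above becomes $1 + c + \frac{(1-c^2)^2}{4c}$, and minimizing over $c$ leads to $3c^4 + 2c^2 - 1 = 0$, whose unique positive root $c = \frac{1}{\sqrt{3}}$ gives the value $1 + \frac{4}{3\sqrt{3}}$.
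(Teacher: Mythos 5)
Your proposal is correct and follows essentially the same route as the paper: reduce to the caterpillar case via the unchanged Local Ratio/charging machinery, then redo the two cost-share bounds with the new $p$, obtaining $\bigl(1+\tfrac{1}{\sqrt{3}}\bigr)w(N_1(b))$ when the optimum avoids $I_b$ and, when it queries $I_b$, a quadratic in $w(N_1(b))/w_b$ maximized at $1+\tfrac{4}{3\sqrt{3}}$ — exactly the paper's computation, where the slack term $\tfrac{1}{\sqrt{3}}\bigl(t-\tfrac{1}{\sqrt{3}}\bigr)^2$ corresponds to the paper's use of $(w_b-\sqrt{3}\,w(N_1(b)))^2\geq 0$. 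Your closing remark on the optimality of $c=\tfrac{1}{\sqrt{3}}$ mirrors the paper's separate discussion following the theorem and is not needed for the statement itself.
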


\begin{proof}
We only need to reanalyze the case when the dependency graph is a caterpillar.
Let~$P$ and $b$ be as defined in Line~10.
If $w(N_1(b)) \geq w_b\sqrt{3}$, then the algorithm deterministically queries $I_b$ first, and spends at most $1 + \frac{1}{\sqrt{3}} \approx 1.578$ times the portion of the optimum solution contained in $N_1(b) \cup \{b\}$.
If $w(N_1(b)) < w_b\sqrt{3}$, then we consider two cases.

\begin{enumerate}
 \item If the optimum solution does not query $I_b$, then the expected total cost share for intervals in $N_1(b)$ will be at most
\begin{displaymath}
 \left( 1 - \frac{w(N_1(b))}{w_b\sqrt{3}} \right) \cdot w(N_1(b)) + \frac{w(N_1(b))}{w_b\sqrt{3}} \cdot (w_b + w(N_1(b)))
 = \left( 1 + \frac{1}{\sqrt{3}} \right) \cdot w(N_1(b)).
\end{displaymath}

 \item If the optimum solution queries $I_b$, then the total expected cost share of $I_b$ for both trials will be at most
\begin{eqnarray*}
 & & \frac{w(N_1(b))}{w_b \sqrt{3}} \cdot w_b + \left( 1 - \frac{w(N_1(b))}{w_b \sqrt{3}} \right) \left( w(N_1(b)) + \left( 1 + \frac{1}{\sqrt{3}} \right) \cdot w_b \right) \\
 & = & \frac{(3 + \sqrt{3}) w_b^2 + 2 w_b w(N_1(b)) - \sqrt{3} w^2(N_1(b))}{3 w_b} \cdot \frac{\sqrt{3}}{\sqrt{3}} \\
 & = & \frac{(4 + 3\sqrt{3}) w_b^2 - w_b^2 + 2 \sqrt{3} w_b w(N_1(b)) - 3 w^2(N_1(b))}{3 \sqrt{3} w_b} \leq \left( 1 + \frac{4}{3\sqrt{3}} \right) \cdot w_b,
\end{eqnarray*}
where the inequality uses the fact that $(w_b - \sqrt{3}w(N_1(b)))^2 \geq 0$,
and the expected cost share for $j \in N_1(b)$ such that $I_j$ is the optimum solution will be at most $w_j$.
\qed
\end{enumerate}
\end{proof}

The analysis is tight if we consider a path with three intervals $I_a, I_b, I_c$ with $\delta = 0$, $w_a = 1$, $w_b = w_c = \sqrt{3}$, $v_a \in I_a \setminus I_b$, $v_b \in I_b \setminus (I_a \cup I_c)$ and $v_c \in I_b$.

Now let us prove that this is the best possible factor that can be obtained for this framework.
We are using a probability of the form $p = \min \left( 1, \displaystyle\frac{w_a}{w_b} (\alpha - 1) \right)$.
When the optimal choice is to query $I_a$, we obtain a guarantee in the form
\begin{displaymath}
\frac{\left( 1 - \displaystyle\frac{w_a}{w_b} (\alpha - 1) \right) \cdot w_a + \displaystyle\frac{w_a}{w_b} (\alpha - 1) \cdot (w_b + w_a)}{w_a} = \alpha.
\end{displaymath}
When the optimal choice is to query $I_b$, this gives us a guarantee of
\begin{displaymath}
\frac{\displaystyle\frac{w_a}{w_b} (\alpha - 1) \cdot w_b + \left( 1 - \displaystyle\frac{w_a}{w_b} (\alpha - 1) \right) ( w_a + \alpha \cdot w_b )}{w_b}.
\end{displaymath}
Let us define this latter guarantee as a function 
\begin{displaymath}
\gamma(x, y) = \frac{\displaystyle\frac{x}{y} (\alpha - 1) \cdot y + \left( 1 - \displaystyle\frac{x}{y} (\alpha - 1) \right) ( x + \alpha y )}{y}
 = \frac{x^2}{y^2} (1 - \alpha) + \frac{x}{y} \cdot \alpha(2 - \alpha) + \alpha.
\end{displaymath}
We want to find $1 < \alpha < 2$ that minimizes the maximum of $\gamma(x, y)$ for all $x,y > 0$.
Since $y > 0$, we have a critical point when
\begin{displaymath}
\left\{
 \begin{array}{l}
  \displaystyle\frac{\partial \gamma}{\partial x} = \displaystyle\frac{2x}{y^2}(1 - \alpha) + \displaystyle\frac{\alpha(2 - \alpha)}{y} = 0 \\
  \displaystyle\frac{\partial \gamma}{\partial y} = \displaystyle\frac{2x^2}{y^3}(\alpha - 1) + \displaystyle\frac{x\alpha(\alpha-1)}{y^2} = 0 \\
 \end{array}
\right. \Rightarrow 2x(1 - \alpha) + y\alpha(2 - \alpha) = 0
\Rightarrow \frac{x}{y} = \frac{\alpha(2 - \alpha)}{2(\alpha - 1)}.
\end{displaymath}
In that case, we have that
\begin{displaymath}
 \gamma(\alpha) = \left( \frac{\alpha(2 - \alpha)}{2(\alpha - 1)} \right)^2 \cdot (1-\alpha) + \left( \frac{\alpha(2 - \alpha)}{2(\alpha - 1)} \right) \cdot \alpha (2 - \alpha) + \alpha
 = \frac{\alpha^2 (2-\alpha)^2}{4(\alpha-1)}+\alpha,
\end{displaymath}
whose critical points have
\begin{displaymath}
 \frac{d\gamma}{d\alpha} = \frac{(\alpha^2 - 2\alpha + 2)(3\alpha^2 - 6 \alpha + 2)}{4(\alpha-1)^2} = 0,
\end{displaymath}
so $\alpha = 1 - \frac{1}{\sqrt{3}}$ or $\alpha = 1 + \frac{1}{\sqrt{3}}$.
Since we are looking for $1 < \alpha < 2$, we stick with the latter.

\section{Adaptive Algorithms with Queries Returning Intervals}
\label{sec:cpcp}

In this section, we investigate a variant of the problem in which a query may not necessarily return the exact value, but instead may return a more refined interval.
We do not make any assumption on how more refined the queried interval is; even though this seems too arbitrary, the adversary also has to deal with the same uncertainty, so we can indeed devise competitive algorithms.
We may even allow a query to return the same interval as before, since this is equivalent to having a very small improvement.
We must assume, however, that the relative order between two dependent intervals can be solved after a finite number of queries, which may yield an output size that is super-polynomial in the input size.
This model was proposed in \cite{gupta16queryselection}.
For this problem, we assume what they call the CP-CP model, in which the input consists of closed intervals and points, and a query can return closed intervals and points as well; in \cite{gupta16queryselection} it was shown that this assumption is no more restricted than if we also allow open intervals.

This problem has a lower bound of $2$ on the query-competitive ratio, even for randomized algorithms.
This is because in~\cite[Section~9]{megow17mst} it was proven that, if queries are allowed to return intervals, then any randomized algorithm that decides the relative order between two intervals has expected query-competitive ratio at least $2$.
We show that this lower bound applies even if query costs are uniform and the dependency graph is a proper interval graph with large components.
We present the result for deterministic algorithms, but it can be adapted for randomized algorithms using the same idea.

\begin{lemma}
\label{lemma:lowercpcp}
Any deterministic adaptive algorithm for the CP-CP model has query-competitive ratio at least~$2$, even if $\delta = 0$ and the dependency graph is a proper interval graph with large components.
\end{lemma}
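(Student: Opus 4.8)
The plan is to realize the two-interval lower bound of Megow, Mei{\ss}ner and Skutella~\cite{megow17mst} inside a single, arbitrarily large proper interval component. First I would recall the mechanism behind their Section~9 bound: for one pair of dependent intervals $I_a,I_b$ with $\delta=0$ and equal query cost, an adversary forces any deterministic algorithm to make at least two queries while the offline optimum makes only one. The point is that the algorithm's first query on either interval of the pair can be answered with an uninformative refinement (even the interval itself), so the relative order still cannot be decided by Lemma~\ref{lemma:decideind}; the order is revealed only once some interval of the pair has been queried a second time, whereas the offline optimum, knowing the true values, picks the single interval whose value settles the order.

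Next I would build the large component as a path of $2k$ unit intervals $I_1,\ldots,I_{2k}$ with $\ell_1<\cdots<\ell_{2k}$, chosen so that consecutive intervals are dependent and non-consecutive ones are independent; the dependency graph is then the path $P_{2k}$, a proper interval graph forming one component of unbounded size, and one gets several large components by taking disjoint copies. I would fix the true values $v_1<\cdots<v_{2k}$ so that each $v_j$ falls in the gap between the intervals of its two path-neighbours, i.e.\ $r_{j-1}\le v_j\le\ell_{j+1}$ for interior vertices (with the obvious one-sided versions at the ends); this is compatible with the path structure and has the key property that pinning any single $I_j$ to $v_j$ decides the order on both edges incident to $I_j$. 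Consequently the offline optimum resolves the whole instance by querying (once each) the $k$ intervals of a minimum vertex cover of $P_{2k}$. On the algorithmic side I would apply the adversary of the first step edge by edge: the algorithm's first query on any interval returns an uninformative refinement, so to decide the order on an edge it must query one of its two endpoints a second time; hence the set of intervals queried at least twice is a vertex cover of $P_{2k}$, of size at least $k$, and the algorithm makes at least $2k$ queries. This yields ratio at least $2k/k=2$ with uniform costs, $\delta=0$ and an arbitrarily large proper interval component. The randomized version follows exactly as in Lemma~\ref{lemma:lowerrand}, via Yao's principle on the two symmetric placements of the true values on each edge.

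The main obstacle is carrying out the online-versus-offline bookkeeping precisely, mirroring the two-interval case: one must verify that the adversary can keep each gadget's first query uninformative long enough that the deterministic algorithm is charged the full $2k$, while the offline optimum still needs only $k$ queries, one per vertex-cover interval. Two points need care. First, the ``uninformative first query'' responses must be consistent with the standing assumption that the order of a dependent pair becomes decidable after finitely many queries, so the adversary commits to revealing the exact value on the second query of an interval; this is precisely why one extra query per covered edge suffices for the algorithm and the bound is exactly $2$ rather than larger. Second, the gadgets strung along the path must not leak information to one another: a query on $I_j$ must tell the algorithm nothing about the order of any edge not incident to $I_j$, which follows because $I_j$ is independent of all its non-neighbours in $P_{2k}$. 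Checking that the ``gap'' placement of the values is simultaneously realizable at every vertex while keeping the graph a proper interval graph is routine.
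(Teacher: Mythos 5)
There is a genuine gap, and it is in the central accounting of your adversary. In the CP-CP model the offline optimum is evaluated on the \emph{same} completed instance as the algorithm, i.e., on the same per-interval sequence of query responses. You commit the adversary to answering the first query on every interval with an uninformative refinement and revealing the exact value only on the second query. Under that commitment, a single query on a vertex-cover interval is uninformative for the optimum as well, so the optimum cannot resolve the path with $k$ single queries: it needs two queries per cover vertex, i.e., $2k$ queries, and your charging gives ratio $2k/2k=1$, not $2$. If instead you let first queries reveal exact values (so that the optimum really pays $1$ per cover vertex), the instance degenerates to the exact-value model, and there no adversary can force ratio $2$ on a proper interval graph with large components and uniform costs: the paper's own Algorithm~\ref{alg:rand} achieves $3/2+\Oh(1/k)$ on such instances (Theorem~\ref{teo:properlarge}), and the known deterministic lower bound is only $7/5$ (Lemma~\ref{lemma:lowerproperlarge}). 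One cannot simultaneously keep first queries uninformative for the algorithm and informative for the optimum, and trying to do it adaptively per edge also breaks down at interior path vertices, which belong to two edges and have a single response sequence.

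The paper's proof is structured precisely to avoid this trap. It takes a path of $2n$ intervals in which the first $2n-2$ are forced into \emph{every} solution (their values lie mutually inside their neighbour's interval, and they collapse to points after one query), so this cost is paid equally by the algorithm and the optimum; the entire gap is then concentrated on the last edge, where the interval-returning feature is used for amplification: the adversary returns the same interval for the first $M-1$ queries on each endpoint, and resolves the pair only on the $M$-th query of whichever endpoint the algorithm has not invested $M$ queries in, so the algorithm pays $2M$ on that edge while the optimum pays $M$. Choosing $M \gg n$ makes the shared forced cost negligible and drives the ratio to $2$. This amplification step is exactly what your construction lacks; without it, any cost that both the algorithm and the optimum must pay (which, under your committed response scheme, is all of it) pushes the ratio toward $1$ rather than $2$.
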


\begin{proof}
Consider a path with $2n$ intervals with uniform query cost.
The first $2n-2$ intervals turn into a point after the first query, and we force them all to be in any solution by setting $v_{2i-1} \in I_{2i}$ and $v_{2i} \in I_{2i-1}$, for $i = 1, \ldots, n-1$.
For the last two intervals, we claim that we can make the optimum solution solve this pair by performing $M$ queries, while any deterministic algorithm has to perform $2M$ queries, for any integer $M > 0$, so we can make the query-competitive ratio approach $2$ if $M \gg n$.

The argument is similar to the tight example in~\cite[Section~5.1]{gupta16queryselection}.
Suppose that the algorithm has already done $2M-1$ queries.
For the first $M-1$ queries on $I_{2n-1}$, we obtain the same interval; the same applies to $I_{2n}$.
Then we have two cases, depending on whether the algorithm makes $M$ queries on $I_{2n-1}$ or on $I_{2n}$.
If the algorithm makes at least $M$ queries on $I_{2n-1}$, then we return the same interval for all subsequent queries on $I_{2n-1}$, and we return $v_{2n} \in I_{2n} \setminus I_{2n-1}$ in the $M$-th query on~$I_{2n}$.
Thus the algorithm has to make $2M$ queries, and the optimum solution can simply query $M$ times $I_{2n}$.
The argument is symmetric if the algorithm makes at least $M$ queries on $I_{2n}$.
\qed
\end{proof}

Now we give a 2-query-competitive deterministic algorithm for this version of the problem.
We also cover the case in which query costs change over time, i.e., we assume that querying interval $I_i$ for the $t$-th time costs $w_i(t) \in \reais$, for $t = 1, 2, \ldots$.
The algorithm is a simple modification of the local ratio 2-approximation algorithm for the vertex cover problem~\cite{baryehuda81vc}.
It is also a generalization of both algorithms in Section~\ref{sec:adaptive}.
We begin by querying intervals whose current query cost is zero, since this does not affect the solution cost.
If there is some dependency between vertices $I_i$ and $I_j$, then we subtract from their current query cost the minimum of them; this will force one of them to be queried.
We query intervals that are singleton witness sets according to Fact~\ref{fact:offnecessary}, and proceed until all dependencies are resolved.
A pseudocode is presented in Algorithm~\ref{alg:cpcp}.

\begin{algorithm}[!ht]
\SetAlgoNoEnd
\KwIn{$(I_1, \ldots, I_n, w)$}
\For{$i \recebe 1$ \KwTo $n$}{
 $t_i \recebe 1$;
}
\While{there is some dependency}{
 \eIf{some $I_i$ has $w_i(t_i) = 0$}{
  query $I_i$\;
  $t_i \recebe t_i +1$\;
 }{
  \KwLet $I_i$ and $I_j$ be two dependent intervals\;
  $W \recebe \min \{w_j(t_j), w_i(t_i)\}$\;
  $w_i(t_i) \recebe w_i(t_i) - W$\;
  $w_j(t_j) \recebe w_j(t_j) - W$\;
 }
 \While{there are $i,j$ with $I_i \supset [\ell_j - \delta, r_j + \delta]$}{
  query $I_i$\;
  $t_i \recebe t_i +1$\;
 }
}
\vspace{0.2cm}
\caption{\label{alg:cpcp} Adaptive algorithm for queries returning intervals.}
\end{algorithm}

\begin{theorem}
Algorithm~\ref{alg:cpcp} is $2$-query-competitive for the sorting problem with uncertainty in the CP-CP model, even if query costs change over time.
\end{theorem}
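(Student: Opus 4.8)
The plan is to lift the local-ratio/witness-set analyses behind Theorems~\ref{teo:simple} and~\ref{teo:detarbitrary} to the CP-CP setting by working with \emph{coordinates} of the form $(i,t)$, read as ``query $I_i$ for the $t$-th time''. A solution is then a monotone $0/1$ vector $x$ over these coordinates ($x_{i,t}\le x_{i,t-1}$), feasible if, after performing exactly the selected queries and receiving the adversary's answers, no pair of intervals is dependent; the algorithm's real cost is $\sum_{(i,t)} w_i(t)\,x_{i,t}$ for the \emph{original} cost sequences, the cost modifications in the pseudocode being --- as in Algorithm~\ref{alg:randcost} --- only a bookkeeping device. I would fix the adversary's strategy once and for all, so that the chain of refinements $I_i=I_i^{(0)}\supseteq I_i^{(1)}\supseteq\cdots$ of each item $i$ (with endpoints $\ell_i^{(t)}\le r_i^{(t)}$) is fixed regardless of who performs the queries, and the algorithm's trace and an optimum offline query set $Q$ become deterministic. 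I would also note that the algorithm returns a feasible solution and terminates: each cost-decrease step zeroes one of the two coordinates it touches, forcing the next iteration to be a free query, so by the stated assumption that every dependent pair becomes independent after finitely many queries the algorithm makes finitely many queries and hence finitely many cost-decrease steps.

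The core is a structural lemma generalizing Fact~\ref{fact:offnecessary} to refining queries: if at some moment $I_i$ has been queried $t-1$ times and $I_i^{(t-1)}\supset[\ell_j^{(m)}-\delta,\,r_j^{(m)}+\delta]$ where $I_j^{(m)}$ is the current interval of some item $j$, then every feasible solution queries $I_i$ at least $t$ times, i.e.\ sets the coordinate $(i,t)$ to $1$. I would argue by contradiction: if a feasible $y$ queries $I_i$ only $t'\le t-1$ times and $I_j$ exactly $m'$ times, then $y$'s final intervals for $i,j$ are $I_i^{(t')}\supseteq I_i^{(t-1)}$ and $I_j^{(m')}$, and from the containment (which gives $\ell_i^{(t-1)}<\ell_j^{(m)}-\delta$ and $r_i^{(t-1)}>r_j^{(m)}+\delta$) together with nestedness of refinements one checks that $\ell_j^{(m')}\le r_j^{(m)}$ and $r_j^{(m')}\ge\ell_j^{(m)}$ hold whether $m'\le m$ or $m'>m$; these force $r_i^{(t')}-\ell_j^{(m')}>\delta$ and $r_j^{(m')}-\ell_i^{(t')}>\delta$, so $i$ and $j$ are still dependent in $y$'s configuration, a contradiction. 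A simpler variant of the same argument (now both relevant intervals only enlarge, and dependency is monotone under enlargement) shows that when the algorithm performs a cost-decrease step of value $W$ on the dependent pair with coordinates $\{(i,t_i),(j,t_j)\}$, every feasible solution sets $(i,t_i)$ or $(j,t_j)$ to $1$.

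With these facts the local-ratio accounting is routine. List the cost-decrease steps in execution order as acting on coordinate pairs $\{e_k,f_k\}$ with amounts $W_k$ ($k=1,\dots,N$), let $w^{(k)}$ be the cost function equal to $W_k$ on $e_k$ and on $f_k$ and $0$ elsewhere, and set $w^{(0)}:=w-\sum_{k=1}^N w^{(k)}$; since no step removes more than what is currently available, $w^{(0)}\ge 0$ and $w=\sum_{k=0}^N w^{(k)}$. Let $x$ be the feasible vector of coordinates queried by the algorithm. For $k\ge 1$, the witness-pair fact gives $\opt_{w^{(k)}}\ge W_k$ while $w^{(k)}x\le 2W_k$, hence $w^{(k)}x\le 2\,\opt_{w^{(k)}}$. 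For $w^{(0)}$: a coordinate queried for free has residual cost $0$, so $w^{(0)}x$ equals the sum of $w^{(0)}_i(t)$ over the coordinates queried as singleton witnesses, and since by the structural lemma each such coordinate is set to $1$ by every feasible solution and $w^{(0)}\ge 0$, this sum is at most $\opt_{w^{(0)}}\le 2\,\opt_{w^{(0)}}$. Iterating Theorem~\ref{teo:localratio} over $w=\sum_{k=0}^N w^{(k)}$ --- formally, induction on $N$, the base $N=0$ being the $w^{(0)}$ bound --- yields $wx\le 2\,\opt_w$, i.e.\ the algorithm's cost is at most $2\,\opt$. Note that neither $\delta=0$ nor any assumption on the dependency graph is used, and time-varying costs cause no difficulty since the coordinates already range over all pairs $(i,t)$. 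The step I expect to require the most care is the structural lemma in the regime $m'>m$, where the optimum refines the partner $I_j$ more aggressively than the algorithm did, so that one cannot appeal to both intervals merely enlarging but must track the four endpoints through the case split.
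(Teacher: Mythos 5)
Your proposal is correct and follows essentially the same route as the paper's proof: a local-ratio decomposition over the coordinates $(i,t)$, with the dependent-pair reduction steps charged via the witness-pair argument (Lemma~\ref{lemma:decideind}) and the remaining positive-cost queries charged as mandatory via a refined-interval generalization of Fact~\ref{fact:offnecessary}. The only difference is presentational: the paper peels off one $w^{(1)}$ per step by induction and invokes Fact~\ref{fact:offnecessary} directly, whereas you unroll the full decomposition $w=\sum_k w^{(k)}$ and prove the generalized containment lemma explicitly, which is a welcome extra level of rigor but not a new idea.
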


\begin{proof}
The proof is by induction and relies on the Local Ratio Theorem (Theorem~\ref{teo:localratio}).
When we query an interval with zero query cost in Line 5, it does not affect the solution cost.
Intervals that are queried in Line~13 must be in any solution, due to Fact~\ref{fact:offnecessary}.
If we run Lines 8--11 for intervals $I_i$ and $I_j$ for given $t_i$ and~$t_j$, we apply the Local Ratio Theorem with $w^{(1)}_i(t_i) = w^{(1)}_j(t_j) = W$, $w^{(1)}_k(t') = 0$ for $(k, t') \notin \{ (i, t_i), (j, t_j) \}$, and $w^{(2)} = w - w^{(1)}$.
By induction hypothesis, the returned solution is $2$-query-competitive on $w^{(2)}$.
The pair~$ij$ is not resolved before we make $t_i$ queries in $I_i$ and $t_j$ queries in $I_j$, and due to Lemma~\ref{lemma:decideind} we must query at least one of them.
In the worst case the algorithm will query both of them, so the returned solution is also $2$-query-competitive on~$w^{(1)}$ and, by the Local Ratio Theorem, it is $2$-query-competitive on $w$.
\qed
\end{proof}

This algorithm also works for the following generalization of the vertex cover problem: suppose we have an arbitrary graph, and we want to resolve all the edges.
Querying a vertex may resolve an incident edge, but for some edges it may be necessary to query both endpoints.
We do not know this information, so we have an online problem.
The algorithm works even if vertex weights are not uniform, and if a vertex may be required to be queried multiple times before the edge is resolved.

It is interesting to note that we obtain a factor of $2$, despite the fact that there are instances for which the $2$-approximation guarantee for the vertex cover problem is tight~\cite{baryehuda04localratio}, and that in the proof of Theorem~\ref{teo:detarbitrary} we argue that we pay the cost of a minimum vertex cover plus the cost of the remaining dependent intervals.
This indicates that in some cases the minimum vertex cover is a loose lower bound to the optimum solution.

\section{Advice Complexity for Adaptive Algorithms}
\label{sec:advice}

In this section we investigate the advice complexity of solving the adaptive version of the problem.
Recall that the advice complexity is the number of bits of advice from an oracle that are sufficient and necessary for an online algorithm to solve the problem exactly.
We assume arbitrary query costs, and for consistency that the oracle answers questions regarding a fixed optimum solution for the given instance.

First, we deal with the case when $\delta = 0$.
Let $n$ be the number of given intervals.
We claim that $\lfloor n/2 \rfloor$ bits of advice are sufficient to solve the problem exactly, and that there are instances for which $\lfloor n/2 \rfloor$ bits are necessary.

\begin{lemma}
The advice complexity of the adaptive sorting problem with uncertainty is at least $\lfloor n/2 \rfloor$, where~$n$ is the number of intervals, even if $\delta = 0$.
\end{lemma}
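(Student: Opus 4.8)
The plan is to establish the lower bound via an adversary/counting argument: exhibit a family of instances on $n$ intervals that forces any adaptive algorithm to make at least $\lfloor n/2 \rfloor$ binary decisions that cannot be inferred from query outcomes, so that the oracle must supply at least that many bits.

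First I would take $m = \lfloor n/2 \rfloor$ disjoint copies of the bad $2$-interval gadget from Lemma~\ref{lemma:loweradap}: pairs $(I_{2k-1}, I_{2k})$ with $\ell_{2k-1} < \ell_{2k} < r_{2k-1} < r_{2k}$ and $r_{2k-1} - \ell_{2k} > 0$, all placed on disjoint parts of the real line so the dependency graph is a disjoint union of $m$ edges (if $n$ is odd, append one isolated trivial interval that is never queried). Each pair is a witness set, so the optimum must query at least one interval of each pair; moreover the adversary can arrange, independently for each $k$, that the optimum queries \emph{exactly one} of the two — either $I_{2k-1}$ (by setting $v_{2k-1}$ outside $I_{2k}$ so that after querying $I_{2k-1}$ the pair is independent) or $I_{2k}$ (symmetrically). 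Thus a fixed optimum solution corresponds to a choice in $\{0,1\}^m$, giving $2^m$ distinct instances that are pairwise indistinguishable to the algorithm until it actually queries the ``wrong'' interval of some pair.

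Next I would argue the information-theoretic step. Within pair $k$, before querying either interval the algorithm has no way to tell which endpoint the optimum picks — both instances present identical intervals and identical dependency structure. If the algorithm queries the interval \emph{not} in the optimum first, then by Fact~\ref{fact:offnecessary} the remaining interval becomes a forced singleton witness set, so the algorithm ends up querying both intervals of that pair, i.e.\ it pays $2$ where the optimum pays $1$ — it is not optimal on that instance. Hence to be exact the algorithm must, for every pair, query the optimum's interval first, and the only source of that information is the advice string. Since the $2^m$ instances require $2^m$ distinct advice strings (any two instances differing in some coordinate force different first queries in that pair), at least $\lceil \log_2 2^m \rceil = m = \lfloor n/2 \rfloor$ bits of advice are necessary. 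I would phrase this cleanly using the standard advice-complexity framework: the oracle/algorithm pair must map the $2^m$ inputs injectively (as far as the first query in each pair is concerned) through the advice string, so the advice alphabet must have size at least $2^m$.

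The main obstacle I anticipate is making the ``indistinguishability'' and ``injectivity'' claims fully rigorous within the adaptive advice model: one must be careful that the algorithm could in principle interleave queries across pairs and use outcomes in pair $k$ to inform pair $k'$. The disjointness of the gadgets neutralizes this — querying any interval in pair $k$ reveals a value lying in a range disjoint from all other pairs, hence gives literally no information about any other pair — but I would want to state and use this independence explicitly, perhaps via a product/adversary argument: fix the algorithm, and for each pair let the adversary commit to the optimum's choice only when the algorithm first touches that pair, always choosing the interval the algorithm did \emph{not} query first; this guarantees suboptimality unless the advice already told the algorithm the right interval, for all $m$ pairs simultaneously. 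The rest — counting that this needs $\lfloor n/2 \rfloor$ bits — is then immediate.
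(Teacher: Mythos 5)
Your proposal is correct and follows essentially the same route as the paper: the paper's proof also takes $\lfloor n/2\rfloor$ independent copies of the bad two-interval instance of Lemma~\ref{lemma:loweradap} and observes that one bit of advice is needed per pair. Your additional care about interleaving and the deferred adversary simply makes explicit the independence argument that the paper leaves implicit.
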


\begin{proof}
Assume $n$ even and consider $n/2$ independent copies of the bad instance of Lemma~\ref{lemma:loweradap}.
At least 1 bit of advice is necessary to decide the relative order between each pair.
\qed
\end{proof}

For an adaptive algorithm with a matching upper bound, we note that, if $\delta = 0$, then any triangle $ijk$ contains a vertex $j$ such that $I_j \subseteq I_i \cup I_k$ (just take~$i$ with minimum $\ell_i$ and $k$ with maximum~$r_k$).
Thus, we can ask the oracle whether the optimum solution queries $I_j$; if not, then we must query all neighbors of~$j$; otherwise, we query $I_j$, and since $I_j \subseteq I_i \cup I_k$, we will know at least one of $I_i$ and~$I_k$ that also must be queried.
If the dependency graph contains no triangles, then it is a forest, because any cycle in a chordal graph must contain a triangle.
Therefore, we can pick a leaf $i$ and ask the oracle whether the optimum solution queries its neighbor $j$; if not, then we query all neighbors of $j$; otherwise, we query~$I_j$ and we will know whether $I_i$ must be queried or not.
Since we decide at least two intervals with one bit of advice, then $\lfloor n/2 \rfloor$ bits are sufficient.
We present a pseudocode in Algorithm~\ref{alg:advice2}.

\begin{algorithm}[!ht]
\SetAlgoNoEnd
\KwIn{$(I_1, \ldots, I_n)$}
$\Vcal \recebe \emptyset$\;
\While{there is some dependency}{
 \uIf{there is a triangle $K$}{
  \KwLet $i \in K$ with minimum $\ell_i$, $k \in K$ with maximum $r_k$, and $j \in K \setminus \{i, k\}$\;
 }
 \lElse{
  \KwLet $i$ be a leaf, and $j$ be the neighbor of $i$
 }
 ask the oracle whether the optimum solution queries $j$\;
 \lIf{yes}{query $I_j$, add $v_j$ to $\Vcal$}
 \ElseForEach{neighbor $z$ of $j$}{
   query $I_z$, add $v_z$ to $\Vcal$\;
 }
 \While{there is $I_i \supset [v_j - \delta, v_j + \delta]$ for some $v_j \in \Vcal$}{
  query $I_i$, add $v_i$ to $\Vcal$\;
 }
}
\vspace{0.2cm}
\caption{\label{alg:advice2} An adaptive algorithm that finds an optimum solution with $\lfloor n/2 \rfloor$ bits of advice when $\delta = 0$.}
\end{algorithm}

\begin{theorem}
The advice complexity of the adaptive sorting problem with uncertainty is $\lfloor n/2 \rfloor$ when $\delta = 0$, where $n$ is the number of intervals.
\end{theorem}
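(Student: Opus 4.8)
The theorem has two halves: a lower bound of $\lfloor n/2 \rfloor$ and a matching upper bound of $\lfloor n/2 \rfloor$. The lower bound is already essentially isolated in the preceding lemma, so the plan is to invoke it directly. For the upper bound, I would argue that Algorithm~\ref{alg:advice2} solves the problem optimally using at most $\lfloor n/2 \rfloor$ bits of advice, by showing that each bit of advice requested resolves the query status of at least two intervals.

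First I would establish correctness of Algorithm~\ref{alg:advice2}, i.e., that it computes an offline-optimum query set. The key structural observation (already given in the surrounding text) is that when $\delta = 0$ the dependency graph is an interval graph, hence chordal; therefore if it contains any triangle $K$, then choosing $i \in K$ with minimum $\ell_i$ and $k \in K$ with maximum $r_k$ forces the third vertex $j$ to satisfy $I_j \subseteq I_i \cup I_k$, and if the graph is triangle-free it is a forest (any cycle in a chordal graph contains a triangle). In either case we identify a vertex $j$ and ask the oracle, which answers about a fixed optimum solution $Q$, whether $j \in Q$. If $j \notin Q$, then since $Q$ is a vertex cover (by Lemma~\ref{lemma:decideind}, every edge is a witness pair), every neighbor of $j$ lies in $Q$; the algorithm queries all of them, which is consistent with $Q$. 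If $j \in Q$, the algorithm queries $I_j$; I must then argue that after this query the status of at least one further interval becomes forced and is correctly decided. In the triangle case, after querying $I_j$ we have $v_j \in I_i \cup I_k$, so $v_j \in [\ell_i, r_i]$ or $v_j \in [\ell_k, r_k]$ (using $\delta=0$), which means $I_i \supset [v_j,v_j]$ or $I_k \supset [v_j,v_j]$ — wait, more precisely at least one of $I_i, I_k$ strictly contains $v_j$ in its interior or we can still apply Fact~\ref{fact:offnecessary}-style reasoning; in any case the inner loop of Lines~10--11 (together with the membership test) resolves it and, by Fact~\ref{fact:offnecessary}, that interval is in every optimum solution, so the decision is consistent with $Q$. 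In the forest case, $i$ is a leaf whose only neighbor is $j$, so once $I_j$ is queried, either $v_j \in I_i$ (and then $I_i$ must be queried, matching $Q$ by Fact~\ref{fact:offnecessary}) or $v_j \notin I_i$ (and then $i$ becomes isolated and need not be queried, again matching $Q$ since leaving $i$ unqueried is feasible). Either way, after spending one bit, the algorithm has correctly decided the status of both $j$ and at least one neighbor.

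Next I would turn the ``two intervals per bit'' accounting into the bound $\lfloor n/2 \rfloor$. Each iteration of the outer while-loop that reaches the oracle query in Line~6 decides the status of $j$ and of at least one more vertex (a neighbor of $j$), and these decisions are permanent — decided intervals are removed from further consideration because querying them adds their values to $\Vcal$ and the algorithm only revisits undecided dependencies. Hence the number of oracle bits used is at most $\lfloor n/2 \rfloor$, and combined with the lower bound from the preceding lemma the advice complexity is exactly $\lfloor n/2 \rfloor$.

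The main obstacle I anticipate is the bookkeeping in the triangle case: I must verify carefully that after querying $I_j$ with $I_j \subseteq I_i \cup I_k$, the membership condition $I_z \supset [v_j-\delta, v_j+\delta]$ in Line~10 (with $\delta=0$) indeed triggers for at least one of $I_i, I_k$, so that Fact~\ref{fact:offnecessary} applies and the forced interval is genuinely in $Q$ — the subtlety is boundary cases where $v_j$ coincides with an endpoint of $I_i$ or $I_k$, which may require a tie-breaking convention or a slightly more careful choice of $i$ and $k$ (e.g.\ strict containment $I_j \subset I_i \cup I_k$ with $v_j$ in the interior). I would handle this by noting that if $v_j$ sits exactly at a shared endpoint then the relative order is already determined by independence and no further query is needed, so the accounting still goes through. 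The rest is routine: the forest case is immediate, and the counting argument is elementary.
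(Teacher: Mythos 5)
You follow essentially the same route as the paper: the lower bound is quoted from the preceding lemma, and the upper bound is Algorithm~\ref{alg:advice2} together with the observation that any triangle contains a vertex $j$ with $I_j \subseteq I_i \cup I_k$ (taking $i$ with minimum $\ell_i$, $k$ with maximum $r_k$), the triangle-free case being a forest handled through a leaf, plus the ``at least two intervals decided per advice bit'' accounting. The one substantive difference is that you explicitly raise the endpoint boundary case, but your fix does not actually restore that accounting: if $\ell_j=\ell_i$ and $v_j=\ell_j$ (or symmetrically $r_j=r_k$ and $v_j=r_j$), then after a ``yes'' answer and the query of $I_j$ neither $I_i$ nor $I_k$ satisfies the strict containment of Fact~\ref{fact:offnecessary}; the edges at $j$ are indeed resolved without further queries, as you say, but then only the status of $j$ has been decided by that bit, while $i$ and $k$ can remain a dependent pair costing another bit later. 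Concretely, $I_i=[0,2]$, $I_j=[0,2]$, $I_k=[1,3]$ with $v_i=1.5$, $v_j=0$, $v_k=2.5$ has unique optimum $\{j,k\}$, so under the unfavourable resolution of the $\ell$-tie the algorithm asks about $j$, gets ``yes,'' learns $v_j=0$, forces nothing, and must spend a second bit on the remaining edge $ik$ --- two bits for three intervals, exceeding $\lfloor 3/2\rfloor$. In fairness, the paper's own proof asserts ``we will know at least one of $I_i$ and $I_k$ that also must be queried'' without addressing this tie case either, so your write-up is no weaker than the published argument; but the sentence ``the accounting still goes through'' is not justified as written, and an airtight proof needs the tie case handled separately (e.g., by a choice of $j$ that rules it out, or by a modified charging scheme).
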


Now we consider the case when $\delta > 0$.
Here, we can improve the lower bound to $\lceil n/3 \cdot \lg 3 \rceil$ and still have an algorithm with matching upper bound.
Both are based on the fact that to encode $k$ distinct values amortized $\lg k$ bits are sufficient and necessary~\cite{shannon48com}.

\begin{lemma}
The advice complexity of the adaptive sorting problem with uncertainty is at least $\lceil n/3 \cdot \lg 3 \rceil$, where~$n$ is the number of intervals.
\end{lemma}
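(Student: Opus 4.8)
The plan is to construct a family of instances with $\delta > 0$ on which an online adaptive algorithm, to reliably find an optimum query set, must distinguish among $3^{\lfloor n/3 \rfloor}$ possibilities, so that by the information-theoretic bound $\lceil n/3 \cdot \lg 3 \rceil$ bits are required. The natural building block should be a gadget on three intervals in which the optimum query set is one of three distinct single-interval (or two-interval) choices, depending on where the hidden values lie; none of the three choices can be ruled out by an algorithm without advice, and the optimum cost is strictly smaller than what the algorithm would pay by guessing wrong.

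First I would design the $3$-interval gadget. Using the freedom that $\delta>0$ grants over the $\delta=0$ case — in particular the existence of \emph{trivial} intervals of width $<\delta$ — I would take a configuration analogous to the one in Lemma~\ref{lemma:loweradap} but with a third interval added, arranged so that the three pairwise dependencies form a structure where the optimum solution queries exactly one of the three intervals, and which one depends on the actual values. Concretely, I would place the three intervals so that querying any single one of them can, for an appropriate choice of hidden values, resolve all three pairwise orders, while for a wrong single guess the algorithm is forced to make a second query. Crucially the adversary must be able to, after seeing the algorithm's first query, still realize any of the remaining options, so that no sequence of online decisions pins down the right choice; this forces the algorithm to consult the oracle. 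I would then take $\lfloor n/3 \rfloor$ independent copies of this gadget (plus up to two leftover intervals placed independently of everything), so the optimum query sets across copies are independent, giving $3^{\lfloor n/3\rfloor}$ distinct optimum configurations that the algorithm must tell apart.

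Next I would invoke the standard advice lower-bound argument: if fewer than $\lceil n/3 \cdot \lg 3 \rceil = \lceil \lg(3^{\lfloor n/3\rfloor})\rceil$ bits of advice were used, then by pigeonhole two instances with different optimum query sets receive the same advice string; since the intervals presented initially are identical across these instances (the hidden values differ only in ways not visible until queried), the algorithm behaves identically on both and hence fails to produce an optimum query set on at least one of them. This is the place where I must be careful that the gadget is genuinely \emph{online-indistinguishable}: the initial interval data must be the same for all $3^{\lfloor n/3\rfloor}$ instances, and the adversary's choice of hidden values must remain consistent with every not-yet-excluded option as the algorithm queries.

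The main obstacle I anticipate is the gadget design itself — getting a three-interval configuration with $\delta>0$ whose optimum is a genuine three-way choice that is online-indistinguishable, rather than the two-way choice of Lemma~\ref{lemma:loweradap}. The constraint that querying the ``right'' interval must resolve \emph{all three} pairwise orders (so the optimum truly has cost~$1$ per gadget) while a ``wrong'' guess costs~$2$, and that this holds symmetrically for all three choices, is delicate; I expect to need the interplay of trivial and non-trivial intervals and a careful placement of endpoints relative to $\delta$. Once the gadget works, replicating it and running the counting argument is routine, and the matching upper bound (amortized $\lg 3$ bits per triple via the Shannon encoding bound, combined with a triangle/forest case analysis as in Algorithm~\ref{alg:advice2}) follows the same template as the $\delta=0$ case already established.
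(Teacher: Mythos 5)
There is a genuine gap, and it is exactly in the place you flag as the main obstacle: the three-interval gadget you describe cannot exist. You want a configuration in which the three pairwise orders are all initially undecidable, yet for each $i$ there are hidden values making $\{I_i\}$ alone an optimum query set (``querying any single one of them can \ldots resolve all three pairwise orders''). By Lemma~\ref{lemma:decideind}, a query on $I_1$ never changes the status of the pair $\{I_2, I_3\}$: if those two intervals are dependent, at least one of them must be queried regardless of what $v_1$ turns out to be, i.e.\ every feasible query set is a vertex cover of the dependency graph. So if all three pairs are dependent (a triangle), every optimum has size at least $2$, and a cost-$1$ optimum is impossible. If instead some pair is independent (the gadget is a path or sparser), then the genuine uncertainty collapses to a binary question (``is the middle interval in the optimum?''), which one bit of advice per gadget answers exactly, so such gadgets can never force $\lg 3$ bits per three intervals. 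Your parenthetical ``(or two-interval)'' hedge points toward the fix but is never developed, and the whole adversary/indistinguishability part of your argument is built around the impossible cost-$1$-versus-cost-$2$ dichotomy.

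The construction that actually works (and is what the paper does) keeps the triangle but makes the three-way choice be \emph{which interval the optimum omits}: three instances $\Ical_1,\Ical_2,\Ical_3$ share the same three pairwise-dependent intervals, and in $\Ical_i$ the unique optimum queries the two intervals other than $I_i$ (cost $2$), with the hidden values placed at endpoints or deep inside the common overlap so that a first query on the ``wrong'' interval returns no useful information and forces all three queries (cost $3$). A pigeonhole argument over the three instances then shows that one bit of advice per triangle is insufficient for exactness, so at least three advice values per triangle, i.e.\ $\lceil n/3\cdot\lg 3\rceil$ bits overall by the amortized encoding bound, are needed. Your global counting of $3^{\lfloor n/3\rfloor}$ optima and your insistence on online-indistinguishability are the right framing (and essentially equivalent to the paper's per-gadget pigeonhole), but without a correct gadget realizing a genuine three-way, value-dependent optimum, the lower bound does not go through.
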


\begin{proof}
Assume $n$ multiple of $3$ and consider $n/3$ independent triangles; it suffices to bound the number of bits of advice necessary to solve each triangle.
Suppose by contradiction that there is an algorithm that solves any triangle with one bit of advice, and consider the following instances $\Ical_1, \Ical_2, \Ical_3$.
In each $\Ical_i$, the $k$-th triangle has intervals $I_1, I_2, I_3$ such that $\ell_1 < \ell_2 < \ell_3 - \delta$, $r_1 + \delta < r_2 < r_3$, $\ell_2 \leq \ell_1 + \delta$, $r_2 \geq r_3 - \delta$ and $r_1 - \ell_3 > 2\delta$.
We have $v_j = r_j$ for all $j$ in $\Ical_1$; in $\Ical_2$, $v_1 = \ell_1$, $v_2 \in (\ell_3 + \delta, r_1 - \delta)$, $v_3 = r_3$; and $v_j = \ell_j$ for all $j$ in $\Ical_3$.
The only optimum solution for $\Ical_1, \Ical_2, \Ical_3$ is not to query $I_1, I_2, I_3$, respectively.
(See Figure~\ref{fig:loweradv}.)

\begin{figure}[!ht]
  \centering
  \subfigure[]{
   \tikzstyle{every node}=[circle, draw, fill=black, inner sep=0pt, minimum width=4pt]
   \begin{tikzpicture}[thick, scale=0.45]
    \draw (0, 0) -- (7, 0) node{};
    \draw (1, 0)[ultra thick] -- (6, 0);
    \draw (0.5, -1) -- (9, -1) node{};
    \draw (1.5, -1)[ultra thick] -- (8, -1);
    \draw (2.5, -2) -- (9.5, -2) node{};
    \draw (3.5, -2)[ultra thick] -- (8.5, -2);
   \end{tikzpicture}
  }\qquad
  \subfigure[]{
   \tikzstyle{every node}=[circle, draw, fill=black, inner sep=0pt, minimum width=4pt]
   \begin{tikzpicture}[thick, scale=0.45]
    \draw (0, 0) node{} -- (7, 0);
    \draw (1, 0)[ultra thick] -- (6, 0);
    \draw (0.5, -1) -- (9, -1) node[midway]{};
    \draw (1.5, -1)[ultra thick] -- (8, -1);
    \draw (2.5, -2) -- (9.5, -2) node{};
    \draw (3.5, -2)[ultra thick] -- (8.5, -2);
   \end{tikzpicture}
  }\qquad
  \subfigure[]{
   \tikzstyle{every node}=[circle, draw, fill=black, inner sep=0pt, minimum width=4pt]
   \begin{tikzpicture}[thick, scale=0.45]
    \draw (0, 0) node{} -- (7, 0);
    \draw (1, 0)[ultra thick] -- (6, 0);
    \draw (0.5, -1) node{} -- (9, -1);
    \draw (1.5, -1)[ultra thick] -- (8, -1);
    \draw (2.5, -2) node{} -- (9.5, -2);
    \draw (3.5, -2)[ultra thick] -- (8.5, -2);
   \end{tikzpicture}
  }
  \caption{Instances for the lower bound on advice complexity when $\delta > 0$.}
  \label{fig:loweradv}
\end{figure}
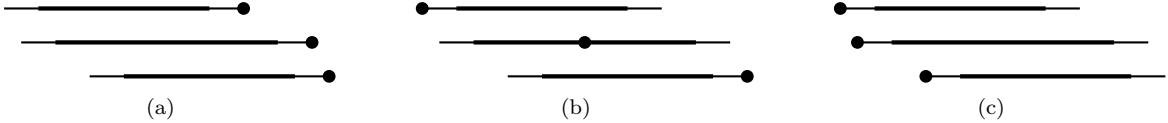

By the pigeonhole principle, the algorithm must have the same advice for at least two of those inputs.
So, it suffices to prove that any deterministic algorithm fails in one instance of any subset with at least two of those instances.
Since the intervals are structurally identical, any algorithm for a triangle performs no better than an algorithm in the following form, for fixed $x, y \in \{1, 2, 3\}$, $x \neq y$: query $I_x$, and if no helpful information is given, query~$I_y$.
The instances are constructed in such a way that, for instance $\Ical_i$, the algorithm does not get any helpful information by querying~$I_x$ with $i \neq x$, so it fails on instances $\Ical_x$ and $\Ical_y$.
Since one bit is not sufficient, at least three different values must be encoded in the advice for each triangle, so $\lceil n/3 \cdot \lg 3 \rceil$ bits are necessary for the whole instance.
\qed
\end{proof}

The algorithm that attains the upper bound relies on Lemma~\ref{lemma:simplicial}.
It considers the clique~$K$ consisting of a non-isolated vertex $x$ with minimum $r_x$ and its neighborhood.
Then it asks the oracle for the index of a vertex $y$ in $K$ that is not queried in the optimum solution or, if there is no such vertex in $K$, then the oracle must return $y = x$.
Either way, the algorithm queries all intervals in $K \setminus \{y\}$, and if $y = x$ then the algorithm will know if $y$ must also be queried after querying everyone else.
So it uses $\lg |K|$ bits of advice to decide at least~$|K|$ intervals, and the bound follows since $\lg k / k$ has its maximum at $k = 3$ when $k$ is integer.
A pseudocode is presented in Algorithm~\ref{alg:advicelg3}.

\begin{algorithm}[!ht]
\SetAlgoNoEnd
\KwIn{$(I_1, \ldots, I_n)$}
$\Vcal \recebe \emptyset$\;
\While{there is some dependency}{
 \KwLet $x$ non-isolated with minimum $r_x$, and $K$ be the clique consisting of $x$ and its neighborhood\;
 ask the oracle for a vertex $y \in K$ not queried in the optimum solution, or $y = x$ if there is no such vertex\;
 \ForEach{$z \in K \setminus \{y\}$}{
  query $I_z$, add $v_z$ to $\Vcal$\;
 }
 \While{there is $I_i \supset [v_j - \delta, v_j + \delta]$ for some $v_j \in \Vcal$}{
  query $I_i$, add $v_i$ to $\Vcal$\;
 }
}
\vspace{0.2cm}
\caption{\label{alg:advicelg3} An adaptive algorithm that finds an optimum solution with $\lceil n/3 \cdot \lg 3 \rceil$ bits of advice.}
\end{algorithm}

\begin{theorem}
The advice complexity of the adaptive sorting problem with uncertainty is $\lceil n/3 \cdot \lg 3 \rceil$, where~$n$ is the number of intervals.
\end{theorem}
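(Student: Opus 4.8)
The lower bound is exactly the lemma just proved, so the remaining task is the matching upper bound: that Algorithm~\ref{alg:advicelg3}, fed by an oracle answering with respect to a fixed optimum query set $Q$, produces an optimum solution while reading at most $\lceil n/3 \cdot \lg 3 \rceil$ bits of advice. I would split the argument into three parts: (i) the oracle query is well posed; (ii) the algorithm's query set has the same cost as $Q$; (iii) the advice count.

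For (i): by Lemma~\ref{lemma:simplicial}, at each iteration the non-isolated vertex $x$ of minimum $r_x$ is simplicial, so $K$, its closed neighbourhood, is a clique of pairwise dependent intervals. Querying only refines (shrinks) intervals, and shrinking can destroy but never create a dependency, so any two intervals dependent in the current graph are dependent in the original instance; hence, if two vertices of $K$ were both outside $Q$, then $Q$ --- which never queries them --- could not decide their order, contradicting feasibility of $Q$. Thus $|K \setminus Q| \le 1$, and the oracle's answer $y$ (the unique vertex of $K$ missing from $Q$, or $y = x$ when $K \subseteq Q$) is well defined and ranges over at most $|K|$ values.

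For (ii): intervals queried in the inner loop (Lines~7--8) are singleton witness sets by Fact~\ref{fact:offnecessary} and therefore lie in $Q$. In a main-loop iteration the algorithm queries $K \setminus \{y\}$; if $y$ is the vertex of $K$ missing from $Q$ then $K \setminus \{y\} \subseteq Q$, and if $K \subseteq Q$ then $K \setminus \{y\} = N(x) \subseteq Q$ and the only way $x$ is additionally queried is through the inner loop, which again means $x$ is forced and hence in $Q$; conversely, when $x$ itself is the vertex missing from $Q$, the values revealed by querying $N(x)$ coincide with those $Q$ would reveal, so $x$ does not become forced and is not queried. An easy induction gives termination: $x$ ends the iteration isolated (either $x \in K \setminus \{y\}$ and is queried, or $y = x$, all of $N(x)$ is queried, and any residual dependency of $x$ triggers the inner loop), while no iteration creates an edge; so the number of non-isolated vertices strictly decreases and all dependencies are eventually resolved. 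Hence the algorithm queries a subset of $Q$ that still resolves the instance; being a resolving set it costs at least $\opt$, and being a subset of $Q$ with nonnegative costs it costs at most $\opt$, so its cost equals $\opt$.

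For (iii): each oracle query decides the status of the $|K| \ge 2$ intervals of $K$ while consuming one symbol from an alphabet of size $|K|$, and intervals handled only by the inner loop cost no advice; the cliques $K_1, K_2, \ldots$ processed over the run are vertex-disjoint, so $\sum_i |K_i| \le n$. Encoding the whole sequence of answers takes $\lceil \sum_i \lg |K_i| \rceil$ bits by the Shannon coding fact~\cite{shannon48com}, and since $|K_i| \ge 2$ and $\lg k / k$ is maximised over integers $k \ge 2$ at $k = 3$, we get $\sum_i \lg |K_i| \le \frac{\lg 3}{3} \sum_i |K_i| \le \frac{n}{3}\lg 3$, giving the claimed bound. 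I expect the real difficulty to lie in part (ii): pinning down that the inner loop never queries outside $Q$ and that the ambiguous answer $y = x$ --- which covers both ``$K \subseteq Q$'' and ``$x \notin Q$'' --- always leads the algorithm to the correct decision about $x$; the counting in part (iii) is routine once the vertex-disjoint clique partition is in hand.
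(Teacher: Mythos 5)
Your parts (i) and (ii) are correct and in fact supply details that the paper leaves implicit (well-posedness of the oracle via Lemma~\ref{lemma:simplicial} and the pairwise-dependence of $K$, and the fact that inner-loop queries are forced by Fact~\ref{fact:offnecessary} and hence lie in $Q$). The gap is in part (iii), precisely the step you call routine: the claim that the cliques $K_1, K_2, \ldots$ processed by Algorithm~\ref{alg:advicelg3} are vertex-disjoint is false for the algorithm as written. Queried vertices do become isolated after the inner loop, and when $y=x$ the vertex $x$ is dead after its iteration; but when the oracle returns $y \neq x$, the vertex $y$ is \emph{not} queried and may keep unqueried neighbours outside $K$, so it can reappear in later cliques. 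Concretely (recall this section allows arbitrary costs): take one long interval $I_y$ of huge cost and $m$ cheap, pairwise-independent intervals $z_1,\ldots,z_m$, each dependent on $I_y$, with every value $v_{z_i} \leq \ell_y + \delta$ so that $I_y$ is never forced. The unique optimum is $\{z_1,\ldots,z_m\}$. The algorithm repeatedly picks $x = z_i$ (minimum right endpoint), forms $K = \{z_i, y\}$, the oracle must answer $y$, and one bit is consumed; this repeats $m$ times, i.e.\ about $n-1$ bits against the claimed $\lceil n/3 \cdot \lg 3 \rceil \approx 0.53\,n$. So $\sum_t \lg |K_t| \leq \frac{\lg 3}{3} n$ does not follow from the stated algorithm; the overlap in the never-queried vertex breaks the count. (The paper's own one-line justification, ``$\lg|K|$ bits decide at least $|K|$ intervals,'' glosses over exactly the same issue.)

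The theorem is still salvageable, but it needs an extra idea that neither your write-up nor the literal pseudocode contains: the algorithm must remember what it has learned. Once the oracle answers $y \neq x$, the algorithm knows $y \notin Q$, and hence (since every dependent pair must have a queried member) that \emph{all} neighbours of $y$ belong to $Q$. So in any later iteration whose clique contains a vertex already known to be outside $Q$, the algorithm should query all the remaining members of that clique without consuming any advice. With this modification every interval belongs to at most one advice-consuming clique, the cliques that do consume advice are genuinely vertex-disjoint, and your computation $\sum_i \lg|K_i| \leq \frac{\lg 3}{3}\sum_i |K_i| \leq \frac{n}{3}\lg 3$ (together with the Shannon mixed-radix encoding and the lower-bound lemma) completes the proof. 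As written, though, your disjointness assertion is a genuine gap, not a routine observation.
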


\section{Towards Characterizing co-TT Graphs by Forbidden Induced Subgraphs}
\label{sec:charac}

In this section we discuss the importance of our sorting problem with uncertainty for the understanding of the class of co-TT graphs itself.
A first point is that we are not aware of other applications of this graph class in the literature.
Second, there are various characterizations of co-TT graphs in the literature~\cite{monma88ttolerance,golumbic14cott}, and they can be recognized in $\Oh(n^2)$ time~\cite{golovach17tt}, but a characterization in terms of forbidden induced subgraphs is an open question.

Before we figured out that Definition~\ref{def:dep} is equivalent to that of co-TT graphs, we spent some time trying to understand the graph class we were dealing with.
Since interval graphs are an obvious subclass, and a nice characterization by forbidden induced subgraphs is known for interval graphs~\cite{lekkeikerker62interval}, our first direction was to test which of these graphs are also forbidden induced subgraphs of co-TT graphs.
It turns out we got to extend the partial list of forbidden induced subgraphs to the one presented in Figure~\ref{fig:coTT}.

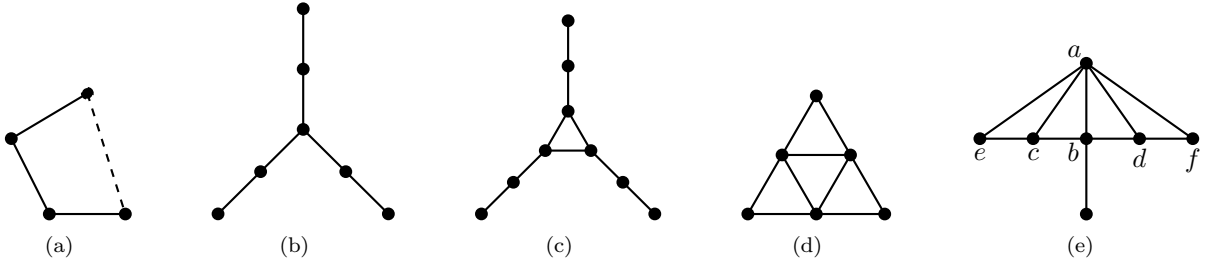
\begin{figure}[!ht]
  \centering
  \subfigure[]{\label{fig:chordal}
   \tikzstyle{every node}=[circle, draw, fill=black, inner sep=0pt, minimum width=4pt]
   \begin{tikzpicture}[thick, scale=1]
    \draw (2, 0) node{} -- (1, 0);
    \draw (1, 0) node{} -- (0.5, 1);
    \draw (0.5, 1) node{} -- (1.5, 1.6);
    \draw[dashed] (1.5, 1.6) node{} -- (2, 0);
   \end{tikzpicture}
  }\qquad
  \subfigure[]{\label{fig:asteroid1}
   \tikzstyle{every node}=[circle, draw, fill=black, inner sep=0pt, minimum width=4pt]
   \begin{tikzpicture}[thick, scale=0.8]
    \draw (0, 0) node{} -- (0, 1) node{};
    \draw (0, 1) -- (0, 2) node{};
    \draw (0, 0) -- (-0.7, -0.7) node{};
    \draw (-0.7, -0.7) -- (-1.4, -1.4) node{};
    \draw (0, 0) -- (0.7, -0.7) node{};
    \draw (0.7, -0.7) -- (1.4, -1.4) node{};
   \end{tikzpicture}
  }\qquad
  \subfigure[]{\label{fig:coTT2}
   \tikzstyle{every node}=[circle, draw, fill=black, inner sep=0pt, minimum width=4pt]
   \begin{tikzpicture}[thick, scale=0.6]
    \draw (0, 0) node{} -- (0, 1) node{};
    \draw (0, 1) -- (0, 2) node{};
    \draw (0, 0) -- (-0.5, -0.87) node{};
    \draw (-0.5, -0.87) -- (0.5, -0.87) node{};
    \draw (0, 0) -- (0.5, -0.87);
    \draw (-0.5, -0.87) -- (-1.2, -1.57) node{};
    \draw (-1.2, -1.57) -- (-1.9, -2.27) node{};
    \draw (0.5, -0.87) -- (1.2, -1.57) node{};
    \draw (1.2, -1.57) -- (1.9, -2.27) node{};
   \end{tikzpicture}
  }\qquad
  \subfigure[]{\label{fig:sun3}
   \tikzstyle{every node}=[circle, draw, fill=black, inner sep=0pt, minimum width=4pt]
   \begin{tikzpicture}[thick, scale=0.9]
    \draw (0, 0) node{} -- (-1, 0) node{};
    \draw (0, 0) -- (1, 0) node{};
    \draw (0, 0) -- (-0.5, 0.87) node{};
    \draw (0, 0) -- (0.5, 0.87) node{};
    \draw (-1, 0) -- (-0.5, 0.87);
    \draw (1, 0) -- (0.5, 0.87);
    \draw (-0.5, 0.87) -- (0.5, 0.87);
    \draw (-0.5, 0.87) -- (0, 1.74) node{};
    \draw (0.5, 0.87) -- (0, 1.74);
   \end{tikzpicture}
  }\qquad
  \subfigure[]{\label{fig:asteroid2}
   \tikzstyle{every node}=[circle, draw, fill=black, inner sep=0pt, minimum width=4pt]
   \begin{tikzpicture}[thick, scale=1]
    \draw (0, 0) node[label=south west:$b$]{} -- (-0.7, 0) node[label=below:$c$]{};
    \draw (-0.7, 0) -- (-1.4, 0) node[label=below:$e$]{};
    \draw (0, 0) -- (0.7, 0) node[label=below:$d$]{};
    \draw (0.7, 0) -- (1.4, 0) node[label=below:$f$]{};
    \draw (0, 0) -- (0, -1) node{};
    \draw (0, 0) -- (0, 1) node[label=north west:$a$]{};
    \draw (-0.7, 0) -- (0, 1);
    \draw (-1.4, 0) -- (0, 1);
    \draw (0.7, 0) -- (0, 1);
    \draw (1.4, 0) -- (0, 1);
   \end{tikzpicture}
  }
  \caption{A partial list of forbidden induced subgraphs for co-TT graphs.
  \subref{fig:chordal} is the $k$-cycle, for $k \geq 4$.
  \subref{fig:sun3} is the $3$-sun.}
  \label{fig:coTT}
\end{figure}

Figure~\ref{fig:chordal} ($k$-cycle, $k \geq 4$) is inherited from chordal graphs.
Figures~\ref{fig:asteroid1} and~\ref{fig:coTT2} have long been known not to be co-TT~\cite{golumbic84tolerance}, and Figure~\ref{fig:sun3} (the $3$-sun) is proved not to be co-TT in~\cite{calamoneri14tt}.
We prove that the graph in Figure~\ref{fig:asteroid2}, which cannot be an interval graph, cannot be a co-TT graph either.
The following facts will be useful; similar or equivalent facts have been known for the red/blue characterization of co-TT graphs~\cite{golumbic14cott}.

\begin{fact}
\label{lemma:trivialind}
Two trivial intervals cannot be dependent.
\end{fact}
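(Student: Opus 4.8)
The plan is to argue by contradiction, exploiting the additivity of the two defining inequalities for dependence. Suppose $I_i=[\ell_i,r_i]$ and $I_j=[\ell_j,r_j]$ are both trivial, so $r_i-\ell_i<\delta$ and $r_j-\ell_j<\delta$, and suppose toward a contradiction that they are dependent. By Definition~\ref{def:dep} this means $r_i-\ell_j>\delta$ and $r_j-\ell_i>\delta$. First I would add these two inequalities to obtain
\[
(r_i-\ell_j)+(r_j-\ell_i) \;=\; (r_i-\ell_i)+(r_j-\ell_j) \;>\; 2\delta .
\]
On the other hand, triviality of the two intervals gives $(r_i-\ell_i)+(r_j-\ell_j)<\delta+\delta=2\delta$, which contradicts the displayed inequality. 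Hence no two trivial intervals can be dependent.

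I do not expect any genuine obstacle here: the whole argument is the single observation that the sum of the two "crossing" gaps $r_i-\ell_j$ and $r_j-\ell_i$ telescopes to the sum of the two interval widths, so two widths each below $\delta$ cannot both force a crossing gap above $\delta$. The only point to be careful about is that all four inequalities involved are strict (triviality is width \emph{strictly} less than $\delta$, and dependence requires the gaps to be \emph{strictly} greater than $\delta$), so the contradiction $2\delta<\cdots<2\delta$ is clean and no boundary case needs separate treatment. This also matches, and slightly sharpens, the remark made earlier in the paper that "two trivial intervals cannot be dependent," now stated as Fact~\ref{lemma:trivialind}.
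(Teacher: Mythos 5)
Your proof is correct: adding the two dependence inequalities so that the crossing gaps telescope to the sum of the widths is exactly the ``easy to check'' verification the paper omits (Fact~\ref{lemma:trivialind} is stated without proof). One minor remark: elsewhere (in the proof of Fact~\ref{lemma:path2trivial}) the paper treats triviality as $r_i-\ell_i\leq\delta$ rather than strictly less, but your argument survives either convention since the dependence inequalities are strict, giving $2\delta<(r_i-\ell_i)+(r_j-\ell_j)\leq 2\delta$, still a contradiction.
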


\begin{fact}
\label{lemma:path2trivial}
A trivial interval is always simplicial, i.e., cannot be dependent of two independent intervals.
\end{fact}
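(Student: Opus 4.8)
The plan is to argue directly from Definition~\ref{def:dep}, exploiting the defining width bound $r_t - \ell_t < \delta$ of a trivial interval $I_t$. Suppose $I_t$ is trivial and dependent on two intervals $I_i$ and $I_j$; the goal is to show that $I_i$ and $I_j$ are themselves dependent, i.e.\ that $r_i - \ell_j > \delta$ and $r_j - \ell_i > \delta$, so that the neighborhood of $I_t$ is a clique.

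First I would note that every dependent neighbor of a trivial interval has its left endpoint below $\ell_t$ and its right endpoint above $\ell_t + \delta$. Indeed, from $r_t - \ell_i > \delta$ together with $r_t < \ell_t + \delta$ we obtain $\ell_i < r_t - \delta < \ell_t$, and symmetrically $\ell_j < \ell_t$; likewise from $r_i - \ell_t > \delta$ we obtain $r_i > \ell_t + \delta$, and symmetrically $r_j > \ell_t + \delta$. Chaining these, $r_i > \ell_t + \delta > \ell_j + \delta$, hence $r_i - \ell_j > \delta$, and the symmetric computation gives $r_j - \ell_i > \delta$. Thus $I_i$ and $I_j$ are dependent, which is precisely simpliciality of $I_t$.

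An equivalent, slightly more structural route is to observe that the same inequalities yield $r_t < \ell_t + \delta < r_i$ for every neighbor $I_i$ of $I_t$, so $I_t$ has the smallest right endpoint among the intervals in its closed neighborhood; the proof of Lemma~\ref{lemma:simplicial} uses nothing beyond this minimality, so it applies verbatim. Either way, there is essentially no obstacle here: the only point that requires a little care is to use \emph{both} inequalities from the definition of dependency in combination with the strict width bound of a trivial interval, rather than just one of them.
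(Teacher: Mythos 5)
Your proof is correct and is essentially the paper's argument in contrapositive form: the paper assumes $I_i, I_j$ independent (WLOG $r_i - \ell_j \le \delta$) and a trivial $I_k$ dependent on both, and derives the contradiction $r_i \le \ell_j + \delta < r_k \le \ell_k + \delta$ from exactly the same two dependency inequalities plus the width bound that you chain directly. Your side remark that triviality forces $r_t < \ell_t + \delta < r_i$ for every neighbor, so the simpliciality argument of Lemma~\ref{lemma:simplicial} applies verbatim, is a valid alternative route that the paper does not use for this fact.
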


\begin{proof}
By Fact~\ref{lemma:trivialind}, the neighbors of a trivial interval are non-trivial.
Let $I_i$ and $I_j$ be two non-trivial intervals, which are independent, and let~$I_k$ be a trivial interval.
Assume without loss of generality that $r_i - \ell_j \leq \delta$, and suppose by contradiction that $I_k$ is dependent of both~$I_i$ and $I_j$.
We have that $r_k - \ell_j > \delta$ and, since $I_k$ is trivial, $r_k - \ell_k \leq \delta$.
Thus,
$r_i \leq \ell_j + \delta < r_k \leq \ell_k + \delta$,
which contradicts the fact that $I_i$ and~$I_k$ are dependent.
\qed
\end{proof}

\begin{fact}
\label{fact:containdep}
Let $I_i$ and $I_j$ be two intervals such that $I_i \supseteq I_j$.
If $I_j$ is dependent to some interval $I_k$, then $I_i$ is also dependent to $I_k$.
\end{fact}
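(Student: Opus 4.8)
The plan is to argue directly from Definition~\ref{def:dep}, using the fact that $I_i \supseteq I_j$ means $\ell_i \le \ell_j$ and $r_j \le r_i$. Since $I_j$ and $I_k$ are dependent, we have both $r_j - \ell_k > \delta$ and $r_k - \ell_j > \delta$. First I would handle the inequality $r_i - \ell_k > \delta$: from $r_i \ge r_j$ we get $r_i - \ell_k \ge r_j - \ell_k > \delta$, so this holds immediately. Then I would handle the symmetric inequality $r_k - \ell_i > \delta$: from $\ell_i \le \ell_j$ we get $r_k - \ell_i \ge r_k - \ell_j > \delta$. Having both inequalities, Definition~\ref{def:dep} tells us that $I_i$ and $I_k$ are dependent, which is exactly what we want.

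In short, containment can only make the two dependency conditions easier to satisfy: widening an interval on the left pushes $\ell$ down (helping the condition $r_k - \ell_i > \delta$), and widening it on the right pushes $r$ up (helping $r_i - \ell_k > \delta$). There is no real obstacle here — the statement is a monotonicity observation, and the only thing to be careful about is matching up which endpoint of $I_i$ replaces which endpoint of $I_j$ in each of the two inequalities of the dependency definition. The proof is a two-line inequality chase and requires no case analysis.
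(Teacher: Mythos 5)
Your proof is correct and is essentially identical to the paper's: both derive $r_i - \ell_k \geq r_j - \ell_k > \delta$ and $r_k - \ell_i \geq r_k - \ell_j > \delta$ directly from the containment $I_i \supseteq I_j$ and Definition~\ref{def:dep}. Nothing to add.
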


\begin{proof}
We have that $r_i - \ell_k \geq r_j - \ell_k > \delta$ and $r_k - \ell_i \geq r_k - \ell_j > \delta$.
\qed
\end{proof}

\begin{lemma}
\label{lemma:umbrella}
The graph of Figure~\ref{fig:asteroid2} cannot be a co-TT graph.
\end{lemma}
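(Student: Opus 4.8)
The plan is to assume, towards a contradiction, that the graph $H$ of Figure~\ref{fig:asteroid2} is a co-TT graph. By Theorem~\ref{teo:cott} there are then an error threshold $\delta \geq 0$ and intervals $\{I_v\}$ for the seven vertices, with $I_u$ and $I_v$ dependent precisely when $uv$ is an edge of $H$; write $g$ for the unique vertex of degree one, the pendant neighbour of $b$. The single goal is to establish $I_a \supseteq I_b$. Once this is in hand, Fact~\ref{fact:containdep} applied to $I_a \supseteq I_b$ and to the dependent pair $I_b, I_g$ (dependent because $bg$ is an edge) forces $I_a$ and $I_g$ to be dependent, contradicting the non-edge $ag$.

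First I would normalise the representation. Since $c$ and $d$ are non-adjacent, at least one of $r_c - \ell_d \leq \delta$ and $r_d - \ell_c \leq \delta$ holds; mirroring the whole representation, $I_v \mapsto [-r_v,-\ell_v]$, which preserves every dependency, if necessary, I may assume $r_c - \ell_d \leq \delta$. From this I would pin down the position of $b$ between $e$ and $f$ on the path $e-c-b-d-f$. Concretely, I claim $r_e - \ell_b \leq \delta$: otherwise the non-edge $eb$ forces $r_b - \ell_e \leq \delta$, the edge $bd$ then gives $\ell_d < r_b - \delta \leq \ell_e$, while the edge $ce$ together with $r_c \leq \ell_d + \delta$ gives $\ell_e < r_c - \delta \leq \ell_d$, a contradiction. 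An analogous computation using the edges $bc$, $df$ and the non-edge $bf$ gives $r_b - \ell_f \leq \delta$.

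With these two inequalities the edges incident to $a$ finish the argument: the edge $ae$ gives $r_e - \ell_a > \delta$, hence $\ell_a < r_e - \delta \leq \ell_b$; the edge $af$ gives $r_a - \ell_f > \delta$, hence $r_a > \ell_f + \delta \geq r_b$. Therefore $\ell_a < \ell_b$ and $r_a > r_b$, i.e.\ $I_a \supseteq I_b$, and the contradiction described above completes the proof. Note that only the edges $ce, cb, bd, df, ae, af, bg$ and the non-edges $cd, eb, bf, ag$ of $H$ are used; the remaining adjacencies of $a$ are irrelevant.

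I expect the only real difficulty to be bookkeeping rather than any deeper obstruction: the argument must survive $\delta > 0$ and the possibility that some of $e, f, g$ are trivial intervals, so every comparison has to carry the correct ``$\leq \delta$'' or ``$> \delta$'' slack, and one must check that mirroring the representation does not silently change which vertices are being discussed — it does not, since mirroring preserves the dependency graph and keeps all labels. Conceptually the point is simply that forcing $I_a$ to meet both ends $I_e$ and $I_f$ of the path makes $I_a$ engulf the central interval $I_b$, after which the pendant $I_g$ at $b$ can no longer avoid $I_a$.
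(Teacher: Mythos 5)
Your proof is correct and follows essentially the same route as the paper's: after normalising so that $r_c - \ell_d \leq \delta$, both arguments derive $r_e - \ell_b \leq \delta$ (hence $\ell_a < \ell_b$ via the edge $ae$) and $r_b - \ell_f \leq \delta$ (hence $r_a > r_b$ via the edge $af$), concluding $I_a \supseteq I_b$ and contradicting the pendant neighbour of $b$ via Fact~\ref{fact:containdep}. The only difference is cosmetic: you obtain $r_e - \ell_b \leq \delta$ by a short contradiction comparing $\ell_d$ and $\ell_e$, while the paper derives it directly from $r_b > r_c$ and the edge $ce$.
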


\begin{proof}
Since $c$ and $d$ are independent, we may assume that $r_c - \ell_d \leq \delta$.
Since $b$ and $d$ are dependent, we have that $r_b > \ell_d + \delta \geq r_c$.
The dependency between $c$ and $e$ implies that $r_b - \ell_e > r_c - \ell_e > \delta$.
Thus, we have that $r_e - \ell_b \leq \delta$, because $b$ and $e$ are independent.
Then, since $a$ and $e$ are dependent, $\ell_a < r_e - \delta \leq \ell_b$.

By symmetry, we can prove that $\ell_b < \ell_d$, thus $r_b - \ell_f \leq \delta$ and $r_a > r_b$.
Thus $I_a$ contains~$I_b$ and $b$ cannot have a neighbor that is not adjacent to $a$.
\qed
\end{proof}

In the opposite direction, we prove that the graphs in Figure~\ref{fig:asteroid}, which are forbidden for interval graphs~\cite{lekkeikerker62interval}, can occur as co-TT graphs when $k \geq 2$.
(The graph of Figure~\ref{fig:asteroid3} with $k = 2$ has long been known to be co-TT~\cite{calamoneri14tt,golumbic84tolerance}.)
In Figure~\ref{fig:int_asteroid}, we show how to realize those graphs as instances of our sorting problem with uncertainty.
In both cases, if $\ell_{b''} = x$, then we take, for some $0 < \epsilon < \delta$, $r_{b'} = x + \delta + \epsilon$ and $I_e = [x + \epsilon, x + \delta]$.
Then~$b'$ and~$b''$ are dependent because $r_{b'} - \ell_{b''} = \delta + \epsilon > \delta$ (and clearly $r_{b''} - \ell_{b'} > \delta$).
But  $r_{b'} - \ell_e = \delta$ and $r_e - \ell_{b''} = \delta$, so $e$ is dependent to neither $b'$ nor $b''$.

\begin{figure}[!ht]
 \centering
  \subfigure[]{\label{fig:asteroid3}
   \tikzstyle{every node}=[circle, draw, fill=black, inner sep=0pt, minimum width=4pt]
   \begin{tikzpicture}[thick, scale=1]
    \draw (0, 0) node[label=below:$b_3$]{} -- (-0.7, 0) node[label=below:$b_2$]{};
    \draw (-0.7, 0) -- (-1.4, 0) node[label=below:$b_1$]{};
    \draw (-1.4, 0) -- (-2.1, 0) node[label=below:$c$,fill=white]{};
    \draw[dashed] (0, 0) -- (1.4, 0) node[label=below:$b_k$]{};
    \draw (1.4, 0) -- (2.1, 0) node[label=below:$d$,fill=white]{};
    \draw (0, 0) -- (0, 1) node[label=north west:$a$]{};
    \draw (0, 1) -- (0, 2) node[label=west:$e$,fill=white]{};
    \draw (-0.7, 0) -- (0, 1);
    \draw (-1.4, 0) -- (0, 1);
    \draw (1.4, 0) -- (0, 1);
   \end{tikzpicture}
  }\qquad
  \subfigure[]{\label{fig:asteroid4}
   \tikzstyle{every node}=[circle, draw, fill=black, inner sep=0pt, minimum width=4pt]
   \begin{tikzpicture}[thick, scale=1]
    \draw (0, 0) node[label=below:$b_3$]{} -- (-0.7, 0) node[label=below:$b_2$]{};
    \draw (-0.7, 0) -- (-1.4, 0) node[label=below:$b_1$]{};
    \draw (-1.4, 0) -- (-2.1, 0.85);
    \draw[dashed] (0, 0) -- (1.4, 0) node[label=below:$b_k$]{};
    \draw (1.4, 0) -- (2.1, 0.85);
    \draw (0, 0) -- (-0.8, 1) node[label=north west:$a$]{};
    \draw (0, 0) -- (0.8, 1) node[label=north east:$a'$]{};
    \draw (-0.8, 1) -- (0, 2);
    \draw (0.8, 1) -- (0, 2) node[label=west:$e$,fill=white]{};
    \draw (-0.7, 0) -- (-0.8, 1);
    \draw (-1.4, 0) -- (-0.8, 1);
    \draw (-0.7, 0) -- (0.8, 1);
    \draw (-1.4, 0) -- (0.8, 1);
    \draw (1.4, 0) -- (-0.8, 1);
    \draw (1.4, 0) -- (0.8, 1);
    \draw (-0.8, 1) -- (0.8, 1);
    \draw (-0.8, 1) -- (-2.1, 0.85) node[label=north west:$c$,fill=white]{};
    \draw (0.8, 1) -- (2.1, 0.85) node[label=north east:$d$,fill=white]{};
   \end{tikzpicture}
  }
  \caption{Two families of graphs that cannot occur as interval graphs~\cite{lekkeikerker62interval}.
  In~\subref{fig:asteroid3} we have $k \geq 2$.
  In~\subref{fig:asteroid4} we have $k \geq 1$.
  Both families of graphs are co-TT graphs when $k \geq 2$.
  The white vertices are the only ones that can be trivial, if we take into account Facts~\ref{lemma:trivialind} and \ref{lemma:path2trivial}.}
  \label{fig:asteroid}
\end{figure}
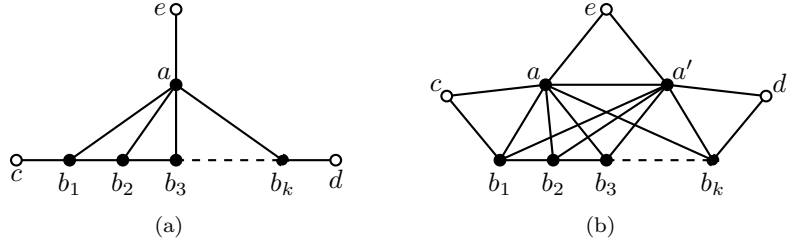

\begin{figure}[!ht]
  \centering
  \subfigure[]{\label{fig:int_asteroid3}
   \begin{tikzpicture}[thick, scale=0.6]
    \draw (0, 0) -- (3, 0) node[midway,above]{$c$};
    \draw (4, 0) -- (14, 0) node[midway,above]{$a$};
    \draw (15, 0) -- (18, 0) node[midway,above]{$d$};
    \draw (1, -1) -- (6.2, -1) node[midway,above]{$b_1$};
    \draw (7, -1) -- (11, -1) node[midway,above]{$b_3$};
    \draw (12.3, -1) node[anchor=east]{$\cdots$} -- (17, -1) node[midway,above]{$b_k$};
    \draw (4.8, -2) -- (9, -2) node[midway,above]{$b_2$};
    \draw (5.1, -3) -- (5.9, -3) node[midway,above]{$e$};
    \draw[dotted] (4.8, -0.8) -- (4.8, -3.5) node[anchor=east]{$x$};
    \draw[dotted] (5.1, -0.8) -- (5.1, -4.2) node[anchor=east]{$x + \epsilon$};
    \draw[dotted] (5.9, -0.8) -- (5.9, -4.2) node[anchor=west]{$x + \delta$};
    \draw[dotted] (6.2, -0.8) -- (6.2, -3.5) node[anchor=west]{$x + \delta + \epsilon$};
   \end{tikzpicture}
  }\qquad
  \subfigure[]{\label{fig:int_asteroid4}
   \begin{tikzpicture}[thick, scale=0.6]
    \draw (0, 1) -- (3, 1) node[midway,above]{$c$};
    \draw (4, 1) -- (17, 1) node[midway,above]{$a'$};
    \draw (1, 0) -- (14, 0) node[midway,above]{$a$};
    \draw (15, 0) -- (18, 0) node[midway,above]{$d$};
    \draw (1, -1) -- (6.2, -1) node[midway,above]{$b_1$};
    \draw (7, -1) -- (11, -1) node[midway,above]{$b_3$};
    \draw (12.3, -1) node[anchor=east]{$\cdots$} -- (17, -1) node[midway,above]{$b_k$};
    \draw (4.8, -2) -- (9, -2) node[midway,above]{$b_2$};
    \draw (5.1, -3) -- (5.9, -3) node[midway,above]{$e$};
    \draw[dotted] (4.8, -0.8) -- (4.8, -3.5) node[anchor=east]{$x$};
    \draw[dotted] (5.1, -0.8) -- (5.1, -4.2) node[anchor=east]{$x + \epsilon$};
    \draw[dotted] (5.9, -0.8) -- (5.9, -4.2) node[anchor=west]{$x + \delta$};
    \draw[dotted] (6.2, -0.8) -- (6.2, -3.5) node[anchor=west]{$x + \delta + \epsilon$};
   \end{tikzpicture}
  }
  \caption{\subref{fig:int_asteroid3} A realization of the graph of Figure~\ref{fig:asteroid3}.
  \subref{fig:int_asteroid4} A realization of the graph of Figure~\ref{fig:asteroid4} when $k \geq 2$.}
  \label{fig:int_asteroid}
\end{figure}
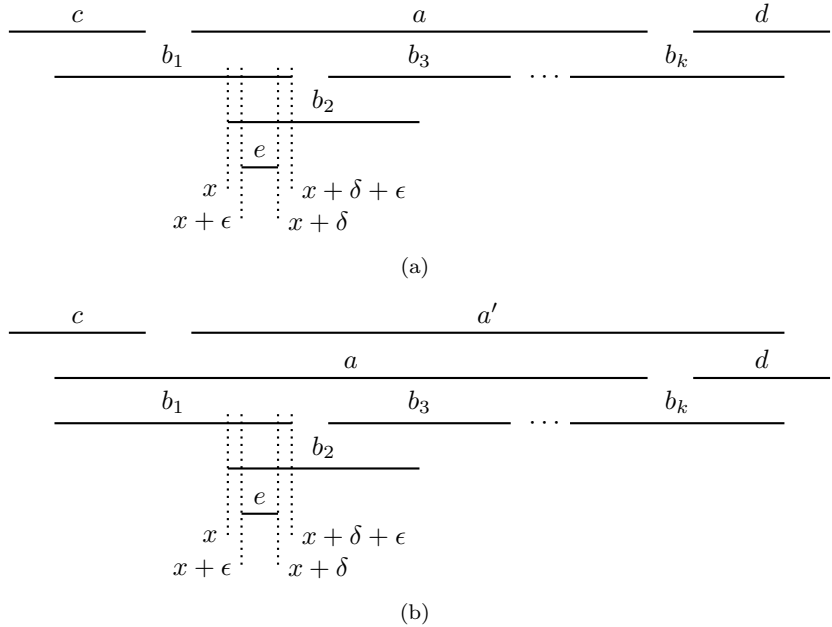

\section{Interval Problems with Uncertainty}
\label{sec:interval}

In this section, we discuss uncertainty variants of some classical problems on intervals.
In those variants, the boundary of each interval $I_i$ is given by uncertainty intervals $L_i = [\ell_{L_i}, r_{L_i}]$ and $R_i = [\ell_{R_i}, r_{R_i}]$.
We denote the precise lower and upper bounds of $I_i$ by $\ell_i$ and $r_i$, respectively, which are initially unknown and can be learned by querying $L_i$ and $R_i$, respectively.
Those were the problems we started to investigate in the model of uncertainty optimization with queries, and from them we got inspiration to work on the sorting problem of the previous sections.

We begin with the problem of finding a maximum independent set of intervals.
The problem has query-competitive ratio at least $n-1$, even if query costs are uniform and the lower bound $L_i$ is trivial for every interval $I_i$.
(The same applies if the upper bounds are trivial instead of the lower bounds.)
To prove this, consider an interval $I_n = [\ell_n, r_n]$ with trivial lower and upper bounds, and $n-1$ identical intervals, with $\ell_{R_i} < \ell_n < r_{R_i}$ and $r_n > r_{R_i}$, for $i = 1, \ldots, n-1$.
Clearly we can have at most 2 independent intervals.
For the first $n-2$ queries, if the algorithm queries interval $I_i$, then the adversary chooses $r_i = r_{R_i}$, so $I_i$ and $I_n$ are dependent.
Then, the adversary chooses $r_j = \ell_{R_j}$ for the $(n-1)$-th queried interval $I_j$.
Clearly it would suffice to query~$I_j$ to know that there exists an independent set of size 2.

We also consider the problem of find the stabbing number of a set of intervals.
The stabbing number is the size of a minimum set of points $P$, such that each interval contains some point in~$P$.
(I.e., we wish to find a minimum transversal for a set of intervals.)
This is equivalent to finding a minimum covering by cliques, which can be solved in polynomial time~\cite{gavril72chordal}.
In the uncertainty version, this problem also has competitive ratio at least $n-1$, even if query costs are uniform and lower bounds (upper bounds) are trivial.
We use the same set of intervals as in the bad instance for the maximum independent set problem.
Clearly the stabbing number is either 1 or 2.
For the first $n-2$ queries, if the algorithm queries interval~$I_i$, then the adversary chooses $r_i = r_{R_i}$, so 1 stabbing point is sufficient for now.
Then, the adversary chooses $r_j = \ell_{R_j}$ for the $(n-1)$-th queried interval $I_j$.
Clearly it would suffice to query~$I_j$ to know that 2 stabbing points are necessary.

\bibliographystyle{plainurl}
\bibliography{../doutorado}

\begin{thebibliography}{10}

\bibitem{ajtai16sortingnoise}
M.~Ajtai, V.~Feldman, A.~Hassidim, and J.~Nelson.
\newblock Sorting and selection with imprecise comparisons.
\newblock {\em ACM Transactions on Algorithms}, 12(2):19:1--19:19, 2016.
\newblock \href {https://doi.org/10.1145/2701427} {\path{doi:10.1145/2701427}}.

\bibitem{arantes18schedulingqueries}
L.~Arantes, E.~Bampis, A.~V. Kononov, M.~Letsios, G.~Lucarelli, and P.~Sens.
\newblock Scheduling under uncertainty: A query-based approach.
\newblock In {\em IJCAI 2018: 27th International Joint Conference on Artificial
  Intelligence}, pages 4646--4652, 2018.
\newblock \href {https://doi.org/10.24963/ijcai.2018/646}
  {\path{doi:10.24963/ijcai.2018/646}}.

\bibitem{baryehuda04localratio}
R.~Bar-Yehuda, K.~Bendel, A.~Freund, and D.~Rawitz.
\newblock Local ratio: A unified framework for approximation algorithms. {I}n
  {M}emoriam: {S}himon {E}ven 1935--2004.
\newblock {\em ACM Computing Surveys}, 36(4):422--463, 2004.
\newblock \href {https://doi.org/10.1145/1041680.1041683}
  {\path{doi:10.1145/1041680.1041683}}.

\bibitem{baryehuda81vc}
R.~Bar-Yehuda and S.~Even.
\newblock A linear-time approximation algorithm for the weighted vertex cover
  problem.
\newblock {\em Journal of Algorithms}, 2(2):198--203, 1981.
\newblock \href {https://doi.org/10.1016/0196-6774(81)90020-1}
  {\path{doi:10.1016/0196-6774(81)90020-1}}.

\bibitem{beyer07robsurvey}
H.-G. Beyer and B.~Sendhoff.
\newblock Robust optimization -- a comprehensive survey.
\newblock {\em Computer Methods in Applied Mechanics and Engineering},
  196(33-34):3190--3218, 2007.
\newblock \href {https://doi.org/10.1016/j.cma.2007.03.003}
  {\path{doi:10.1016/j.cma.2007.03.003}}.

\bibitem{birge11stochastic}
J.~R. Birge and F.~Louveaux.
\newblock {\em Introduction to Stochastic Programming}.
\newblock Springer Series in Operations Research and Financial Engineering.
  Springer, 2011.

\bibitem{borodin98online_alg}
A.~Borodin and R.~El-Yaniv.
\newblock {\em Online Computation and Competitive Analysis}.
\newblock Cambridge University Press, 1998.

\bibitem{boyar17advice}
J.~Boyar, L.~M. Favrholdt, C.~Kudahl, K.~S. Larsen, and J.~W. Mikkelsen.
\newblock Online algorithms with advice: A survey.
\newblock {\em ACM Computing Surveys}, 50(2), 2017.
\newblock \href {https://doi.org/10.1145/3056461} {\path{doi:10.1145/3056461}}.

\bibitem{bruce05uncertainty}
R.~Bruce, M.~Hoffmann, D.~Krizanc, and R.~Raman.
\newblock Efficient update strategies for geometric computing with uncertainty.
\newblock {\em Theory of Computing Systems}, 38(4):411--423, 2005.
\newblock \href {https://doi.org/10.1007/s00224-004-1180-4}
  {\path{doi:10.1007/s00224-004-1180-4}}.

\bibitem{calamoneri14tt}
T.~Calamoneri and B.~Sinaimeri.
\newblock Relating threshold tolerance graphs to other graph classes.
\newblock In {\em ICTCS 2014: 15th Italian Conference on Theoretical Computer
  Science}, pages 73--79, 2014.
\newblock URL: \url{http://ceur-ws.org/Vol-1231/long5.pdf}.

\bibitem{charalambous13uncertainty}
G.~Charalambous and M.~Hoffmann.
\newblock Verification problem of maximal points under uncertainty.
\newblock In T.~Lecroq and L.~Mouchard, editors, {\em IWOCA 2013: 24th
  International Workshop on Combinatorial Algorithms}, volume 8288 of {\em
  Lecture Notes in Computer Science}, pages 94--105. Springer Berlin
  Heidelberg, 2013.
\newblock \href {https://doi.org/10.1007/978-3-642-45278-9_9}
  {\path{doi:10.1007/978-3-642-45278-9_9}}.

\bibitem{durr2020scheduling}
C.~D{\"{u}}rr, T.~Erlebach, N.~Megow, and J.~Mei{\ss}ner.
\newblock An adversarial model for scheduling with testing.
\newblock {\em Algorithmica}, 82:3630--3675, 2020.
\newblock \href {https://doi.org/10.1007/s00453-020-00742-2}
  {\path{doi:10.1007/s00453-020-00742-2}}.

\bibitem{erlebach14mstverification}
T.~Erlebach and M.~Hoffmann.
\newblock Minimum spanning tree verification under uncertainty.
\newblock In D.~Kratsch and I.~Todinca, editors, {\em WG 2014: International
  Workshop on Graph-Theoretic Concepts in Computer Science}, volume 8747 of
  {\em Lecture Notes in Computer Science}, pages 164--175. Springer Berlin
  Heidelberg, 2014.
\newblock \href {https://doi.org/10.1007/978-3-319-12340-0_14}
  {\path{doi:10.1007/978-3-319-12340-0_14}}.

\bibitem{erlebach15querysurvey}
T.~Erlebach and M.~Hoffmann.
\newblock Query-competitive algorithms for computing with uncertainty.
\newblock {\em Bulletin of EATCS}, 116:22--39, 2015.
\newblock URL:
  \url{http://bulletin.eatcs.org/index.php/beatcs/article/view/335}.

\bibitem{erlebach16cheapestset}
T.~Erlebach, M.~Hoffmann, and F.~Kammer.
\newblock Query-competitive algorithms for cheapest set problems under
  uncertainty.
\newblock {\em Theoretical Computer Science}, 613:51--64, 2016.
\newblock \href {https://doi.org/10.1016/j.tcs.2015.11.025}
  {\path{doi:10.1016/j.tcs.2015.11.025}}.

\bibitem{erlebach08steiner_uncertainty}
T.~Erlebach, M.~Hoffmann, D.~Krizanc, M.~Mihal'{\'a}k, and R.~Raman.
\newblock Computing minimum spanning trees with uncertainty.
\newblock In {\em STACS'08: 25th International Symposium on Theoretical Aspects
  of Computer Science}, pages 277--288, 2008.
\newblock URL: \url{https://arxiv.org/abs/0802.2855}.

\bibitem{feder07pathsqueires}
T.~Feder, R.~Motwani, L.~O'Callaghan, C.~Olston, and R.~Panigrahy.
\newblock Computing shortest paths with uncertainty.
\newblock {\em Journal of Algorithms}, 62(1):1--18, 2007.
\newblock \href {https://doi.org/10.1016/j.jalgor.2004.07.005}
  {\path{doi:10.1016/j.jalgor.2004.07.005}}.

\bibitem{feder03medianqueries}
T.~Feder, R.~Motwani, R.~Panigrahy, C.~Olston, and J.~Widom.
\newblock Computing the median with uncertainty.
\newblock {\em SIAM Journal on Computing}, 32(2):538--547, 2003.
\newblock \href {https://doi.org/10.1137/S0097539701395668}
  {\path{doi:10.1137/S0097539701395668}}.

\bibitem{focke20mstexp}
J.~Focke, N.~Megow, and J.~Mei{\ss}ner.
\newblock Minimum spanning tree under explorable uncertainty in theory and
  experiments.
\newblock {\em ACM Journal of Experimental Algorithmics}, 2020.
\newblock \href {https://doi.org/10.1145/3422371} {\path{doi:10.1145/3422371}}.

\bibitem{gavril72chordal}
F.~Gavril.
\newblock Algorithms for minimum coloring, maximum clique, minimum covering by
  cliques, and maximum independent set of a chordal graph.
\newblock {\em SIAM Journal on Computing}, 1(2):180--187, 1972.
\newblock \href {https://doi.org/10.1137/0201013} {\path{doi:10.1137/0201013}}.

\bibitem{goerigk15knapsackqueries}
M.~Goerigk, M.~Gupta, J.~Ide, A.~Sch\"{o}bel, and S.~Sen.
\newblock The robust knapsack problem with queries.
\newblock {\em Computers \& Operations Research}, 55:12--22, 2015.
\newblock \href {https://doi.org/10.1016/j.cor.2014.09.010}
  {\path{doi:10.1016/j.cor.2014.09.010}}.

\bibitem{golovach17tt}
P.~A. Golovach, P.~Heggernes, N.~Lindzey, R.~M. McConnell, V.~F. dos Santos,
  J.~P. Spinrad, and J.~L. Szwarcfiter.
\newblock On recognition of threshold tolerance graphs and their complements.
\newblock {\em Discrete Applied Mathematics}, 216(1):171--180, 2017.
\newblock \href {https://doi.org/10.1016/j.dam.2015.01.034}
  {\path{doi:10.1016/j.dam.2015.01.034}}.

\bibitem{golumbic84tolerance}
M.~C. Golumbic, C.~L. Monma, and W.~T.~Trotter Jr.
\newblock Tolerance graphs.
\newblock {\em Discrete Applied Mathematics}, 9(2):157--170, 1984.
\newblock \href {https://doi.org/10.1016/0166-218X(84)90016-7}
  {\path{doi:10.1016/0166-218X(84)90016-7}}.

\bibitem{golumbic14cott}
M.~C. Golumbic, N.~L. Weingarten, and V.~Limouzy.
\newblock Co-{TT} graphs and a characterization of split co-{TT} graphs.
\newblock {\em Discrete Applied Mathematics}, 165:168--174, 2014.
\newblock \href {https://doi.org/10.1016/j.dam.2012.11.014}
  {\path{doi:10.1016/j.dam.2012.11.014}}.

\bibitem{gupta16queryselection}
M.~Gupta, Y.~Sabharwal, and S.~Sen.
\newblock The update complexity of selection and related problems.
\newblock {\em Theory of Computing Systems}, 59(1):112--132, 2016.
\newblock \href {https://doi.org/10.1007/s00224-015-9664-y}
  {\path{doi:10.1007/s00224-015-9664-y}}.

\bibitem{hoare62quicksort}
C.~A.~R. Hoare.
\newblock Quicksort.
\newblock {\em The Computer Journal}, 5(1):10--16, 1962.
\newblock \href {https://doi.org/10.1093/comjnl/5.1.10}
  {\path{doi:10.1093/comjnl/5.1.10}}.

\bibitem{kahan91queries}
S.~Kahan.
\newblock A model for data in motion.
\newblock In {\em STOC'91: 23rd Annual ACM Symposium on Theory of Computing},
  pages 265--277, 1991.
\newblock \href {https://doi.org/10.1145/103418.103449}
  {\path{doi:10.1145/103418.103449}}.

\bibitem{khanna01queries}
S.~Khanna and W.-C. Tan.
\newblock On computing functions with uncertainty.
\newblock In {\em PODS'01: 20th ACM SIGMOD-SIGACT-SIGART Symposium on
  Principles of Database Systems}, pages 171--182, 2001.
\newblock \href {https://doi.org/10.1145/375551.375577}
  {\path{doi:10.1145/375551.375577}}.

\bibitem{lekkeikerker62interval}
C.~Lekkerkerker and J.~Boland.
\newblock Representation of a finite graph by a set of intervals on the real
  line.
\newblock {\em Fundamenta Mathematicae}, 51(1):45--64, 1962.
\newblock URL: \url{https://eudml.org/doc/213681}.

\bibitem{yamaguchi20ipqueries}
T.~Maehara and Y.~Yamaguchi.
\newblock Stochastic packing integer programs with few queries.
\newblock {\em Mathematical Programming}, 182:141--174, 2020.
\newblock \href {https://doi.org/10.1007/s10107-019-01388-x}
  {\path{doi:10.1007/s10107-019-01388-x}}.

\bibitem{megow17mst}
N.~Megow, J.~Mei{\ss}ner, and M.~Skutella.
\newblock Randomization helps computing a minimum spanning tree under
  uncertainty.
\newblock {\em SIAM Journal on Computing}, 46(4):1217--1240, 2017.
\newblock \href {https://doi.org/10.1137/16M1088375}
  {\path{doi:10.1137/16M1088375}}.

\bibitem{monma88ttolerance}
C.~L. Monma, B.~Reed, and W.T. Trotter~Jr.
\newblock Threshold tolerance graphs.
\newblock {\em Journal of Graph Theory}, 12(3):343--362, 1988.
\newblock \href {https://doi.org/10.1002/jgt.3190120307}
  {\path{doi:10.1002/jgt.3190120307}}.

\bibitem{olston2000queries}
C.~Olston and J.~Widom.
\newblock Offering a precision-performance tradeoff for aggregation queries
  over replicated data.
\newblock In {\em VLDB 2000: 26th International Conference on Very Large Data
  Bases}, pages 144--155, 2000.
\newblock URL: \url{http://ilpubs.stanford.edu:8090/437/}.

\bibitem{ryzhov12lpqueries}
I.~O. Ryzhov and W.~B. Powell.
\newblock Information collection for linear programs with uncertain objective
  coefficients.
\newblock {\em SIAM Journal on Optimization}, 22(4):1344--1368, 2012.
\newblock \href {https://doi.org/10.1137/12086279X}
  {\path{doi:10.1137/12086279X}}.

\bibitem{salah15sorting}
A.~Salah, K.~Li, and K.~Li.
\newblock Lazy-{M}erge: A novel implementation for indexed parallel $k$-way
  in-place merging.
\newblock {\em IEEE Transactions on Parallel and Distributed Systems},
  27(7):2049--2061, 2015.
\newblock \href {https://doi.org/10.1109/TPDS.2015.2475763}
  {\path{doi:10.1109/TPDS.2015.2475763}}.

\bibitem{shannon48com}
C.~E. Shannon.
\newblock A mathematical theory of communication.
\newblock {\em The Bell System Technical Journal}, 27(3):379--423, 1948.
\newblock \href {https://doi.org/10.1002/j.1538-7305.1948.tb01338.x}
  {\path{doi:10.1002/j.1538-7305.1948.tb01338.x}}.

\end{thebibliography}

\end{document}